\newcommand{\nocontentsline}[3]{}
\newcommand{\tocless}[2]{\bgroup\let\addcontentsline=\nocontentsline#1{#2}\egroup}
\begin{document}
\onehalfspacing
	
\author{Fabian Hahner,$^\flat$ Surya Raghavendran,$^{\sharp,\triangle}$\\[0.2ex] Ingmar Saberi,$^\natural$ Brian R.~Williams$^\circ$}

\email{fhahner@uw.edu}
\email{surya.raghavendran@yale.edu}
\email{i.saberi@physik.uni-muenchen.de}
\email{bwill22@bu.edu}

\address{{$^\flat$}University of Washington, Department of Physics \\ 3910 15th Ave NE, Seattle, WA 98195,  U.S.A}

\address{{$^\sharp$}%
    Yale University, Department of Mathematics \\ 
    P.O.~Box 208283,
New Haven, CT 06520, U.S.A.}

\address{{$^\natural$}Ludwig-Maximilians-Universit\"at M\"unchen \\ Theresienstra\ss{}e 37, 80333 M\"unchen, Deutschland}

\address{{$^\circ$}%
Boston University, Department of Mathematics and Statistics\\
665 Commonwealth Ave,
Boston, MA 02215, U.S.A.}

\address{{$^\triangle$}%
    University of Edinburgh, School of Mathematics \\
    Peter Guthrie Tait Road, King's Buildings, 
Edinburgh, EH9 3FD, U.K.}

\title{Local superconformal algebras}

\begin{abstract}
Given a supermanifold equipped with an odd distribution of maximal dimension and constant symbol, we construct the formal moduli problem of deformations of the distribution. 
This moduli problem is described by a local super dg Lie algebra that provides both a resolution of the structure-preserving vector fields on superspace and a derived enhancement of superconformal symmetry. 
Applying our construction in standard physical examples returns the conformal supergravity multiplet in every known example, in any dimension and with any amount of supersymmetry---whether or not a superconformal algebra exists.
We discuss new examples related to twisted supergravity, higher Virasoro algebras, and exceptional super Lie algebras. 
The compatibility of our techniques with twisting also leads to a computation of every twist of the stress tensor multiplet of a superconformal theory, including universal operator product expansions.
Our approach uses a derived model for the space of functions constant along the distribution, which is applicable even when the distribution is non-involutive;
we construct other natural multiplets, such as K\"ahler differentials, that appear naturally through this lens on superspace geometry.
\end{abstract}
	
	\maketitle
	\thispagestyle{empty}

	\setcounter{tocdepth}{1}
	
        \newpage
	\tableofcontents
	
	
	\setlength{\parskip}{7pt}

        \centerline{\emph{In memory of Yuri Manin}}
	
\section{Introduction}

\subsection{Deformations of superconformal structures}
It is well-known that the infinitesimal conformal transformations of flat space in dimensions $d \geq 3$ behave in drastically different fashion than those in dimension two. 
The Lie algebra of conformal Killing vector fields is finite-dimensional, and is isomorphic to $\lie{so}(d+1,1)$. 
  This is significantly less rich than the infinite-dimensional algebra of conformal transformations in dimension two, where, after complexification, any holomorphic or antiholomorphic vector field is a conformal Killing vector field.  
The exceptional behavior of the two-dimensional case is related to numerous well-known phenomena and techniques in the physics of two-dimensional field theories, which are believed not to generalize to larger-dimensional examples.

Recently, Kapranov has shown that the dichotomy between the $d=2$ and $d \geq 3$ cases can be explained using the language of derived geometry \cite{KapranovConf}.
Indeed, on any conformal manifold~$M$, there exists a \textit{dg} Lie algebra $\cL_\conf(M)$ which can be taken to be concentrated in degrees zero and one.
In degree zero, the cohomology of this Lie algebra is exactly the Lie algebra of conformal Killing vector fields on~$M$.
In degree one, the cohomology of this Lie algebra is (locally) infinite-dimensional for $d \geq 3$. 
Through the formalism of derived geometry, degree one elements define first-order (infinitesimal) deformations of the chosen conformal structure on~$M$.
Thus, while the space of automorphisms of a conformal manifold is finite-dimensional for $d \geq 3$, the space of deformations is infinite-dimensional.

\emph{Superconformal algebras} are super Lie algebras that contain both infinitesimal conformal transformations and supersymmetries.
They enlarge the usual supersymmetry algebras of physical interest, which combine the usual symmetries of flat spacetime with spinorial odd translations, by adding in special conformal transformations and their supersymmetric partners, together with scale transformations and R-symmetries. The latter are outer automorphisms of the supersymmetry algebra that fix all elements of even degree. Superconformal field theories, which exhibit a symmetry by some superconformal algebra, are central objects of study in the modern literature.
(The reader unfamiliar with the literature on supersymmetry and geometric approaches to it is directed to~\cite{ FayetFerrara, Sohnius, Superspace, CdAF}, just for example, and to references therein. We cannot hope to give complete references here.)

Using a characterization based on the properties sketched above, superconformal algebras were classified by Nahm~\cite{Nahm} and also by Shnider~\cite{Shnider}, and in dimension two in work of Kac and van de Leur~\cite{KacSusy}. Their behavior in different dimensions is seemingly even more erratic than that of conformal transformations. They are infinite-dimensional if $d=2$, but for $d\geq 3$ they are finite-dimensional simple super Lie algebras~\cite{Kac}. They exist in infinite families in dimensions three, four, and six---
but in dimension five, there is only one exceptional example. For $d>6$, no superconformal algebra exists at all.

In this paper, we study superconformal algebras from a derived perspective. 
We introduce \emph{local superconformal algebras}, which are super dg Lie algebras $\Conf(\fn)$ (Definition~\ref{dfn:Conf(n)}) defined on any space equipped with a \emph{superconformal structure}. 
We recall the
the notion of such a structure, and then come up with an appropriate formalism for studying its geometry and its symmetries. As in Kapranov's approach---indeed, as is always the case in derived deformation theory---symmetries, deformations, and potential (higher) obstructions fit together into the single controlling object $\Conf(\fn)$. 

Interestingly (and importantly), a superconformal structure has \emph{a priori} nothing to do with an equivalence class of metrics. It is given by an odd subbundle of the tangent bundle of a supermanifold satisfying certain conditions (Definition~\ref{def:superspace}). 
The notion is not new; to the best of our knowledge, the definition in this form (in four spacetime dimensions) was first given by Manin exactly forty years ago in~\cite[chapter 5, \S7]{Manin}, abstracting work of Ogievetsky and Sokatchev~\cite{OS} that was reformulated and extended in Schwarz' important paper~\cite{SchwarzSugra}.
Such a geometric datum is present in any superspace approach to supersymmetric field theory. It is normally given by specifying formulas for the ``supercovariant derivatives'' $D_a$, or equivalently for the odd vector fields that implement supersymmetry transformations. For this reason, and for brevity, we refer to a supermanifold equipped with a superconformal structure as a \emph{superspace}.  
This paper constructs the dg Lie algebra controlling the formal moduli problem of deformations of superspace. 

We emphasize that \emph{there is no restriction whatsoever} on the dimension, or on the type of supersymmetry transformations, for which a superconformal structure can be defined. Our approach generalizes to superconformal structures based on any two-step nilpotent super Lie algebra, and in principle to any sort of geometry defined by a choice of tangential distribution of constant symbol. As such, our perspective restores a pleasing uniformity to the story.

There are three immediate questions the reader may have in mind:
\begin{itemize}[label={---}]
\item How does the local superconformal algebra relate to Nahm's classification?
\item How do superconformal structures relate to normal conformal geometry?
\item Are deformations of superconformal structures related to conformal supergravity?
\end{itemize}
We summarize our answer to each in turn.

\numpar[intro:nahm][The relation to Nahm's classification]

The symmetries of a superconformal structure are captured by the zeroth cohomology of $\Conf(\fn)$. Evaluating on the flat superspace, this recovers the known superconformal algebras (realized in terms of supervector fields) whenever they exist and the super Poincar\'e algebra, including R-symmetry, together with scale transformations in all other cases.

There is in fact a general technique for constructing the (super) Lie algebra of symmetries for geometries of this type, known as \emph{Tanaka prolongation}. The maximal prolongation of the supertranslation algebra $\fn$ recovers the symmetries of the flat superspace based on~$N$. In their beautiful paper~\cite{AltomaniSanti},
Altomani and Santi apply this technique to (physical) supertranslation algebras, recovering the classification of superconformal algebras in a new fashion. Their work proves that the maximal prolongation of such a supertranslation algebra is generically just its extension by automorphisms $\fg_0$. Superconformal algebras are ``exceptional prolongations,'' and exist in each case due to accidental isomorphisms.

By construction, $H^0(\Conf(\fn))$ consists of the symmetries of the superspace on which we evaluate it. On flat superspace, we recover the maximal prolongation of~$\fn$. Thus the zeroth cohomology of our local superconformal algebra on flat space \emph{is} the standard superconformal algebra wherever it exists.

We can apply base change to $H^0(\Conf(\fn))$ to obtain smooth super vector fields of the conventional sort. Doing this, our results recover geometric representations of superconformal algebras using conformal Killing supervector fields. (For a recent discussion, we refer to~\cite{ConformalSuperKilling} and also to~\cite{HoweLindstrom}; the original treatments go back to Sohnius in four dimensions~\cite{SohniusSUCO} and Park in dimensions three~\cite{Park3d} and six~\cite{Park6d}.)

\numpar[intro][Superconformal structures and conformal structures]
One might well ask what led Manin to call such a datum a ``superconformal structure.''
The answer is that a choice of superconformal structure simultaneously effects
a reduction of the structure group of the underlying manifold to the group $G_0$ arising from automorphisms of the supersymmetry algebra.
The reduction is along the map $\rho_2: G_0 \to GL(d)$ given by the action in even degree.
For standard supersymmetry algebras, the image of~$\rho_2$ is isomorphic to $SO(d) \times \R_+$, so that this reduction of the structure group determines (in particular) a conformal structure. 
We make this intuition precise in Theorem~\ref{thm:s/conf} below, which
constructs a comparison map from the even subalgebra $\Conf(\fn)_+$ to a dg Lie algebra $\cL_{\fg_0}$
that describes the moduli problem of manifolds equipped with a reduction of the infinitesimal structure group to~$\fg_0$.
In the standard case, we can forget the kernel of~$\rho_2$ (which is related to a choice of principal R-symmetry bundle), obtaining a comparison map
\deq[eq:phi]{
	\phi \colon \Conf(\fn)_+ \to \cL_\conf
}
between even-parity superconformal structures and conformal structures. The name is thus sensible in the context of the standard examples. We comment further on this below.

\numpar[p:confsugra][Conformal supergravity] 
One of the primary motivating problems in supergeometry and supergravity has been to understand supergravity multiplets as related to natural moduli problems of geometric structures on superspace.
To sketch the analogy: The corresponding insight in the theory of general relativity is that the physical degrees of freedom describing the gravitational field correspond precisely to deformations of the pseudo-Riemannian metric on spacetime. (The Einstein--Hilbert action then specifies \emph{dynamics} for these degrees of freedom, but this can be thought of as a second step, after the kinematic nature of the degrees of freedom has been understood.)

Since we construct the moduli problem of deformations of superconformal structures, it is natural to guess that the output of our computation should be related to the conformal supergravity multiplets that are studied in the literature. Indeed, this is the case. In \S\ref{sec: examples}, we exhibit the component-field models of $\Conf(\fn)$ for all important physical examples. In each case, we find precise agreement with the conformal supergravity multiplets described in the literature. We thus take a substantial step towards giving a uniform, example-independent geometric formulation of supergravity; at the level of kinematical conformal supergravity, this program is complete.

We emphasize that our method for constructing these multiplets has three distinct advantages. Firstly, it is \emph{algorithmic}. The only input is the supertranslation algebra, and many computations are simple; the component fields can typically be identified in a fraction of a second using open-source commutative algebra software such as \emph{Macaulay2}~\cite{M2}. Secondly, it is \emph{geometric}. Our construction is designed to describe the moduli problem of deformations of a superconformal structure, so this interpretation is automatic. Thirdly, it is \emph{complete:} we recover not only the multiplet, but a full $L_\infty$ structure specifying the complete structure of all linear and nonlinear gauge transformations, together with whatever ``differential constraints'' are required by the structure of the supersymmetry representation. 

We do \emph{not} go on to discuss dynamical theories of conformal supergravity. The output of our construction should (in general) be thought of as an \emph{off-shell} datum. Among the examples we study, only one notable instance---eleven-dimensional supergravity---leads to the appearance of an on-shell multiplet.\footnote{It is an amusing rule of thumb that the multiplet associated to the structure sheaf of spacetime becomes on-shell when sixteen supercharges are present, whereas the multiplet associated to the tangent sheaf of superspace does so only with thirty-two supercharges.}
Our results could, of course, be profitably be used for these purposes, and we look forward to pursuing this further in future work.

\subsection{Superspace geometry and  ``pure spinor'' constructions}
Our construction of local (derived) superconformal algebras 
uses a perspective on superspace geometry which arises out of recent work that revisits and geometrizes the pure spinor formalism.\footnote{``Pure spinor'' is somewhat misleading terminology, but is standard. As we will see, the same is true of ``conformal supergravity,'' and to some extent of ``superconformal structure.''} Pure spinor techniques go back nearly to the time of Manin's definition; the idea was first explored in the literature by Nilsson~\cite{Nilsson} and Howe~\cite{HowePS1,HowePS2}, and is perhaps most well-known in connection with Berkovits' approach to the superstring~\cite{Berkovits}. Here, we work only in the context of field theories, and thus are closest to the work of Cederwall and collaborators~\cite[and references therein]{Cederwall}.

The connection of the pure spinor formalism to the geometry of superspace has recently become clear.
Building on~\cite{spinortwist,MSJI} and especially~\cite{CY2}, we take the perspective that \emph{the most appropriate structure sheaf for superspace is not the smooth functions}. 
Taking smooth functions as the structure sheaf, as is often done implicitly in the literature, ignores the geometric datum provided by the distribution. 
It is analogous to studying the sheaf of all smooth functions on a complex manifold. Clearly, it is more appropriate to study the sheaf of \emph{holomorphic} functions, which are those functions that are annihilated by sections of the distribution (in this analogy, the antiholomorphic tangent bundle $\overline{\T} \subset \T_\C$). 

Since typical superconformal structures (in contrast to complex structures) are bracket-generating, the only functions that are constant along the distribution are constants, and one would naively say that nothing can come of the idea.
A common approach in dimension four is to identify an involutive subdistribution, and to impose invariance only there. The relevant subdistributions are defined by chiral spin bundles, and the procedure outputs ``chiral superfields.'' (For our perspective on this construction, see~\S\ref{chiral}.) 

In general, though, no appropriate involutive subdistributions exist. The literature is replete with various other approaches to dealing with these ``torsion constraints'' in particular examples,  and their central importance has been widely recognized.
Our approach is to simply take the \emph{derived} invariants of sections of the distribution. This is analogous, in complex geometry, to constructing the Dolbeault complex, rather than imposing holomorphy strictly.
        
In~\cite{CY2}, building on the beautiful construction of Dolbeault cohomology for almost-complex manifolds given by Cirici and Wilson in~\cite{ACDolbeault}, a derived model for functions on superspace was constructed in this manner, using the filtration of the de Rham forms defined by the superconformal structure. 
The analogue of the Dolbeault complex
is nothing other than the scalar pure spinor superfield, called the ``canonical supermultiplet'' in~\cite{MSJI} and the ``tautological filtered cdgsa'' in~\cite{spinortwist}. In general, this complex has cohomology in nonzero degrees even on flat space. Its cohomology recovers the physical fields of various important multiplets. (See Table~\ref{tab:O} in~\S\ref{mu} below for some examples.) We regard it as the structure sheaf of superspace.

        On general grounds, the moduli problem describing formal structure-preserving deformations of a ringed space is controlled by the tangent sheaf, which admits a local description as the derivations of the structure sheaf.\footnote{In general, one should in fact work with the tangent complex. We reserve ``derived superconformal algebra'' for this object; see~\S\ref{p:derived}.}
        This moduli problem is the local superconformal algebra. We will
        study the resulting supermultiplet in detail for superspaces of physical interest, matching it in each instance to the full conformal supergravity multiplet.
But our construction goes beyond these examples. We study further instances, related to twisted theories and holomorphic field theories.
        Fleshing out our geometric perspective, we also define and study other natural supermultiplets associated to the geometry of superspace, such as the sheaves of K\"ahler differentials and differential operators.

Our approach connects cleanly to the mathematical theory of \emph{non-holonomic $G$-structures} \cite[for example]{Tanaka1,Tanaka2,CapSlovak,Zelenko,AD}. 
The relationship of this body of work to supersymmetry has been appreciated and developed by numerous groups, notably in important work of Santi and collaborators~\cite{SantiSpiro1,SantiSpiro2,SantiFOF1,SantiFOF2,KST}. For more related work on ideas from Cartan geometry in the context of gravity, see~\cite{CartanSugra} for an excellent recent survey with comprehensive references to the literature, including the work of the Torino school~\cite{CdAF}, or \cite{Wise} for a very readable introduction in the bosonic case.

In comparison with these works, the novel ingredients in our setting come from derived geometry. 
In the broadest terms, we view our program as aimed at constructing and understanding the derived geometry of non-holonomic $G$-structures.
In principle, our methods extend beyond superspaces to general Tanaka structures on (super)-manifolds; for an example of some closely related constructions, see~\cite{BGGcpx}. 
We plan to elaborate the connection between pure spinor techniques and complexes of Bernstein-Gelfand-Gelfand type in future work.

        \subsection{Physical interpretation and applications}
        We give a few more informal remarks to contextualize our results, aiming primarily at physics-minded readers. The reader should feel free to skip this section entirely, or to refer back to it for commentary after perusing the main constructions, according to taste.

\numpar[p:confstr][What about standard supergravity?]
It is natural to ask why only \emph{conformal} supergravity multiplets appear in this context, rather than (for example) the full supergravity multiplet. This occurs because every supertranslation algebra is consistently $\Z$-graded, so that scale transformations always appear in the structure group of a $G$-structure that arises from a superconformal structure. This is perhaps the best possible justification for the name; a better name would have perhaps have emphasized local scale invariance. No matter the name, the important object is the moduli space of deformations of a geometric structure defined by a tangential distribution, and the relation to conformal structures is an output rather than an input.

It is natural to ask how to extend these results to give formulations of more standard supergravity multiplets or theories. The essential clue is provided already in~\cite{Manin} and in~\cite{Deligne}, where it is pointed out that one formulation of a ``super Riemannian structure'' consists of a superconformal structure together with an additional datum given (roughly) by some chosen section of the Berezinian.
The additional datum plays the role of the choice of a specific metric within a conformal class.

One then expects that the normal supergravity multiplet stands in the same relation to the \emph{divergence-free} vector fields on superspace as the conformal supergravity multiplet does to all vector fields on superspace. 
From our perspective, this is one intuitive reason for the ubiquity of techniques involving conformal supergravity and ``compensator fields'' in the literature.
It should be relatively straightforward to give a uniform, example-independent construction of the conformal compensator, extending $\Conf(\fn)$ by the sheaf of sections of the Berezinian. Doing so would complete a uniform, example-independent geometric approach to supergravity at the kinematic level.
We look forward to revisiting these ideas and intuitions in more detail in future work.

We remark that, at the level of BV theories, we \emph{can} straightforwardly understand a mechanism analogous to the conformal compensator in our formalism in the cleanest example, eleven-dimensional supergravity; see the discussion in~\S\ref{p: 11d}.

\numpar[p:moduliprob][Superconformal field theories]
We make a few quick remarks on the relation of our family of super dg Lie algebras to superconformal field theories. 
Our viewpoint is informed by the perspective of Costello and Gwilliam, as developed in~\cite{CG1,CG2}, and especially by their emphasis on the connections between field theory, factorization algebras, and {derived deformation theory}. 

Implicitly, it is common practice to think of the action of a symmetry on a theory in terms of coupling to a corresponding family of backgrounds. This 
maneuver typically goes via some version of Noether's theorem.
For example, the current $J$ is constructed from the action of a global symmetry on the theory---but also specifies the coupling to a background connection, via the coupling term $\int A \wedge J$.

Within derived geometry, the symmetries and the backgrounds are encoded in terms of a single formal moduli problem, described by a local Lie algebra $\cL$. 
Given such a local Lie algebra, Costello and Gwilliam define an associated \emph{current algebra}~\cite{CG2}. This is the
factorization algebra
\beqn
\Cur(\cL) \define C_\bu(\cL_c) .
\eeqn
More explicitly, this factorization algebra assigns to an open set $U \subset M$ the cochain complex
\beqn
C_\bu (\cL_c(U))
\eeqn
where $\cL_c(U)$ denotes the Lie algebra of sections of $\cL$ whose support is compact in $U$, and where $C_\bu(-)$ is the Chevalley--Eilenberg functor which computes Lie algebra homology.

For newcomers, it is intuitively correct to imagine a factorization algebra as a structure that coherently and rigorously encodes 
the structure of the operator product expansion. The essential examples come from the \emph{observables} of a perturbative field theory $\cT$ and from the current algebra construction mentioned above. 

Costello and Gwilliam prove a far-reaching generalization of Noether's theorem at the level of factorization algebras. Given a theory $\cT$ with a symmetry by~$\cL$, their result constructs a map 
\deq{
\Cur(\widehat{\cL}) \to \Obs(\cT)
}
of factorization algebras. The notation $\widehat{\cL}$ reflects the presence of a canonically determined local central extension, reflecting the anomaly of the symmetry and measured by a class in $H^1_\loc(\cL)$. (The cohomological approach to anomalies is familiar in the physics literature, and dates back at least to~\cite{BPT-ABJ}.)

At the level of coupling to backgrounds, a supersymmetric theory is a theory that can be defined on the class of superspaces of a particular type, possibly equipped with some extra data (such as the section of the Berezinian mentioned above). A supersymmetric theory is \emph{superconformal} when it depends only on a superconformal structure, and is thus naturally defined on the class of all superspaces \emph{without} additional structure. The factorization Noether theorem then equips the observables of a superconformal field theory with a map from the factorization algebra of currents of $\Conf(\fn)$. The image of this map is the stress tensor multiplet of the theory---together with all universal OPEs, conservation laws, and improvement transformations, which are encoded in the factorization algebra $\Cur(\Conf(\fn))$.

As such, one can view our results dually as a uniform and example-independent construction of superconformal stress tensor multiplets at the level of factorization algebras. We have this construction in mind in~\S\ref{sec: twisted}, where we are interested in computing the twists of stress tensor multiplets. Our results there show the connection between four-dimensional $\N=2$ superconformal theories and higher Virasoro algebras cleanly and directly, and provide a way to understand the full derived symmetry algebra appearing in any twist of any superconformal theory. 

There is, of course, work to be done understanding anomalies using our technology. Such work would recover, for example, the central-charge identities of~\cite{BeemEtAl}---as lifted to the higher Virasoro algebra in~\cite{SCA}---by constructing an appropriate comparison map between $\Conf(\fn)$ and its twist $\Conf(\fn_Q)$. Some work of this sort in the context of six-dimensional $\N=(2,0)$ supersymmetry will appear shortly~\cite{SW-E36}.

\subsection{Structural overview}
We briefly sketch the organization of the paper, pointing out the essential results of each section.

In~\S\ref{sec: parabolic}, we review and develop 
our perspective on superspace geometry, constructing the structure sheaf $A^\bu$ of a superspace and the sheaves of holomorphic forms on it (\S\ref{geom}), 
 the corresponding local superconformal algebra $\Conf(\fn)$ (\S\ref{der-A}), and the sheaf of K\"ahler differentials (\S\ref{p:Kaehler})). We describe the sheaves on pure spinor space that produce each of these multiplets, in particular proving that $\Conf(\fn)$ is the multiplet associated by the pure spinor formalism to the sheaf of surviving translations on superspace (Theorem~\ref{thm: coker}). We furthermore construct a map from the sheaf of one-forms to the sheaf of K\"ahler differentials, and comment on some subtleties in the relationship~(\S\ref{p: oneforms}). 

In~\S\ref{s:universal}, we give some reminders about comparing our models to component-field models in physics, and characterize the cohomology and the component fields of $\Conf(\fn)$ in low degrees. The main results are Proposition~\ref{prop:prolong}, an observation relating the zeroth cohomology to the maximal transitive prolongation of~$\fn$ (and thus to Nahm's list), and Theorem~\ref{thm: univ}, which---upon specializing to physical examples---implies the universal presence of smooth vector fields, local supersymmetry transformations, and $R$-symmetry transformations in degree zero, together with the vielbein (and its superpartners) in degree one.

In~\S\ref{sec: Lie}, we go on to characterize the local Lie algebra structure of this portion of $\Conf(\fn)$ explicitly. The central result here (Theorem~\ref{thm:s/conf}) constructs the comparison map from the moduli problem of superconformal structures to the moduli problem of manifolds with $G$-structure, where $G$ is the automorphism group of the supertranslation algebra. This specializes in the standard examples to give a map to the moduli problem of conformal structures (Corollary~\ref{cor:s/conf}).

We then go on to explicitly work out the standard physical examples in~\S\ref{sec: examples}.
We
fully compute the chain complex of vector bundles of the component-field model $\mu \Conf(\fn)$, working case by case and showing concretely that each agrees with the known conformal supergravity multiplets at the level of the component-field multiplet. The reader who is not specifically interested in these particular examples may skip this section, which contains no structural result; on the other hand, the reader familiar with the details of conformal supergravity multiplets may find it helpful to begin here.

Finally, in~\S\ref{sec: twisted}, we explore a few examples related to twisted theories and twisted supergravity. 
\S\ref{ssec:tw4d} treats holomorphic twists of four-dimensional theories. To clarify the relationship between our models and extensions of higher Virasoro algebras, we comment on some issues related to chiral superspace in~\S\ref{chiral}, and identify $\Conf(\fn)$ for holomorphically twisted chiral superspace with the Lie algebra of holomorphic super vector fields. 
\S\ref{ssec:tw6d} then treats holomorphic twists of six-dimensional theories; the essential example is $\N=(2,0)$ supersymmetry, for which the local superconformal algebra on flat space recovers the exceptional simple super Lie algebra $E(3|6)$.

\subsection{Acknowledgements}
We gratefully acknowledge Martin Cederwall, Richard Eager, Chris Elliott, Simon Jonsson, Simone Noja, Jakob Palmkvist, and Johannes Walcher for many conversations about, ideas in, and contributions to the program that is continued here. IAS gives particular thanks to John Huerta for conversations and inspiration, including many essential insights about superconformal structures and Tanaka prolongation, in the course of ongoing joint work on related topics, and to Owen Gwilliam for clarifying conversations both about the work presented here and about the program more generally. We further thank E. Bergshoeff, I.~Brunner, K.~Costello, and N.~Paquette for conversations,
and especially Jos\'e Figueroa-o'Farrill for correspondence.
SR thanks the Deutsche Forschungsgemeinschaft (DFG) under Germany's Excellence Strategy EXC-2094 390783311 (Excellence Cluster ORIGINS) for hospitality.
IAS thanks the LMU Center for Advanced Studies for its support during the winter semester 2023/24, where he was in residence as a Junior Researcher as this work was being completed, and further thanks Perimeter Institute for Theoretical Physics, the Deutsches Elektronen-Synchrotron, the University of British Columbia, the Wilhelm and Else Heraeus-Stiftung, the University of Idaho, the Center for Rural Livelihoods, the Max-Planck-Institut f\"ur Mathematik, and Freddy's Caf\'e in Kolding for further hospitality and support during its preparation.
BRW also thanks the LMU-CAS for its hospitality in January 2024.
This work is funded by the Deutsche Forschungsgemeinschaft (DFG, German Research Foundation) under 
Projektnummer 517493862 (Homologische Algebra der Supersymmetrie: Lokalit\"at, Unitarit\"at, Dualit\"at). 
IAS is further supported by the Free State of Bavaria. FH is supported by funds from the DOE Early Career Research Program under award DE-SC0022924.

\section{Superconformal structures and their symmetries} 
\label{sec: parabolic}

In this section, we review the geometric perspective on pure spinor superspace, as articulated in~\cite{CY2}, as well as those technical aspects of the pure spinor superfield formalism we will need. For details on our approach to the formalism, the reader is referred to~\cite{perspectives,derivedps}, and for further literature to~\cite{Cederwall} and references therein.

\subsection{Superspaces and tangential distributions}
\label{sec: superpara}
We begin by reviewing the definition of a superspace, which is a supermanifold equipped with the extra datum of a tangential distribution $D$, of maximal odd dimension, which fails to be involutive in a prescribed, locally constant manner. This datum is called a \emph{superconformal structure}~\cite{Manin,Deligne}, though it is important to emphasize that it does not
reduce to the notion of a conformal structure if the supermanifold is purely even. One main aim of this paper will be to unpack the relationship to conformal structures further by studying deformations of superconformal structures, and constructing, under certain additional assumptions, a comparison map from \emph{even} deformations of superconformal structures to deformations of conformal structures (Theorem~\ref{thm:s/conf}).

\numpar[tanaka][Regular tangential distributions]
Our definitions here follow~\cite[\S2.2]{AD2}, to which we refer the reader for details.
Let $M$ be a smooth manifold and $D \subset TM$ a distribution.
To any point $p \in M$, we can associate a weight-graded Lie algebra $\fn_p$, called the \emph{symbol algebra} at $p$ as follows. 
$D$ defines a filtration 
\deq{
0 \subset D_p =  F^1 T_p M  \subset F^2 T_p M \subset \cdots
}
on each vector space $T_p M$ by taking $F^k T_p M$ to be the span of the evaluation at $p$ of all vector fields obtained by $\leq k$-fold brackets of sections of~$D$ in a neighborhood of~$p$. The filtration stabilizes after a finite number of steps. $D$ is called \emph{bracket-generating} if the filtration is exhaustive; if it is not, we extend it to an exhaustive filtration by placing $T_p M$ at the step after the filtration stabilizes.

If we extend vectors $v,v' \in D_p$ to sections $X_v,X_{v'}$ of $D$ in some neighborhood $U$ of $p$, the restriction of the vector field $[X_v,X_{v'}]$ to the point $p$ will depend on the extensions. Thus the filtered Lie algebra structure on vector fields over~$U$ does not induce one on $T_p M$. However, passing to the associated graded defines a grade-preserving Lie bracket on $\fn_p = \Gr T_p M$. This is the symbol algebra.

The notion of the symbol generalizes straightforwardly to the case of supermanifolds, where an additional $\Z/2\Z$ grading by parity is present. In the examples we will later be interested in, the parity is determined by the weight grading modulo two, but this is not essential.

Fix a finite-dimensional (super) Lie algebra $\fn$ that is positively weight-graded, and let $G_0$ denote the group of automorphisms of~$\fn$ as a weight-graded Lie algebra. A distribution is called \emph{regular of type $\fn$} if the symbol algebra $\fn_p$ is isomorphic to~$\fn$ for all $p\in M$. A choice of isomorphism
\deq{
\psi : \fn \to \Gr T_p M
}
is called an \emph{adapted frame} at~$p$. The collection of adapted frames forms a principal $G_0$-bundle.

\numpar[p:examples][Examples]
Here is a list of running examples of such distributions that the reader can keep in mind:
\begin{enumerate}[label=\arabic*~---]
\item An almost-complex structure on a smooth manifold $M$ is the same datum as a splitting 
\deq{
T_\C M = T \oplus \bar{T}
}
of the complexified tangent bundle as a sum of conjugate subbundles. We take the distribution $D$ to be given by~$\bar{T}$. Given vectors $v,v' \in D_p$ and extensions to sections $X_v,X_{v'}$ of $D$, the symbol algebra at~$p$ is defined by the Nijenhuis tensor
\deq{
\gamma: \wedge^2 \bar{T} \to T_\C M/\bar{T} \cong T, \quad
(v, v') \mapsto  [X_v, X_{v'}](p) + D_p.
}
The distribution is regular if the Nijenhuis tensor has constant rank.
\item A complex structure is the special case of the above with $\gamma = 0$. It is a regular distribution of type $\fn$, with $\fn$ taken to be the abelian weight-graded Lie algebra with $\overline{\C}^n$ in weight one and $\C^n$ in weight two.
\item A transversely holomorphic foliation on a smooth manifold of dimension~$d$ is determined as follows: Choose a subspace $\overline{A} \subset \C^d$ such that $\overline{A}$ and its conjugate ${A}$ together span $\C^d$. Then define $\fn$ to be the abelian weight-graded Lie algebra with $A$ in degree one and $\C^d / A$ in degree two. A THF structure is a regular distribution of type~$\fn$ in the complexified tangent bundle. It would be interesting to consider generalizations to regular noninvolutive distributions.
\item A contact structure is a bracket-generating distribution of codimension one. It is a regular distribution of type $\lie{h}$, where $\lie{h}$ denotes the Heisenberg Lie algebra: the central extension of $\R^{2n}$ specified by a symplectic form.
\item Any supermanifold on which a supersymmetric field theory can be defined. See~\S\ref{p:superspace}.
\end{enumerate}

\numpar[p:kleinmodel][Model geometries]
There is a standard model geometry (in the sense of Cartan geometry) for manifolds equipped with regular distributions of type~$\fn$. 
It is sometimes called the ``flat distribution of type $\fn$.'' One constructs it by considering the simply connected (super)group $N = \exp(\fn)$ which exponentiates the symbol algebra. (More properly, flat superspace is a torsor over~$N$, but we will abusively conflate the two.)  
Since $\fn$ is nilpotent, $N$ is topologically just a (super) vector space.

The right-invariant vector fields extending $\fn_1 = D_0 \subset T_0 N$ span the distribution~$D$. 
It is immediate by translation invariance that the symbol of $D$ is $\fn$ at each point of~$N$. It is furthermore clear that the left action of~$N$ on itself is compatible with the distribution, and that it extends to the action of the larger Lie group $G = G_0 \rtimes N$.  So we are considering a version of Klein geometry for the pair $(G, G_0)$. (For this notion and an excellent introduction to related ideas, see~\cite[definition 3.16]{Sharpe}.)
Note, though, that the distribution-preserving symmetries of~$N$ can be \emph{larger} than~$G$; to describe them algebraically is the purpose of Tanaka's prolongation procedure.

\numpar[sec:ST][Supertranslations]
We now specialize to the class of examples we will focus on in the following. 
To begin, we fix what we call a \textit{supertranslation algebra}~$\lie{n}$.
By definition, this is a finite-dimensional, consistently weight-graded
super Lie algebra supported in weights one and two. (See~\cite{Kac} for basic definitions.)
Thus $\lie{n} = \lie{n}_1 \oplus \lie{n}_2$ as a vector space, $\lie{n}_1$ is odd and $\lie{n}_2$ is even, and the only nontrivial bracket is given by a single linear map
\deq{
    \gamma: \Sym^2(\fn_1) \to \fn_2.
}
For now, we make no further requirements on~$\fn$.
At this level of generality we are able to consider both all standard examples (\S\ref{sec:stdST})
and more exotic superspaces, such as those obtained from the process of twisting, in a uniform way. 

In the sequel, we will write $d = \dim(\fn_2)$ and $k = \dim(\fn_1)$, so $\dim(\fn) = d|k$ as a super vector space. 
We emphasize that the $\Z$-grading here is \emph{distinct} from the cohomological grading; see \S\ref{sec:grading} below for grading conventions.

\numpar[sec:aut][Automorphisms]
We can extend $\fn$ by the super Lie algebra $\fg_0$ of its degree-zero derivations. (Note that there are no inner degree-zero derivations.) We will denote this extension by 
\beqn
\fg = \fg_0 \ltimes \fn .
\eeqn
$\fg$ plays the role of the affine transformations: for example, if $\fn_1 = 0$, then $\fg$ is precisely the Lie algebra of infinitesimal affine transformations of $d$-dimensional space.  

$\fg_0$ comes equipped with a pair of maps
\deq{
\rho_i: \fg_0 \to \lie{gl}(\fn_i), \quad i = 1, 2,
}
encoding the action of infinitesimal automorphisms in each degree $i$.
The ideal $\fr= \ker(\rho_2)$ is called the \emph{R-symmetry algebra}. Since $\fg$ is graded, there is always a $\lie{gl}(1)$ subalgebra $\fz$ of $\fg_0$, not contained in $\fr$, that makes the grading inner.

In the following, we will often use the notation $V = \fn_2$ and $\Sigma = \fn_1$ when we want to refer to these vector spaces just as $\fg_0$-representations, rather than as direct summands of~$\fn$.

\numpar[p:superspace][Superspaces and superconformal structures]
We can now remind the reader of an important definition:
\begin{dfn}[Well-known under diverse names: ``superconformal structure'' already in~{\cite[chapter 5, \S7]{Manin}}; ``SUSY manifold'' in~{\cite[lecture 2]{BernsteinSUSY}}, for example]
Let $\fn$ be a supertranslation algebra.
 A \emph{superspace based on~$\fn$} is a smooth supermanifold $M$ of dimension $d|k$, equipped with a \emph{superconformal structure} of type~$\fn$.
 A \emph{superconformal structure} of type~$\fn$ is a regular tangential distribution 
    \deq{
        D \subset TM
    }
    of type $\fn$---in other words, an odd distribution of maximal dimension whose symbol algebra is $\fn$ everywhere in~$M$.
    \label{def:superspace}
\end{dfn}
Again, there is a model geometry $N$ for superspace, which we think of as the Klein geometry $(G = G_0 \ltimes N, G_0)$. We will call this geometry \emph{flat superspace}. (Names abound; in~\cite{Supersolutions}, this is called ``super Minkowski space.'') In the sequel, we will always work on flat superspace, though we emphasize that our constructions and our perspective are in principle global.

Physicists should imagine that the distribution $D$ is locally spanned by the odd vector fields $D_a$ normally called the ``supercovariant derivatives.'' These fail to commute, and the failure is encoded by the structure constants $\gamma$ of the supertranslation algebra $\fn$ appropriate to the dimension and the amount of supersymmetry. This is the structure that is being specified when formulas for the supercovariant derivatives are written down.

To connect very explicitly to the standard literature, we recall that a standard set of coordinates on flat superspace consists of bosonic coordinates $x^\mu$ and fermionic coordinates $\theta^a$, determined by a corresponding choice of basis for~$\fn_2$ and $\fn_1$, respectively. 
The left-invariant vector fields extending $\fn_1 \subset T_0 N$ take the well-known form
        \deq{
            Q_a = \pdv{ }{\theta^a} + \theta^b \gamma_{ab}^\mu \pdv{ }{x^\mu}.
        }
The right-invariant vector fields take the form
        \deq{
            D_a = \pdv{ }{\theta^a} - \theta^b \gamma_{ab}^\mu \pdv{ }{x^\mu},
        }
recovering the standard formulae for the supersymmetry-covariant derivatives. Left- and right-invariant vector fields commute.
The Lie subalgebra of $\Vect(N)$ consisting of left-invariant vector fields is of course isomorphic to $\fn$; the even left-invariant derivatives are just $\partial/\partial x^\mu$.
We emphasize again that no additional structure---in particular, nothing about spinors---is of any relevance to our constructions, though we do discuss the standard examples in great detail.

\numpar[sec:stdST][Standard examples]
In the cases usually considered in the study of supersymmetric physics, the abelian Lie algebra $\fn_2 = V$ of translations is equipped with a symmetric inner product, and $\fn_1$ is a spinorial $\lie{so}(V)$ representation. The map $\gamma$ is constructed from the Clifford multiplication map by using the inner product and the spin-group-invariant pairing on the spin representation; the details depend on the dimension modulo eight, and on the signature. (Readers unfamiliar with supersymmetry algebras are referred to the review in~\cite{NV}, to~\cite{Supersolutions}, or to any of innumerable accounts in the literature.)

For algebras of this kind, $\fg_0$ is of the form $\so(V) \oplus \fr \oplus \lie{gl}(1)$; the factors represent infinitesimal Lorentz and R-symmetry, together with the grading, which physically corresponds to scaling dimension. It is the form of $\fg_0$ in these examples which creates a relationship between such classes of superconformal structures and conformal structures, via a reduction of the structure group to~$G_0$.
We will often denote the spin representation of $\Spin(V)$ by~$S$, and the chiral spin representations for even $d$ by $S_\pm$.

\subsection{Tangential distributions in derived geometry}
We develop an approach to the following general question: Given a space equipped with a tangential distribution, what is the sheaf of functions that are required to be constant along the distribution? On the face of it, the question is not interesting in our example: Since sections of the distribution generate all vector fields via the Lie bracket, the only functions that are annihilated by all sections of the distribution are constants. But this is too hasty. By using an appropriate derived model for the invariance condition, we find interesting commutative differential graded algebras with cohomology in nonzero degrees. Our approach is an abstraction of Cirici and Wilson's profound recent work generalizing the Dolbeault complex to almost-complex manifolds~\cite{ACDolbeault}.

We interpret this as saying that, for superspaces or other geometries defined by noninvolutive tangential distributions, the model geometry is most properly viewed as a derived stack. Even though it is fully described by its global algebra of functions, this derived stack cannot be thought of as an affine space in the normal sense. The correct derived model for functions annihilated by sections of the distribution, obtained via the corresponding filtration on the de Rham complex, always has cohomology in nonzero degree. We will see some examples of this that help to build intuition later on.

\numpar[p:analogy][Naming conventions; an important analogy]
Throughout, we will often use language suggested by the analogy between superspaces and almost-complex manifolds. ``Holomorphic,'' for example, will mean ``invariant (in a derived sense) along the distribution.'' We will try to use scare quotes to emphasize this metaphorical usage, but may not be entirely consistent throughout.

Our principal reason for foregrounding almost-complex structures, rather than contact structures or other geometric structures defined by tangential distributions, is experience with twists of supersymmetric field theories. These twists often output holomorphic (or holomorphic-topological) field theories, which are defined on spaces with complex structures or transverse holomorphic foliations. Because our construction is uniform, in a precise sense, across all possible twists of a fixed supersymmetry algebra (\S\ref{twists}), we often find descriptions of supersymmetric field theories that agree with the descriptions of their twists as \emph{holomorphic} field theories. Eleven-dimensional supergravity is a notable example~\cite{CY2}, as is the description of ten-dimensional Yang--Mills theory as a ``holomorphic'' Chern--Simons theory~\cite{BerkovitsSuperparticle}. The analogy is instructive, and so we choose to emphasize it throughout, at the cost of potentially introducing yet another disorienting linguistic choice into a subject already replete with them.

\numpar[sec:grading][Grading conventions]
Throughout, our conventions and terminology largely adhere to those in~\cite{MSJI}, but we recall the relevant parts here.
As mentioned above, we work with two $\Z$-gradings, with the Koszul sign rule determined by the totalization. One is the homological degree, the other will be called the ``internal'' grading. The generalized super Lie algebra $\fn$ is concentrated in homological degree zero, and sits in internal degrees one and two. When we refer to the bidegree, we will mean the ordered pair consisting of homological and internal degree.

Identifying the fiber of the cotangent bundle on~$N$ with $\fn^\vee$, we obtain a decomposition of $\wedge^\bu T^* N$ as a sum of bundles, with respect to the bigrading by homological and internal degree. 
In coordinates, the filtration is easy to see explicitly after writing the de Rham complex of $N$ using a left-invariant, rather than a coordinate, frame. This corresponds to working with the supersymmetry-invariant one-forms on superspace, which take the form
\deq[eq:linv1]{
            \lambda^a = \d\theta^a, \quad v^\mu = \d x^\mu + \theta^a \gamma_{ab}^\mu \d\theta^b.
        }
        The bigrading amounts to assigning $\lambda$ bidegree $(1,-1)$ and $v$ bidegree $(1,-2)$.

        It will often be convenient to use another basis for the bigrading, which totalizes the homological and internal degrees to obtain a consistent $\Z$-grading. We will use this totalized degree together with the negative of the internal degree, which we will refer to as ``weight.'' In other words,
        \[
            (\text{totalized}, \text{weight}) = (\text{homological} + \text{internal}, - \text{internal}).
        \]
        (This basis was called the ``Tate bidegree'' in~\cite{MSJI}.)
        These conventions agree with those used in~\cite[Remark 3.7]{ACDolbeault} in defining the shifted Hodge filtration, up to reversing the order of the Tate bidegree. We explain this further in the next section.
        \numpar[hodge][A Hodge-like filtration]
        For a non-integrable almost complex structure, there is an analogue of the Hodge filtration (and correspondingly of the Dolbeault complex), developed in recent work of Cirici and Wilson~\cite{ACDolbeault}. The de Rham complex is equipped with the structure of a bigraded multicomplex by assigning bidegree $(0,1)$ to $\d \zbar$ and bidegree $(-1,2)$ to $\d z$. Considering the row filtration of this multicomplex defines a spectral sequence whose first differential encodes the torsion of the distribution (in that case, the Nijenhuis tensor). Dolbeault cohomology is defined to be the $E_1$ page of this spectral sequence, which agrees with the standard definition (and with $E_0$) when the distribution is integrable.

        As shown in~\cite{CY2}, the construction generalizes straightforwardly to superspace, with $\lambda$ playing the role of $\d \zbar$ and $v$ playing the role of $\d z$. We prefer to think of a $D_\infty$-algebra structure in cochain complexes, as explained in~\cite{Lapin} (see also~\cite{Boardman,Hurtubise}, as well as~\cite{LWZ} for a recent treatment tailored to the context of Cirici and Wilson's work). 
        Working in coordinates in the left-invariant frame defined above, the de Rham differential on superspace takes the form
        \begin{equation} \label{eq:d-dR}
            \begin{aligned}
                \d_{dR} &= \d_1 + \d_0 + \d_{-1} \\ 
               &= \lambda^a \gamma_{ab}^\mu \lambda^b \pdv{ }{v^\mu} + \lambda^a \left(  \pdv{ }{\theta^a} - \gamma_{ab}^\mu\theta^b \pdv{ }{x^\mu} \right) + v^\mu \pdv{ }{x^\mu}.
           \end{aligned}
        \end{equation}
        It is clear by inspection that the term $\d_m$ has Tate bidegree $(m,1-m)$. We regard the collection $\{\d_m:m\leq 0\}$ as defining a $D_\infty$ structure in chain complexes, with respect to the internal differential $\d_1$. Furthermore, each term in the differential is compatible with the module structure for the supersymmetry algebra, acting by left translations.

        By a familiar abuse of terminology (\S\ref{p:analogy}), we will refer to this spectral sequence as the \emph{Frölicher spectral sequence (for superspace)} in the sequel. In the analogy with the Dolbeault complex, our primary object of study is the $E_1$ page of the Frölicher spectral sequence, which is given by 
\deq{
 W^{\bu,\bu} :=   H^\bu(\Omega^{\bu,\bu}(N), \d_1). 
}
The reader should think of $W^{-k,\bu}$ as playing the role of the complex $\Omega^{k,\bu}$ of Dolbeault forms on a complex manifold, and the differential on the $E_1$ page (which is induced by~$\d_0$) as corresponding to $\dbar$.
This construction recovers the pure spinor superfield formalism and a host of familiar supermultiplets, while giving a fruitful interpretation in terms of the ``almost-complex geometry'' of superspace.
We unpack this connection further in the following sections.
    
\numpar[bundles][The nilpotence variety; multiplets from sheaves]
One feature of the above construction, which may already be apparent to the reader, is that standard constructions in the pure spinor formalism emerge naturally. We view the discussion above as giving a further natural explanation of the origin  of the pure spinor formalism; this explanation is perhaps the most closely connected to superspace geometry. We quickly remind the reader of a few details; for more on our approach to the formalism, we refer to~\cite{perspectives,derivedps}.

With respect to the Tate bidegree, the Chevalley--Eilenberg cochain complex of $\lie{n}$ is a consistently $\Z$-graded cdga with an additional grading by weight.
It is generated by elements $\lambda^a$ and $v^\mu$, in Tate bidegrees $(0,1)$ and $(-1,2)$, respectively. The Chevalley--Eilenberg differential takes the form 
\deq{
    \d_1 = \lambda^a \gamma^\mu_{ab} \lambda^b \pdv{ }{v^\mu}.
}
Neither the notation, nor the similarity to the first term of~\eqref{eq:d-dR}, is a coincidence: the Chevalley--Eilenberg complex $C^\bu(\fn)$ is, as always, isomorphic to the complex of left-invariant forms $\Omega^\bu(N)^N$ on superspace.

The zeroth cohomology of the supertranslation algebra $\fn$ is the quotient of 
the ring $R = \C[\lambda^a]$, equipped with its natural weight grading,
    by the homogeneous ideal $I$ generated by the quadratic equations $\lambda^a \gamma_{ab}^\mu \lambda^b$. 
    We will think of the spectrum of this ring
    as an affine scheme, called the \emph{nilpotence variety} (or generalized pure spinor space, or Maurer-Cartan set) $Y$ of $\fn$. The points of~$Y$ classify possible twists of a theory with $\fn$-supersymmetry; for more information, see~\cite{NV}.

    By tensoring $H^0(\fn) = R/I$ over $\C$ with functions on superspace, we obtain precisely the standard ``pure spinor superfield'' considered in the literature; it appears above as $W^{0,\bu}$, and the standard differential is just $\d_0$. 
    One observes that tensoring $W^{0,\bu}$ over $R/I$ with any equivariant sheaf on $Y$ returns a multiplet~\cite{perspectives,derivedps}; indeed, the procedure defines a functor 
	\begin{equation}\label{eqn:Afunctor}
		A^\bullet_{R/I} : \Mod_{R/I}^{\fg_0} \longrightarrow \Mult_{\fg} 
	\end{equation}
        from equivariant sheaves to $\fg$-multiplets.
        Each of the summands $W^{-k,\bu}$ of the $E_1$ page can thus be understood as the multiplet associated to the equivariant sheaf $H^{-k}(\fn)$.
        One can generalize the formalism to a derived version, in which $C^\bu(\fn)$ plays the role of~$Y$; this then provides a universal superfield formalism in the form of an equivalence of categories between $\fn$-multiplets and $C^\bu(\fn)$-modules~\cite{derivedps}. We will not return to this in the sequel, as we will be primarily interested in the subcategory of multiplets that arise from strict $H^0(\fn)$-modules. With one or two notable exceptions, the standard examples in physics belong to this class.

        It is clear from the above story that the multiplet $A^\bu := W^{0,\bu}$ will 
        play a special role. (By a slight abuse of notation, $A^\bu = A^\bu_{R/I}(R/I)$.) We will discuss this further in the next section, but want to emphasize two important points. The first is that, unlike  a generic multiplet, $A^\bu$ is naturally equipped with a suitably graded \emph{commutative} algebra structure. This structure plays an essential role, both in the construction of Yang--Mills theories and in the construction of first-quantized models; it has been insufficiently appreciated in the past (but see the important work~\cite{MS1,MS2,Mov1,Mov2,GS,GKR,GGST}).
This multiplet, which is canonically determined by specifying the supertranslation algebra $\fn$, was called the \emph{canonical multiplet} in~\cite{MSJI} and the \emph{tautological filtered cdgsa} in~\cite{spinortwist}. 

Secondly, 
multiplets in the essential image of the  functor $A^\bu_{R/I}$---those that correspond to $H^0(\fn)$-modules---acquire an additional structure: they naturally define sheaves of $A^\bu$-modules on superspace.
This additional structure lets us think of such multiplets as ``equivariant quasi-coherent sheaves'' or ``natural holomorphic vector bundles'' on superspace, whereas more general multiplets---while equivariant---fail to be sheaves of $A^\bu$-modules.

    \numpar[geom][The structure sheaf of superspace]
    We have constructed an analogue of the (regraded) Fr\"olicher spectral sequence associated to a superspace based on~$\fn$, and have showed that the $E_1$ page of this spectral sequence is constructed in precisely the manner that Cirici and Wilson construct the Dolbeault complex of an almost-complex manifold. 
    It is extraordinarily profitable to take this analogy seriously, using it as the basic tool for understanding superspace geometry.

    By doing this, as we have emphasized throughout, one is led to equip superspace with \emph{a new structure sheaf}, which is a sheaf of commutative dg superalgebras. In our analogy, the cdgsa $A^\bu =( W^{0,\bu}, \d_0)$ plays the role of $(0,\bu)$-forms equipped with the $\dbar$ differential. These form an appropriate derived replacement for the sheaf of holomorphic functions, and make sense (thanks to Cirici and Wilson) in the almost-complex setting, independent of any integrability requirement. We will thus think of $A^\bu$ as the structure sheaf of superspace; in particular, flat superspace $N$ is, for us, the derived stack $\Spec A^\bu$.
Examples of the supermultiplets $A^\bu$ are listed in Table~\ref{tab:O}; for more details, we refer again to~\cite{perspectives}.

\begin{table}
\begin{tabular}{|l|l|l|l|l|}
\hline
\emph{Superspace} & \emph{Structure sheaf} & \emph{On-shell?}& \emph{$\hdim(N)$} & \emph{CY?} \\ \hline
3d $\N=1$ & BRST vector & & 1 & \\ \hline
4d $\N=1$ & BRST vector & & 2 & \\ \hline
4d $\N=2$ & BRST tensor (dual hyper) & & 1 & \\ \hline
4d $\N=4$ & trivial & & 0 & $\checkmark$ \\ \hline
6d $\N=(1,0)$ & BRST vector & & 3 & \\ \hline
6d $\N=(2,0)$ & presymplectic BV abelian tensor & $\checkmark$ & 1 & \\ \hline
10d $\N=(1,0)$ & BV vector & $\checkmark$ & 5 & $\checkmark$ \\ \hline
10d $\N=(2,0)$ & presymplectic BV IIB supergravity & $\checkmark$ & 1 &  \\ \hline
11d $\N=1$ & BV supergravity & $\checkmark$ & 2 & $\checkmark$ \\ \hline
\end{tabular}
    \caption{Canonical multiplets in some physical examples}
    \label{tab:O}
\end{table}

Using intuitions from (almost) complex geometry as a guide, further comparisons are immediate. 
For example, the largest integer $n$ such that $W^{-n,\bu} \neq 0$ corresponds to the ``dimension'' of superspace. Just as the complex dimension of a complex manifold is not identical to its real dimension as a smooth space, the dimension of superspace is distinct from both the real dimension $d$ and the superdimension $d|k$. 
Since the transferred $D_\infty$ structure witnesses $W^{\bu,\bu}$ as a resolution of the ground field in $A^\bu$-modules, we see that this notion of dimension agrees with the homological dimension of (the local completion of) $A^\bu$; we will call this notion of dimension the \emph{homological dimension} in the sequel.
Its value is given by
 \deq[eq:dimDol]{
        \hdim(N) = \dim(\fn_2) - \dim(\fn_1) + \dim(Y) = d - k + \dim(Y).
 }
 The homological dimension thus measures the failure of $Y$ to be a complete intersection; for complete intersections, $\hdim(N)$ is zero. 

 In~\cite{CY2},
 it was argued that the homological dimension is invariant under twisting (see~\S\ref{twists} below), and is thus equal to the number of surviving translations in the maximal twist of $\lie{n}$. The maximal twist is here defined to be the twist with respect to a supercharge on a stratum of maximal dimension in the nilpotence variety; it is \emph{not} necessarily the twist with the largest number of exact translations.

Further,
the existence of an isomorphism
\begin{equation}
W^{-n,\bu} \cong W^{0,\bu} 
\end{equation}
is analogous to saying that the sheaf of holomorphic top forms can be trivialized over the structure sheaf. This corresponds to a Calabi--Yau structure on superspace. 
Superspace admits a Calabi--Yau structure precisely when the nilpotence variety is Gorenstein~\cite{perspectives,CY2}.
We note that there are subtleties related to duality; in particular, when the module $H^{-i}(\fn)$ is not Cohen--Macaulay, then the multiplets associated to $H^{-i}(\fn)$ and $H^{i-n}(\fn)$ are not dual in the category of multiplets. For details, we refer to~\cite{perspectives}.

\numpar[mu][Relation to component-field models]
By studying multiplets using the pure spinor superfield formalism, we produce models that are freely resolved over smooth functions on superspace---in physical terms, everything is presented in terms of unconstrained superfields. It is natural to ask about how to compare to more familiar component-field models, and in particular to ask which multiplets appear as canonical multiplets.

Each multiplet admits a canonical component-field model, characterized by the conditions that it is described by a chain complex of vector bundles on the underlying bosonic spacetime in which all differentials are differential operators of strictly positive order. To reproduce this model, we can filter the $E_1$ page of the Frölicher spectral sequence (which is $W^{\bu,\bu}$ equipped with differential $\d_0$), again giving rise to the structure of a bigraded multicomplex whose associated spectral sequence abuts to the $E_2$ page of the Frölicher spectral sequence (and thus to the sum of the on-shell multiplets associated to the cohomology groups of~$\fn$). A similar component-field spectral sequence can be defined for the pure spinor superfield description of any multiplet.

Our conventions for the component-field spectral sequence are as follows: we introduce a new degree, termed \emph{$\theta$-degree}, with respect to which all odd generators of the smooth functions on~$N$ carry degree $-1$. This separates the differential into terms
\deq{
    \d_0 = \d_0^{(1)} + \d_0^{(-1)} = \lambda^a \pdv{ }{\theta^a} - \lambda^a \theta^b \gamma^\mu_{ab} \pdv{ }{x^\mu}.
}
In comparison with our conventions for the Frölicher spectral sequence, the weight grading here plays the role of the homological degree, and the $\theta$-degree plays the role of the totalized degree.
Examining the form of $\d_0^{(1)}$ makes it clear that the $E_1$ page of the component-field spectral sequence is computed by considering the Koszul homology of the $R/I$-module underlying the multiplet in question, and can thus be studied  computationally  effectively by looking at the minimal free resolution of that module over~$R$. The transferred $D_\infty$ structure on the $E_1$ page returns the differential on the multiplet, with terms of $\theta$-degree $1-2k$ being differential operators of order $k$. 

\begin{dfn}
For a multiplet $E$, we use the notation $\mu E$ to refer to the component-field model.
This is the $E_1$-page of the spectral sequence of the filtration induced by $\theta$-degree, equipped with the transferred differential.
\end{dfn}
        
  	\numpar[twists][Behavior under twisting]
  	Choosing a square zero element $Q \in Y$ we can ``twist'' the algebra $\fg$ itself by deforming it.
	The twist of $\lie{g}$ with respect to $Q$ is the dg Lie algebra $(\fg , [Q,-])$. We denote the cohomology (which is again a graded Lie algebra in degrees zero, one and two) by
  	\begin{equation}
  	\fg_Q = H^\bullet(\fg , [Q,-])
  	\end{equation}
  	and its strictly positively graded piece by
  	\begin{equation}
  	\fn_Q = H^{>0}(\fg , [Q,-]) ,
  	\end{equation}
as well as the corresponding nilpotence variety by $Y_Q$. In typical examples we will refer to these as the twisted super Poincar\'e, twisted supertranslation algebra and twisted nilpotence variety respectively. For interesting recent work studying higher $L_\infty$ operations on~$\fg_Q$, we refer to~\cite{JKY}.

As explicated in~\cite{spinortwist}, this very naive notion of `twisting' is closely related to the standard procedure of twisting supersymmetric multiplets~\cite{CostelloHolomorphic,ESW}.
We can think of the algebras $\lie{g}_Q$ and $\lie{n}_Q$ as residual symmetry algebras in the twisted theories, and the twisted nilpotence variety $Y_Q$ as the moduli space of further twists.  	
  	The pure spinor superfield formalism can be applied to both $\fg$ and all of its twisted versions; applying it to $\fg_Q$ yields a $\fg_Q$-multiplet. On the other hand, any $\fg$-multiplet can be twisted to give an $\fg_Q$-multiplet. The results of~\cite{spinortwist} indicate that the pure spinor construction is compatible with twisting, such that
\deq[eq:AQAQ]{
            A^\bullet_{\cO_Y}(\cO_Y)^Q \cong A^\bullet_{\cO_{Y_Q}}(\cO_{Y_Q}).
}
        This idea gives rise, not only to enormously powerful computational techniques, but to deep structural connections between a theory and its twists; the essential insight is that, when properly formulated in terms of superspace geometry, \emph{a theory and its twists are alike}.

\subsection{The local superconformal algebra as vector fields on superspace}
Having understood the canonical multiplet $A^\bu$ as analogous to (a Dolbeault resolution of) ``holomorphic functions'' on superspace, and having constructed the multiplets $W^{-k,\bu}$ that play the role of the sheaves of holomorphic differential forms, it is logical to ask what the analogues of other natural sheaves on superspace look like. In this section, we study the derivations of the cdga $A^\bu$, which return the tangent sheaf of our ringed space. We view the resulting multiplet $\Der(A^\bu)$ as analogous to (a derived model of) distribution-compatible vector fields.

\label{sec: t-sheaf}

\numpar[der-A][The tangent sheaf for superspaces]
        Recall that the canonical multiplet is equipped with the structure of a cdgsa and that its derivations thereby naturally form a dg (super) Lie algebra. 
This multiplet of derivations of $A^\bu$ is naturally equipped with a Lie bracket; in fact, with its natural differential it carries the strucutre of a dg super Lie algebra.
Multiplets naturally form vector bundles on superspace (though the rank of the bundle may be infinite). 
Furthermore, the differential and bracket on derivations of $A^\bu$ are given by differential and bidifferential operators respectively.
Thus this multiplet carries the structure of a \textit{local} dg super Lie algebra as defined in~\cite[Definition 6.2.1]{CG1}.

We sum up these considerations with the following definition.
\begin{dfn}
The \textit{local superconformal algebra} $\Conf(\lie{n})$ of type~$\lie{n}$ is the tangent sheaf of superspace: the local super Lie algebra of derivations $\Der(A^\bu)$. It defines a local super Lie algebra on any superspace based on~$\fn$.
    \label{dfn:Conf(n)}
\end{dfn}
In what follows, we will not notationally distinguish between the corresponding sheaf of dg super Lie algebras on superspace and its global sections on a fixed superspace (which carries the structure of a dg super Lie algebra).

The remainder of this paper will go on to characterize $\Conf(\fn)$ more explicitly in various different ways.
We will see that its structure can be thought of as capturing the \emph{full} nonlinear structure of the BRST conformal supergravity multiplet, including all gauge symmetries, and that its field content precisely matches that of the conformal supergravity multiplet in all physical examples. From our perspective, which is informed by derived deformation theory, it is natural to think of $\Conf(\fn)$ as having a moduli-theoretic interpretation, and to see deformations of conformal structure appearing together with the gravitino and other physical fields in degree one.

It is clear that $\Conf(\fn)$ acts on any natural vector bundle on superspace, in the same way that holomorphic vector fields couple to natural holomorphic vector bundles. Thus this multiplet couples naturally to a wide class of other multiplets, including most physically important examples.
We further emphasize that \emph{none} of this structure is constructed in an \emph{ad hoc} or example-dependent manner. The only input datum is the symbol algebra $\fn$, and every further construction is dictated by general considerations.

\numpar[p:derived][The tangent complex]
We remark that the dg Lie algebra of derivations of $A^\bu$ is not the most canonical object to consider from the point of view of deformation theory.
The Koszul duality between commutative algebras and Lie algebras states that formal deformations of the dg super algebra $A^\bu$ are controlled by the dg Lie algebra called its (shifted) \textit{tangent complex} $\mathbb{T}_{A^\bu}$~\cite{SchlSta,Hinich}.
Explicitly, this tangent complex is computed by the so-called Andr\'e--Quillen cohomology of~$A^\bu$~\cite{Hinich,BlockLaz}. As such, the most appropriate definition from the deformation theoretic perspective is the following.
\begin{dfn}
   The \emph{derived superconformal algebra} of type $\fn$ is the tangent complex $\mathbb{T}_{A^\bu}$ to superspace. Again, it defines a local super Lie algebra on any superspace based on~$\fn$.
    \label{dfn:T_A(n)}
\end{dfn}
In many of the examples related to supersymmetry, $\Conf(\lie{n})$ agrees with the multiplet obtained from the tangent complex.\footnote{In fact, we expect that in these examples the dg algebra $A^\bu$ is indeed equivalent to a cofibrant dg algebra.} Moreover, $\Conf(\fn)$ precisely reproduces physical constructions of conformal supergravity multiplets.
Nevertheless, the tangent complex has better-behaved properties in homotopy theory, and should play some natural role in supergravity. It would be interesting to see whether $\mathbb{T}_{A}$ leads, in some example, to any further ``enhancement'' of the local superconformal algebras we study here.

	\numpar[coker][The sheaf of surviving translations]
    We have seen above (\S\ref{bundles}) that non-derived multiplets can be thought of in terms of the sheaves on the nilpotence variety $Y$ that generate them. It turns out that the multiplet $\Conf(\fn)$ is generated by a simple sheaf $\Gamma$ on~$Y$. Thinking of the points of $Y$ as possible twists of a theory with $\fn$-supersymmetry, the stalk of $\Gamma$ at~$Q \in Y$ consists of \emph{those bosonic spacetime translations that survive in the $Q$-twist}. We sum this up in the following theorem.
    Here we make use of the notation $\lie{n}_1 = \Sigma$, $\lie{n}_2 = V$ introduced above in~\S\ref{sec:stdST}.
   
    Given the map $\gamma: \Sym^2(\Sigma) \to V$, we obtain maps $\Sigma \to \Sigma^\vee \otimes V$ and $V^\vee \to \Sym^2(\Sigma^\vee)$. We can view the latter as defining a quadratic-coefficient map of $R$-modules 
    \deq[eq:gamma-hat]{
        \hat{\gamma}: V^\vee \otimes R \to R,
    }
    whose image is $I$. (This map is nothing other than the Chevalley--Eilenberg differential acting on generators dual to $\fn_2$.)
    Similarly, tensoring with $R/I$, we construct the linear-coefficient maps of free $R/I$-modules
            \deq[eq:lambdagamma]{
                \varphi = (\lambda \gamma)^\mu_b: \Sigma \otimes R/I \to V \otimes R/I, 
                \qquad
                \varphi^t : V^\vee \otimes R/I \to \Sigma^\vee \otimes R/I.
            }
            (These are the Jacobian matrix of~$I$ and its transpose.)            

    \begin{thm} \label{thm: coker}
        There is an equivalence of $\lie{n}$-multiplets 
            \deq{ \Conf(\fn) \cong A^\bu_{R/I}(\coker \varphi).
            }
        \end{thm}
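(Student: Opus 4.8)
The plan is to compute $\Der(A^\bu)$ explicitly as an $R/I$-module by passing to the pure spinor superfield description, and then to identify it with the module $A^\bu_{R/I}(\coker\varphi)$ on the nose. Recall that $A^\bu = A^\bu_{R/I}(R/I)$ is freely resolved over $C^\infty(N)$, and that—by the functoriality and module-theoretic content of \S\ref{bundles}—a derivation of the cdgsa $A^\bu$ is determined by its action on the generators. The key point is that $A^\bu$ is generated over $C^\infty(N)$ by the coordinate functions together with the weight-one pure spinor variables $\lambda^a$ (the generators dual to $\fn_1$) and the weight-two variables $v^\mu$ (dual to $\fn_2$), subject to the relations cut out by $\d_0$ and the quadratic relations $I$. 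A first-principles computation of $\Der(A^\bu)$ as a module over $R/I$ therefore reduces to linear algebra over $R/I$: a derivation acts on the $\lambda$'s and $v$'s, and compatibility with the differential $\d_0$ and with the defining ideal $I$ imposes exactly the constraints encoded by the maps $\varphi$ and $\hat\gamma$ of~\eqref{eq:lambdagamma} and~\eqref{eq:gamma-hat}.

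Concretely, I would proceed as follows. First, unwind the $A^\bu$-module structure: a derivation $X$ of $A^\bu$, being $A^\bu$-linear up to the Leibniz rule, is specified by a pair $(X(\lambda), X(v))$ of elements of free $A^\bu$-modules $\Sigma^\vee \otimes A^\bu$ and $V^\vee \otimes A^\bu$ respectively. Second, impose that $X$ commute with the Chevalley--Eilenberg part of the differential—that is, that $X$ descend to the cohomology $R/I = H^0(\fn)$; this is where the quadratic relations enter, and it forces the datum on the $v$'s to be constrained by $\hat\gamma$, whose image is $I$, so that effectively the relevant module on the $V^\vee$ side is controlled by the \emph{cokernel} of the dual Jacobian $\varphi^t$, equivalently (after dualizing, using that we are looking at derivations rather than forms) by $\coker\varphi$. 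Third, run the functor $A^\bu_{R/I}(-)$ of~\eqref{eqn:Afunctor} on the $R/I$-module $\coker\varphi$ and check that the pure spinor superfield it produces—with its transferred differential and $\fn$-action—matches, term by term, the complex $\Der(A^\bu)$ computed in the first two steps. Naturality of the whole construction (everything is built only from $\fn$ and $\gamma$) makes the identification $\fg_0$-equivariant and compatible with the $\fn$-module structures, giving the claimed equivalence of $\fn$-multiplets.

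The main obstacle, I expect, is step two: carefully tracking how the noncommutativity of the distribution (i.e.\ the nonzero $\d_0$, not just $\d_1$) interacts with the Leibniz rule when one passes from derivations of the \emph{graded} algebra $R/I \otimes C^\infty(N)$ to derivations of the \emph{differential} graded algebra $(A^\bu, \d_0)$. A priori a derivation could fail to be closed, or closed-but-not-exact contributions in higher $\theta$-degree could appear, and one must verify that the transferred $D_\infty$ structure on the $E_1$ page of the component-field spectral sequence does not introduce correction terms that spoil the strict identification with $A^\bu_{R/I}(\coker\varphi)$. Equivalently: one must check that $\Der(A^\bu)$ really does arise from a strict $H^0(\fn)$-module (as opposed to only a $C^\bu(\fn)$-module), so that it lies in the essential image of the functor~\eqref{eqn:Afunctor}; the expectation recorded in~\S\ref{p:derived} that $A^\bu$ is equivalent to a cofibrant dg algebra in the cases of interest is morally what makes this work, but the cleanest route is the direct module computation, where one sees $\coker\varphi$ emerge because a vector field tangent to the distribution is exactly the failure of the Jacobian map $\varphi$ to be surjective. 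Once $\coker\varphi$ is identified as the generating sheaf, Theorem~\ref{thm: coker} follows immediately from the definition of $A^\bu_{R/I}$.
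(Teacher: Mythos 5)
Your overall strategy---compute $\Der(A^\bu)$ explicitly and identify the $R/I$-module generating it under the functor $A^\bu_{R/I}(-)$---is the same as the paper's, but the execution has a genuine gap that begins with a mischaracterization of $A^\bu$ itself. The canonical multiplet is $A^\bu = C^\infty(N)\otimes R/I$ with differential $\d_0 = \lambda^a D_a$; it contains no weight-two generators $v^\mu$ (those live in $C^\bu(\fn)$ and in the higher sheaves $W^{-k,\bu}$, not in $W^{0,\bu}$). So a derivation is not a pair $(X(\lambda),X(v))$: it is determined by its values on $x$, $\theta$, and $\lambda$, and the correct decomposition of the underlying graded object is
\[
\Der(A^\bu)^\# \;=\; \bigl(\Vect(N)\otimes R/I\bigr)\oplus\bigl(C^\infty(N)\otimes \Der(R/I)\bigr),
\]
with $\Vect(N)=C^\infty(N)\otimes\fn$ written in the left-invariant frame. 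Two identifications then do all the work, and both are missing from your argument: first, $\Der(R/I)\cong\ker\varphi$ (the condition $f^a(\lambda)\gamma^\mu_{ab}\lambda^b=0$ in $R/I$ is literally membership in $\ker\varphi$), which is what kills the cohomology in the middle of the resulting three-term complex $R/I\otimes V \leftarrow R/I\otimes\Sigma \leftarrow \Der(R/I)$; second, the component of $[\d_0,-]$ carrying the $\Sigma$-directions to the $V$-directions is multiplication by $\lambda\gamma$, i.e.\ the map $\varphi$ itself, which is why $\coker\varphi$ sits at the $V$-end. Your route to $\coker\varphi$---``constraints encoded by $\varphi$ and $\hat\gamma$,'' then ``$\coker\varphi^t$, equivalently after dualizing, $\coker\varphi$''---is not valid: $\coker(\varphi^t)\cong I/I^2$ is the conormal module, the object relevant to Theorem~\ref{thm: kaehler} and \S\ref{p: oneforms}, and dualization exchanges kernels with cokernels rather than cokernels with cokernels, so no such equivalence holds.

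The obstacle you flag at the end (whether transferred higher corrections spoil the strict identification, i.e.\ whether $\Der(A^\bu)$ really comes from a strict $H^0(\fn)$-module) is the right thing to worry about, but you leave it open. The paper closes it in one line: the cohomology of the weight-raising piece of $[\d_0,-]$ is concentrated at a single step of the three-step filtration, namely $C^\infty(N)\otimes\coker\varphi$, so the transferred differential is simply the one induced by the remaining terms of $[\d_0,-]$, and these are precisely the internal pure spinor differential of $A^\bu_{R/I}(\coker\varphi)$. Without that observation, and without the correct starting decomposition, the proposal does not establish the theorem.
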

        Note that the support of the sheaf $\coker\varphi$, if $\varphi$ is viewed as a map of $R$-modules, does not necessarily lie within the support of the sheaf $R/I$. This means that the use of coefficients $R/I$ in~\eqref{eq:lambdagamma} is essential. 
        \begin{proof}
            We recall that $\Der A^\bu$ consists of the derivations of the underlying graded commutative algebra $(A^\bu)^\#$, equipped with the differential $[\d_0,-]$. We thus begin by giving an explicit description of the underlying graded Lie algebra $\Der (A^\bu)^\#$. 
            Since $(A^\bu)^\# = C^\infty(N) \otimes R/I$, we have that
\deq{
    \Der (A^\bu)^\# = \left( \Vect(N) \otimes R/I \right) \oplus \left( C^\infty(N) \otimes \Der(R/I) \right).
}
Since $C^\infty(N)$ is a smooth algebra, $\Vect(N) = C^\infty(N) \otimes \fn$; we can use the left-invariant frame adapted to the distribution $D$. 
Derivations of~$R/I$ are given by the subspace of elements 
\deq{
    \left\{    f^a(\lambda) \pdv{ }{\lambda^a} \in R/I \otimes \Sigma : f^a(\lambda) \gamma_{ab}^\mu \lambda^b = 0 \right\},
}
which we identify as the $R/I$-module $\ker(\varphi)$.
We can now present $\Der(A^\bu)$ in the form
\begin{equation}
        		\begin{tikzcd}
        		C^\infty(N) \otimes R/I \otimes V &
                        C^\infty(N) \otimes R/I \otimes \Sigma \arrow[l, "\varphi"'] &  
                        C^\infty(N) \otimes \Der(R/I) \arrow[l, hook'] ,
        		\end{tikzcd}
                        \label{eq:threestep}
        	\end{equation}
                where $\Sigma$ denotes sections of the distribution $D$. It is clear by inspection that the cohomology consists of $C^\infty(N) \otimes \coker\varphi$ at the leftmost end; the identification of $\Der(R/I)$ with $\ker\varphi$ ensures that the cohomology in the middle term vanishes, and the right-hand map is injective.
                The remaining terms in the differential just return the internal differential  on $A^\bu_{R/I}(\coker\varphi)$, making the computation into an equivalence of multiplets.
        \end{proof}

        \subsection{Other natural sheaves on superspace} 
       
        \numpar[p:Kaehler][The module of K\"ahler differentials]
        Having characterized the tangent sheaf of superspace, it makes sense to consider the sheaf $\Omega_{A^\bu/\C}$ of K\"ahler differentials.

    \begin{thm} \label{thm: kaehler}
        There is an equivalence of $\lie{n}$-multiplets
        \deq{ \Omega_{A^\bu/\C} \cong A^\bu_{R/I}(\ker \varphi^t).
            }
        \end{thm}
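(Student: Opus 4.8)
The plan is to imitate the proof of Theorem~\ref{thm: coker}. That theorem builds $\Der(A^\bu)=\Conf(\fn)$ out of $\coker\varphi$; since $\Der(A^\bu)$ is the $A^\bu$-linear dual of $\Omega_{A^\bu/\C}$ and $(\coker\varphi)^\vee=\ker\varphi^t$ (dualize the presentation $\Sigma\otimes R/I\xrightarrow{\varphi}V\otimes R/I\to\coker\varphi\to0$), one expects $\Omega_{A^\bu/\C}\simeq A^\bu_{R/I}(\ker\varphi^t)$, and the task is to make this precise by a direct computation. Recall that $\Omega_{A^\bu/\C}$ is the module of (smooth) K\"ahler differentials of the graded commutative algebra $(A^\bu)^{\#}=C^\infty(N)\otimes R/I$, with differential $\mathcal{L}_{\d_0}$ determined by $\mathcal{L}_{\d_0}\circ\d=\d\circ\d_0$ and the graded Leibniz rule over $A^\bu$. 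First I would record the underlying graded module: K\"ahler differentials of a tensor product split, and the de Rham one-forms of $C^\infty(N)$ form the free module $\Omega^1(N)=C^\infty(N)\otimes\fn^\vee$, so
\[
\Omega_{(A^\bu)^{\#}/\C}\;=\;\bigl(C^\infty(N)\otimes V^\vee\otimes R/I\bigr)\oplus\bigl(C^\infty(N)\otimes\Sigma^\vee\otimes R/I\bigr)\oplus\bigl(C^\infty(N)\otimes\Omega_{R/I}\bigr),
\]
using $\fn^\vee=V^\vee\oplus\Sigma^\vee$ for the first two summands. For the third, I would invoke the conormal presentation of $R/I$ over $R=\C[\lambda]$: since $I$ is generated by the quadrics $\lambda\gamma\lambda$, the conormal sequence reads $V^\vee\otimes R/I\xrightarrow{\varphi^t}\Sigma^\vee\otimes R/I\to\Omega_{R/I}\to0$, with first map the transpose Jacobian $\varphi^t$ of~\eqref{eq:lambdagamma}; hence $\Omega_{R/I}=\coker\varphi^t$, and there is a canonical surjection $q\colon\Sigma^\vee\otimes R/I\to\Omega_{R/I}$ with $\ker q=\operatorname{im}\varphi^t$.

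Next I would compute $\mathcal{L}_{\d_0}$ on generators. From $\d_0x^\mu=-(\lambda\gamma)^\mu_b\theta^b$, $\d_0\theta^a=\lambda^a$ and $\d_0\lambda^a=0$ one finds
\[
\mathcal{L}_{\d_0}(\d x^\mu)=-(\lambda\gamma)^\mu_b\,\d\theta^b-\gamma^\mu_{ab}\theta^b\,\d\lambda^a,\qquad \mathcal{L}_{\d_0}(\d\theta^a)=\d\lambda^a,\qquad \mathcal{L}_{\d_0}(\d\lambda^a)=0,
\]
together with the Leibniz contribution of $\d_0$ acting on $C^\infty(N)$-coefficients. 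Under the decomposition above, the part of $\mathcal{L}_{\d_0}$ that moves between the three summands is exactly the three-term complex of $R/I$-modules
\[
V^\vee\otimes R/I\ \xrightarrow{\ \varphi^t\ }\ \Sigma^\vee\otimes R/I\ \xrightarrow{\ q\ }\ \Omega_{R/I}
\]
(tensored up over $C^\infty(N)$), in which $q\circ\varphi^t=0$ is precisely the conormal relation $(\lambda\gamma)^\mu_b\,\d\lambda^b=0$ in $\Omega_{R/I}$; what remains is the pure spinor differential $\d_0$ on the $C^\infty(N)$-coefficients of each summand, plus one ``long'' cross term $\d x^\mu\mapsto-\gamma^\mu_{ab}\theta^b\,\d\lambda^a$ from the $V^\vee$-summand directly into the $\Omega_{R/I}$-summand.

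The three-term complex is exact at its two right-hand terms, since $q$ is surjective with $\ker q=\operatorname{im}\varphi^t$; its only cohomology is $\ker\varphi^t$, concentrated in the $V^\vee$-term. Exactly as in the proof of Theorem~\ref{thm: coker} (a routine spectral-sequence argument), passing to this cohomology and letting the remaining terms of $\mathcal{L}_{\d_0}$ descend---the ``long'' cross term now maps into a summand that has become acyclic, so only the coefficient differential survives---identifies $\Omega_{A^\bu/\C}$ with $C^\infty(N)\otimes\ker\varphi^t$ carrying the pure spinor differential $\d_0$, i.e.\ with $A^\bu_{R/I}(\ker\varphi^t)$. Since every map involved is $\fg$-equivariant, this is an equivalence of $\fn$-multiplets.

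I expect the main obstacle, beyond the (routine) spectral-sequence bookkeeping, to be the two inputs that are new relative to Theorem~\ref{thm: coker}: identifying $\Omega_{R/I}$ with $\coker\varphi^t$ through the conormal sequence, and recognizing that the term $(\lambda\gamma)^\mu_b\,\d\theta^b$ produced by $\mathcal{L}_{\d_0}$ really is the transpose Jacobian---so that the essential part of the differential is the three-term complex with cohomology $\ker\varphi^t$, and not the cokernel one might expect from a careless identification. A secondary issue is equivariance: the transparent, $\theta$-homogeneous form of the computation uses the coordinate coframe $\d x^\mu$, which is not left-invariant, so one must either check that the contracting homotopy can be chosen $\fg$-equivariantly or else redo the computation in the left-invariant coframe $v^\mu=\d x^\mu+\theta^a\gamma^\mu_{ab}\,\d\theta^b$, where the same answer emerges only after cancellations.
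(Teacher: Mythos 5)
Your proposal is correct and follows essentially the same route as the paper's proof: identify $\Omega_{(R/I)/\C}$ with $\coker\varphi^t$ via the conormal sequence, decompose $\Omega_{A^\bu/\C}$ into the three summands built from $V^\vee$, $\Sigma^\vee$, and $\Omega_{(R/I)/\C}$, observe that the inter-summand part of the differential is the three-term complex $V^\vee\otimes R/I\xrightarrow{\varphi^t}\Sigma^\vee\otimes R/I\twoheadrightarrow\Omega_{(R/I)/\C}$ with cohomology $\ker\varphi^t$ concentrated at the left, and check that the remaining terms descend to the pure spinor differential on $A^\bu_{R/I}(\ker\varphi^t)$. Your extra bookkeeping (the explicit action of $\mathcal{L}_{\d_0}$ on generators, the ``long'' cross term into the acyclic summand, and the coframe/equivariance caveat) is a welcome elaboration of steps the paper leaves implicit, but it is not a different argument.
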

        In analogy to the previous section, we remark that we can think of $\ker\varphi^t$ as the natural sheaf on~$Y$ whose fiber at $Q\in Y$ consists of those constant one-forms on the bosonic spacetime that vanish on all $Q$-exact translations.
        \begin{proof}
            The proof is entirely analogous (or dual) to Theorem~\ref{thm: coker} above. 
            We identify the $R/I$-module of K\"ahler differentials $\Omega_{Y/\C}$ with $\coker(\varphi^t)$, and note that 
            \deq{
                \Omega_{A^\bu/\C} = \left( A^\bu \otimes_\C \fn^\vee \right)\oplus \left( A^\bu \otimes_{R/I} \coker(\varphi^t) \right).
            }
            Taking the differential into account, we can present $\Omega_{A^\bu/\C}$ in the form
            \begin{equation}
		\begin{tikzcd}
                    C^\infty(N) \otimes R/I \otimes V^\vee \arrow[r, "\varphi^t"] &
                    C^\infty(N) \otimes R/I \otimes \Sigma^\vee \arrow[r, two heads] &  
                    C^\infty(N) \otimes \Omega_{(R/I)/\C}.
                    \end{tikzcd}
                    \label{eq:co-threestep}
                \end{equation}
                The cohomology is clearly identified with $A^\bu_{R/I}(\ker \varphi^t)$, and the remaining terms in the differential again return the pure spinor differential. 
        \end{proof}

    \numpar[p: oneforms][One-forms on superspace]
    The attentive reader will have noticed that we have now given \emph{two} plausible definitions of the sheaf of one-forms on superspace. On general grounds, the sheaf of K\"ahler differentials of $A^\bu$ is supposed to play the role of one-forms on~$\Spec A^\bu$. On the other hand, we set up our analogy using the Fr\"olicher spectral sequence for superspace, according to which $W^{-1,\bu} = A^\bu_{R/I}(H^{-1}(\fn))$ is supposed to correspond to the sheaf of structure-preserving one-forms.

    It is natural to ask whether or not these two notions agree, and to look for conditions under which they agree and comparisons between them when they do not. In fact, the relation of the first Koszul homology group to the syzygies of the conormal module has been a subject of great interest in the literature. We will refer primarily to work of Simis and Vasconcelos~\cite{ConormalSyz}, which contains the necessary results for our purposes, but the reader is further referred to references therein.

    For the reader's convenience, we will recapitulate the discussion of the first pages of~\cite{ConormalSyz} here, with notation adapted to our conventions.  
    For brevity, let us write $H$ for the $R/I$-module $\ker(\varphi^t)$; 
    we will further write $Z$ and $B$ for the $R$-modules of cocycles (respectively, coboundaries) of degree $-1$ in~$C^\bu(\fn)$.

    There is an obvious short exact sequence of $R/I$-modules that takes the form
            \begin{equation}
                \begin{tikzcd}
                    0 \ar[r] &  
                    H \ar[r] &  V^\vee \otimes R/I \ar[r, "\varphi^t"] & \im \varphi^t 
                    \ar[r] & 0.
                \end{tikzcd}
                \label{eq: conormal}
            \end{equation}
            To understand the relation of~$H$ to~$H^{-1}(\fn) = Z/B$,  
consider the short exact sequence of $R$-modules arising from~\eqref{eq:gamma-hat}:
            \begin{equation}
                \begin{tikzcd}
                    0 \ar[r] &  Z \ar[r] &  V^\vee \otimes R \ar[r, "\hat\gamma"] & I \ar[r] & 0.
                \end{tikzcd}
                \label{eq: Z-SES}
            \end{equation}
            (By definition, $Z = \ker(\hat\gamma)$.) By tensoring this sequence with $R/I$, we get a four-term exact sequence of $R/I$-modules which takes the form
\begin{equation}
                \begin{tikzcd}
                    0 \ar[r] & \Tor_1^R(I,R/I) \ar[r] &   Z/IZ \ar[r] &  V^\vee \otimes R/I \ar[r, "\varphi^t"] & I/I^2\ar[r] & 0.
                \end{tikzcd}
                \label{eq: tor1}
            \end{equation}
            Comparing~\eqref{eq: tor1} to~\eqref{eq: conormal}, we note that we can identify $\im(\varphi^t)$ with the conormal module $I/I^2$. Furthermore, we can identify $\Tor_1^R(I,R/I)$ explicitly with $\left( Z \cap (V^\vee \otimes I) \right)/IZ$, the intersection taking place inside $V^\vee \otimes R$. Thus $H = Z/ Z \cap (V^\vee \otimes I) $.

            The chain of inclusions $B \subset  Z \cap (V^\vee \otimes I) \subset Z$ (where the first inclusion follows from inspecting the form of the Chevalley--Eilenberg differential) gives rise to another short exact sequence of $R/I$-modules,
\begin{equation}
                \begin{tikzcd}
                    0 \ar[r] & Z \cap (V^\vee \otimes I) / B  \ar[d, equal, "\,\text{def.}"] \ar[r] &  Z/B \ar[r] \ar[d, equal] &  Z/ Z \cap (V^\vee \otimes I) \ar[r] \ar[d, equal] & 0 \\
                    0 \ar[r] & \delta(I) \ar[r] &  H^{-1}(\fn)  \ar[r] &  H \ar[r] & 0, 
                \end{tikzcd}
                \label{eq: compare}
            \end{equation}
            witnessing the comparison between $H = \ker(\varphi^t)$ and the first Koszul homology.
            Simis and Vasconcelos refer to $H$ as the \emph{syzygy part} of the first Koszul homology; they go on to show that $\delta(I)$, as defined in the above diagram, is equivalent to $\ker(\Sym^2(I) \to I^2)$ (with the map being the obvious one defined by multiplication). An ideal is termed \emph{syzygetic} when $\delta(I)$ vanishes; the two notions of one-forms on superspace agree precisely when $I$ is syzygetic.

            For us, given our focus on the Fr\"olicher spectral sequence for superspace, it is most natural to think of $W^{-1,\bu}$ as the sheaf of one-forms on $\Spec A^\bu$. What we have said in this section proves that there is a canonical map 
            \deq{
                W^{-1,\bu} \to \Omega_{A^\bu/\C},
            }
            induced by applying the pure spinor functor to the surjection $H^{-1}(\fn) \to H$ of $R/I$-modules. Under the pure spinor correspondence, the sheaf $W^{-1,\bu}$ corresponds to the first Koszul homology of~$I$; the sheaf of K\"ahler differentials $\Omega_{A^\bu/\C}$ corresponds to the \emph{syzygy part} of the first Koszul homology.

\section{Universal and exceptional cohomology}
\label{s:universal}

\subsection{Superconformal algebras and prolongations}
In this section, 
we will unpack the structure of the local superconformal algebra in a bit more detail. 
We begin by characterizing the zeroth cohomology of~$\Conf(\fn)$; in combination with the results of~\cite{AltomaniSanti}, this shows how 
Nahm's list of superconformal algebras appears in our context.

\numpar[p:H0][Cohomology in degree zero]
We have the following general characterization:
\begin{prop}
Let $\fn$ be a supertranslation algebra. The global sections of $H^0(\Conf (\fn))$ on the corresponding flat model are isomorphic to the maximal transitive prolongation of~$\fn$. When $\fn$ is a physical supertranslation algebra, $H^0(\Conf(\fn))$ on flat superspace reproduces the 
superconformal algebra whenever it exists, and the super Poincar\'e algebra extended by scale transformations in all other instances.
\label{prop:prolong}
\end{prop}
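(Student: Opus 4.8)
The plan is to identify $H^0(\Conf(\fn))$ with a purely algebraic object attached to $\fn$, namely the maximal transitive prolongation, and then to invoke the classification results of Altomani--Santi to pass to the physical statement. I would proceed as follows.

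First I would compute $H^0$ of the three-term complex~\eqref{eq:threestep} presenting $\Der(A^\bu)$, restricted to the flat model. On flat superspace, $\Conf(\fn)$ is concentrated in nonnegative cohomological degrees, and its zeroth cohomology is the kernel of the differential in homological degree zero. Concretely, degree-zero elements of $\Der(A^\bu)$ come from $C^\infty(N) \otimes \fn$ (vector fields on $N$ written in the left-invariant frame, with internal grading coming from $\fn$) together with $C^\infty(N) \otimes \Der(R/I)$ (the ``R-symmetry and scaling'' part), and the closedness condition involves both the de Rham pieces $\d_0$ and the internal differential $\d_1$. The key point is to reorganize this closedness condition by the weight grading: a cohomology class in $H^0$ decomposes as a finite sum of pieces of definite weight, and closedness becomes a tower of compatibility conditions relating the piece in weight $j$ to the piece in weight $j{-}1$ via the bracket in $\fn$. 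This is exactly the recursive structure defining a prolongation: an element of weight $-k$ is a degree-$(-k)$ piece of a graded derivation of $\fn$ (extended to negative weights), and the conditions are precisely transitivity together with compatibility with $\gamma$. I would make this identification explicit, matching the weight-$1$ and weight-$2$ pieces with $\fn_1$ and $\fn_2$, the weight-$0$ piece with $\fg_0$, and the negative-weight pieces with the successive prolongation spaces $\fg_j$ for $j < 0$; the upshot is $H^0(\Conf(\fn)) \cong \bigoplus_{j} \fg_j$, the full (maximal) transitive prolongation of the graded Lie algebra $\fn = \fn_1 \oplus \fn_2$. (Alternatively, one can shortcut this by combining Theorem~\ref{thm: coker}, which presents $\Conf(\fn)$ via $\coker\varphi$, with the observation that $H^0$ of the associated pure spinor multiplet on flat space computes exactly $\Der(R/I)$-type data layered over $\fn$; but the direct filtration argument is cleaner to state.)

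Second, for the physical statement I would simply quote~\cite{AltomaniSanti}: for a physical supertranslation algebra, the maximal transitive prolongation is the superconformal algebra whenever Nahm's classification provides one, and otherwise it terminates at $\fg_0$, so that the prolongation is exactly $\fg = \fg_0 \ltimes \fn$, i.e.\ the super Poincar\'e algebra together with the grading element (scale transformations) inside $\fg_0$. Combining with the first step immediately gives the claim. I would also remark that the comparison to geometric realizations by conformal Killing supervector fields follows by the base-change discussion of~\S\ref{intro:nahm}.

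\textbf{Main obstacle.}
The technical heart is the first step: carefully matching the closedness condition in the complex~\eqref{eq:threestep} with the defining recursion of the Tanaka prolongation, including checking that the $\d_0$ piece of the differential exactly imposes transitivity (no element of positive weight acts trivially on all of $\fn_{-1}$) and that the $\d_1$ piece imposes compatibility with the symbol bracket $\gamma$. One must be careful that ``maximal'' really is reproduced---i.e.\ that the derived model does not truncate the prolongation at some finite stage---which follows because $A^\bu$ is defined on the flat model with no finiteness cutoff, so the recursion runs to its natural (automatically finite, since $\fg_0$-modules involved are finite-dimensional and the tower stabilizes) end. Everything else is either a direct quotation of~\cite{AltomaniSanti} or routine bookkeeping with the two gradings.
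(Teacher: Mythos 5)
Your proposal is correct in outline, but it takes a genuinely different route from the paper on the main step. The paper's proof is essentially a two-citation argument: it first observes that $H^0(\Conf(\fn))$ is, by construction, the Lie algebra of strict infinitesimal symmetries of the superconformal structure (distribution-preserving vector fields), and then invokes \cite{KST}, \S4.2, for the theorem that the infinitesimal symmetries of the flat model coincide with the maximal transitive prolongation of~$\fn$; the appeal to \cite[Theorem 5.1]{AltomaniSanti} for the physical statement is then the same as yours. You instead propose to bypass the intermediate identification with distribution-preserving vector fields and prove the isomorphism with the prolongation directly, by unwinding the cocycle condition in the complex~\eqref{eq:threestep} degree by degree and matching it with the defining recursion of the Tanaka prolongation. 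This is a legitimate and more self-contained strategy---it would make the result independent of the Tanaka-theoretic input from \cite{KST}---but it is also where all the work you have deferred actually lives: you would need to carry out the matching of the $\d_0$- and $\d_1$-components of closedness with transitivity and $\gamma$-compatibility, and also address the passage between smooth-coefficient vector fields on~$N$ and the formal/polynomial realization of the prolongation spaces. The paper's route buys brevity at the cost of an external citation; yours buys self-containedness at the cost of a nontrivial computation you have only sketched.

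One factual slip worth correcting: your parenthetical claim that the prolongation tower is ``automatically finite'' because it stabilizes is false. For $d=1,2$ the maximal transitive prolongation is infinite-dimensional (it is $K(1|\N)$, respectively $K(1|\N_L)\oplus K(1|\N_R)$), as the paper's own citation of \cite{AltomaniSanti} records. This does not damage your argument---the relevant point is only that the derived model imposes no truncation, which is true---but the assertion as written would mislead a reader, and the infinite-dimensional cases are precisely the ones of greatest interest elsewhere in the paper.
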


\begin{proof}
Recall that for any superconformal structure $(M, D)$, $\Conf(\fn)$ is a model for the sheaf of vector fields preserving the distribution $D$. By definition, the zeroth cohomology of its sections on $(M,D)$ describe the infinitesimal symmetries of $(M, D)$.

In~\cite{KST}, the infinitesimal symmetries of an arbitrary superconformal structure are described in terms of the maximal transitive prolongation of the underlying supertranslation algebra $\fn$. In particular, in~\S4.2, it is shown that the infinitesimal symmetries of the flat model $N$ coincide with the maximal transitive prolongation of~$\fn$. 

Altomani and Santi compute the maximal transitive prolongations for the standard list of ordinary supertranslation algebras and find the following results~\cite[Theorem 5.1]{AltomaniSanti}:
\begin{itemize}[label={---}]
	\item In dimensions one  and two, the maximal transitive prolongations of supertranslation algebras are infinite-dimensional and isomorphic to the contact super Lie algebra $K(1|\N)$ and to $K(1|\N_L)\oplus K(1|\N_R)$ respectively.
	\item When a finite-dimensional superconformal algebra exists---in dimensions three and four, as well as for $\cN=1$ in five dimensions and $(\cN,0)$ supersymmetry in six dimensions---the maximal transitive prolongation of the supertranslation algebra is isomorphic to the respective superconformal algebra.
	\item In all other cases, the maximal transitive prolongation is isomorphic to the semidirect product
	\deq{
	\fg = \fg_0 \ltimes \fn
	}
	of the supertranslation algebra and its infinitesimal automorphisms. In particular, in degree zero, one finds Lorentz transformations, $R$-symmetry, and a copy of $\mathfrak{gl}(1)$ acting by conformal weight.
\end{itemize}
The claim follows.
\end{proof}

\subsection{Universal component fields in $\Conf(\fn)$}
In this section we describe certain universal cohomology classes in $\mu \Conf(\lie{n})$. These characterize particular component fields that always appear in the corresponding multiplet. The results follow from a computation of a portion of the cohomology that does not depend sensitively on any details of the structure map $\gamma$. The only assumption we make at this point is that $\gamma$ is surjective.
\numpar[grDer][Gradings]
We organize our computation using a variant of the bigrading on the canonical multiplet $A^\bu$ introduced above in~\S\ref{mu}.
Recall that we write $\lambda$ for a coordinate on the nilpotence variety, $\theta$ for the odd spinor coordinate, and $x$ for the spacetime coordinate, and that all indices are left tacit where possible. We will compute $\Conf(\fn)$ by first understanding derivations of $A^\bu$ without reference to the differential, and then equipping them with the adjoint action of~$\d_0 \in \Der(A^\bu)$ by commutator. Our conventions will allow us to make use of the component-field spectral sequence defined in~\S\ref{mu} to understand $\mu \Conf(\fn)$, as they are compatible with the description given in Theorem~\ref{thm: coker}.

The first grading is by weight, which is identified with polynomial degree in $R$ and which plays the role of the cohomological grading on $ \Conf(\fn)$. The differential $\d_0$ is of weight one.
The other grading is by the $\theta$-degree, but we will place the differential operator $\partial/\partial x$ in degree $+2$. Stated differently, we extend the internal grading on the supertranslation algebra to a grading on vector fields on~$N$ via the left-invariant frame; however, we also assign a nontrivial weight to the odd coordinates $\theta$, while leaving smooth functions of~$x$ in degree zero. Having done this, the differential $\d_0$ is of homogeneous $\theta$-degree $+1$.
We list the bidegrees of specific generators in Table~\ref{tab:moredegs}.

\begin{table}
    \begin{tabular}{c|r|r}
        {Generator} & {Weight} & {$\theta$-degree} \\
        \hline
        $x$ & $0$ & $0$ \\
        $\theta$ & $0$ & $-1$ \\
        $\lambda$ & $1$ & $0$ \\
        \hline
        $\partial/\partial{x}$ & $0$ & $+2$ \\
        $\partial/\partial{\theta}$ & $0$ & $+1$ \\
        $\partial/\partial{\lambda}$ & $-1$ & $0$ 
    \end{tabular}
    \caption{Degree conventions for $\mu \Conf(\fn)$}
    \label{tab:moredegs}
\end{table}

With respect to this bigrading on the cochain complex $\Conf(\fn)$, the Lie bracket respects the weight grading (it is homogeneous of degree zero). With respect to the $\theta$-degree, the Lie bracket decomposes into terms of degrees $0$ and $-2$, where the degree-$(-2k)$ terms are $k$-th-order differential operators in the spacetime coordinates $x$. 

Since $\d_0$ is of homogeneous bidegree $(1,1)$, the differential $[\d_0,-]$ on~$\Conf(\fn)$ contains terms of bidegree $(1,1)$ and $(1,-1)$ that are of zeroth and first order in spacetime derivatives respectively. We denote these by $\d_0^{(1)}$ and~$\d_0^{(-1)}$ respectively.
If we apply the component-field spectral sequence using the conventions adopted above, we first take the cohomology of $\d_0^{(1)}$.
The transferred $D_\infty$ structure on $\mu \Conf(\fn)$ will contain terms of bidegree $(1,1-2k)$ that are spacetime differential operators of order $k$, where now $k \geq 1$. 

\numpar[p:complex][Elements of low degree in $\Conf(\fn)$]
We recall the description of $\Conf(\fn)$ given in 
the proof of Theorem~\ref{thm: coker} above. 
We have that 
\deq{
    \Conf(\fn) = \left( A^\bu \otimes_\C \fn\right)  \oplus \left( A^\bu \otimes_{R/I} \Der(R/I) \right),
}
equipped with the differential $[\d_0,-]$. $\fn$ consists of $V$ in bidegree $(0,2)$ and $\Sigma$ in bidegree $(0,1)$; the piece of the algebra $A^\bu$ in bidegree $(i,-j)$ is $C^\infty(V) \otimes \wedge^j \Sigma^\vee \otimes (R/I)^i$.

We represent the low-lying summands in the bigraded decomposition of $\Conf(\fn)$ explicitly in Table~\ref{tab:confs}. Here, ``low-lying'' means that we depict only summands in bidegree $(i,j)$ with $j-i\geq -1$.
Both gradings are in decreasing order; the vertical grading is the weight, and the horizontal grading is the $\theta$-degree.
We suppress the copy of smooth functions on spacetime that occurs throughout, so that we are interested in the bigraded $C^\infty(V)$-module freely generated by the summands in the table.

\begin{table} 
\[
    \begin{tikzcd}[column sep = 1.4ex]
    & \ul{2} & \ul{1} & \ul{0} & \ul{-1} \\
\ul{0} & V & \Vectorstack{{\Sigma} {(\wedge^1 \Sigma^\vee \otimes V)}} \ar[dl] & 
\Vectorstack{{\Der(R/I)^0} {\wedge^1 \Sigma^\vee \otimes \Sigma} {\wedge^2 \Sigma^\vee \otimes V}} \ar[dl] 
&
\Vectorstack{{\wedge^1 \Sigma^\vee \otimes \Der(R/I)^0} {\wedge^2 \Sigma^\vee \otimes \Sigma} {\wedge^3 \Sigma^\vee \otimes V}} \ar[dl] 
\\
\ul{1} &  (R/I)^1 \otimes V & 
\Vectorstack{{(R/I)^1 \otimes \Sigma} {(R/I)^1 \otimes \wedge^1 \Sigma^\vee \otimes V}} \ar[dl]  
&
\Vectorstack{{\Der(R/I)^1} {(R/I)^1 \otimes \wedge^1 \Sigma^\vee \otimes \Sigma} {(R/I)^1 \otimes \wedge^2 \Sigma^\vee \otimes V}} \ar[dl]
\\
\ul{2} & (R/I)^2 \otimes V 
&
\Vectorstack{{(R/I)^2 \otimes \Sigma}
{(R/I)^2 \otimes \wedge^1 \Sigma^\vee \otimes V}}
\ar[dl]
\\ 
\ul{3} & (R/I)^3 \otimes V
\end{tikzcd}
\]
\caption{Low-lying generators of~$\Conf(\fn)$}
\label{tab:confs}
\end{table}

As described above, we are interested in $\mu \Conf(\fn)$, and so pass to the associated graded of the filtration by order of differential operator described in the previous paragraph. The differential $\d_0^{(1)}$ on the associated graded is commutator with $\d_0$---but where $\partial/\partial{x}$ does not act. (For this reason, the associated graded is referred to as ``zero mode cohomology'' in~\cite{Cederwall}.) The differential has bidegree $(1,1)$, and acts down and to the left in the table. The associated graded thus splits as a direct sum of subcomplexes with fixed values of $j-i$. 

In this section, we compute the low-lying cohomology of $\mu \Conf(\fn)$. The Lie algebra $\fg_0$ of derivations of~$\fn$ will play a role here. Recall that, by definition, there is an injective map 
\deq{
    (\rho_1,\rho_2): \fg_0 \to \End(\Sigma) \oplus \End(V);
}
$\rho_1$, respectively $\rho_2$, thus denote the composition of this map with projection on the first or second factor.

\begin{thm}
    Let $\fn$ be a generalized supertranslation algebra for which the structure map $\gamma: \Sym^2 \Sigma \to V$ is surjective. 
    Then the low-lying bi-graded components of $\mu \Conf(\lie{n})$ have the following universal cohomology:
    \begin{enumerate}
        \item The diagonal with $j-i=2$ consists of smooth vector fields on spacetime in bidegree $(0,2)$.
\item The diagonal with $j-i=1$ is is generated over smooth functions by the derivations $Q = \frac{\del}{\del \theta} + \theta \frac{\del}{\del x}$ in bidegree $(0,1)$.
    \item  The diagonal with $j=i = 0$ is quasi-isomorphic to the two-term complex 
        \[
            \begin{tikzcd}
                \fg_0 \ar[r, "\rho_2"] & \End(V)
            \end{tikzcd}
        \]
        in bidegrees 
        $(0,0)$ 
        and~$(1,1)$. Thus the cohomology in bidegree $(0,0)$ is generated over smooth functions by the $R$-symmetry Lie algebra $\ker(\rho_2)$.
    \end{enumerate}
    \label{thm: univ}
    \begin{proof}
        The computation of the cohomology can be done separately for each diagonal subcomplex, labeled by values of $j-i$ between $0$ and $2$. These computations will be performed in the next paragraphs (\S\S\ref{p:two}--\ref{p:zero}). We will use yet another auxiliary spectral sequence, coming from the three-step filtration (of cochain complexes) by the subspaces
        \deq[eq:filt]{
                0 \subset A^\bu \otimes \fn_2 \subset A^\bu \otimes \fn \subset \Conf(\fn).
            }
            (Compare this presentation to that used above in~\eqref{eq:threestep}.)
        We can see that this is a filtration by noting that the left action of $\d_0$ on $\Der(A^\bu)^\#$ just induces the internal differential on~$A^\bu$, which clearly preserves the filtration. The other terms appear when an element of~$\Der(A^\bu)^\#$ acts on the left on~$\d_0$; these act by sending
        \deq{
            \pdv{ }{\lambda^a} \mapsto D_a, \quad \pdv{ }{\theta^a} \mapsto \lambda^b \gamma^\mu_{ab} \pdv{ }{x^\mu}, \quad
            \pdv{ }{x^\mu} \mapsto 0,
        }
        and thus also preserve the filtration~\eqref{eq:filt}. This spectral sequence will collapse to the cohomology of the associated graded of the filtration defining the component-field spectral sequence.
        The result follows from these computations.
   \end{proof}
\end{thm}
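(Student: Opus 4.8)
The plan is to compute each of the three diagonal subcomplexes of $\mu\Conf(\fn)$ in turn, using the auxiliary three-step filtration from~\eqref{eq:filt}. The key organizing principle is that, after passing to the associated graded for the $\theta$-degree filtration (so that $\del/\del x$ acts trivially on the associated graded, leaving only ``zero-mode'' cohomology), the differential $\d_0^{(1)}$ acts on the presentation in Table~\ref{tab:confs} by moving down and to the left; and the diagonal subcomplexes, indexed by the value of $j-i$, decouple. On each diagonal, the three-step filtration~\eqref{eq:filt} gives a spectral sequence whose first page is computed from the action of $\d_0$ purely \emph{internally} on $A^\bu$ (this is the part of $[\d_0,-]$ that sends $\der{}{\lambda}\mapsto D_a$, etc.\ — but with $\del/\del x$ killed, so really just the Koszul-type piece $\lambda\del/\del\theta$ and the $R/I$-module structure). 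Because $\gamma$ is surjective by hypothesis, the relevant free-module maps $\varphi$, $\varphi^t$ and the Chevalley--Eilenberg differential on generators are as surjective/injective as they can be, and this is exactly what is needed to collapse the spectral sequence.

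First I would handle the diagonal $j-i=2$. By inspection of Table~\ref{tab:confs}, this diagonal of the associated graded consists in low weight of the single summand $V$ in bidegree $(0,2)$, with no incoming or outgoing $\d_0^{(1)}$; tensoring back with $C^\infty(V)$ gives exactly the smooth vector fields $C^\infty(V)\otimes V$ on spacetime, which is claim (1). Next, the diagonal $j-i=1$: the relevant two-term complex reads $\Sigma\oplus(\wedge^1\Sigma^\vee\otimes V)\xrightarrow{\d_0^{(1)}}(R/I)^1\otimes V$ (up to the suppressed $C^\infty(V)$), where the map out of $\wedge^1\Sigma^\vee\otimes V$ is essentially $\varphi$ (the Jacobian of $I$) and the map out of $\Sigma$ records $\der{}{\theta}\mapsto\lambda\gamma\der{}{x}$, which on the associated graded is killed. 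So the cohomology is $\Sigma\oplus\ker\big(\varphi\colon \wedge^1\Sigma^\vee\otimes V\otimes C^\infty \to (R/I)^1\otimes V\big)$... one must check the surviving class is the single derivation $Q=\der{}{\theta}+\theta\der{}{x}$ in bidegree $(0,1)$; this follows because $\ker\varphi$ in this low degree vanishes when $\gamma$ is surjective (the relations among the $\lambda^a\gamma^\mu_{ab}$ in degree one are trivial), and $\Sigma$ in bidegree $(0,1)$ is precisely the left-invariant odd vector fields, i.e.\ the $D_a$, which under the transferred differential combine with the second-order term into the supersymmetry generators $Q_a$. Finally, the diagonal $j=i=0$: from the table this is the three-term piece $\big(\wedge^2\Sigma^\vee\otimes V\big)\oplus\big(\wedge^1\Sigma^\vee\otimes\Sigma\big)\oplus\Der(R/I)^0$ in bidegree $(0,0)$ mapping to $(R/I)^1\otimes\wedge^1\Sigma^\vee\otimes V$-type terms in bidegree $(1,1)$; using the identification $\Der(R/I)^0=\big(\fg_0\cap\{\text{deg-0 derivations}\}\big)$, together with the short exact sequences of~\S\ref{p: oneforms} relating $\ker\varphi$, $\ker\varphi^t$ and the conormal module, this collapses to the two-term complex $\fg_0\xrightarrow{\rho_2}\End(V)$ in bidegrees $(0,0)$ and $(1,1)$. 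The cohomology in bidegree $(0,0)$ is then $\ker\rho_2=\fr$, the $R$-symmetry algebra, which is claim (3).

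The main obstacle — and the only place where real work happens — is the $j=i=0$ computation: one must carefully identify the outgoing differential on $\Der(R/I)^0\oplus(\wedge^1\Sigma^\vee\otimes\Sigma)\oplus(\wedge^2\Sigma^\vee\otimes V)$ and show that, modulo exact terms, it is precisely $\rho_2\colon\fg_0\to\End(V)$ with kernel $\ker\rho_1\cap\ker\rho_2$ contributing nothing extra. Concretely this requires: (i) recognizing that a degree-zero derivation of $R/I$ together with the correction terms valued in $\wedge^1\Sigma^\vee\otimes\Sigma$ assembles into an element of $\fg_0$ acting by $\rho_1$ on $\Sigma$; (ii) tracking how $[\d_0,-]$ sends such an element to its induced action on the generators $v^\mu$ dual to $\fn_2$, which is exactly the $\rho_2$-image landing in $\End(V)$; and (iii) checking the $\wedge^2\Sigma^\vee\otimes V$ summand is exact against the higher-weight terms, again using surjectivity of $\gamma$. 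The bookkeeping of the three interlocking short exact sequences from~\S\ref{p: oneforms} is what makes this delicate; everything on the $j-i=1$ and $j-i=2$ diagonals is essentially immediate once the gradings in Table~\ref{tab:moredegs} are set up. I would present the $\fg_0\xrightarrow{\rho_2}\End(V)$ model as the crux and relegate the rest to the per-paragraph computations promised in the proof (\S\S\ref{p:two}--\ref{p:zero}).
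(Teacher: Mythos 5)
Your overall strategy coincides with the paper's: split $\mu\Conf(\fn)$ into the diagonal subcomplexes indexed by $j-i$, pass to the associated graded on which $\partial/\partial x$ acts trivially, and treat each diagonal separately with the aid of the three-step filtration~\eqref{eq:filt}. Part (1) is fine. But the execution of the other two parts has concrete problems.

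On the diagonal $j-i=1$ you have the two components of $\d_0^{(1)}$ interchanged, and one claim about them is wrong. The map out of $\wedge^1\Sigma^\vee\otimes V$ (spanned by $\theta^a\,\partial/\partial x^\mu$) is not the Jacobian $\varphi$; it is simply $\theta^a\mapsto\lambda^a$, an isomorphism onto $(R/I)^1\otimes V=\Sigma^\vee\otimes V$ because the quadric relations are invisible in polynomial degree one. The map out of $\Sigma$ (spanned by $\partial/\partial\theta^a$) is $\partial/\partial\theta^a\mapsto\lambda^b\gamma^\mu_{ab}\,\partial/\partial x^\mu$, i.e.\ $-\gamma^*$, and it is \emph{not} killed on the associated graded: it raises weight and $\theta$-degree by one each, so it belongs to $\d_0^{(1)}$. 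This matters, because it is exactly this term that forces the surviving representative to be the corrected element $Q=\partial/\partial\theta+\theta\,\partial/\partial x$; the correction happens inside $\d_0^{(1)}$, not via a transferred higher operation as you suggest. Taken literally, your version of the complex would output $\partial/\partial\theta$ alone as the representative, contradicting the statement being proved (and note that in the paper's conventions the $D_a$ are the \emph{right}-invariant fields, $Q_a$ the left-invariant ones).

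On the diagonal $j=i=0$ you correctly identify the shape of the answer, but you do not perform the one computation where the hypothesis on $\gamma$ actually enters. The bottom row of the diagonal, $\wedge^2\Sigma^\vee\otimes V\to(\Sigma^\vee\otimes\Sigma^\vee)\otimes V\to(\Sym^2\Sigma^\vee/\im\gamma^\vee)\otimes V$, is a quotient of the Koszul complex; its cohomology is concentrated in bidegree $(1,1)$ and equals $\im(\gamma^\vee)\otimes V$, which is $V^\vee\otimes V=\End(V)$ precisely because $\gamma$ is surjective. Together with the acyclicity of the middle row $\Sigma^\vee\otimes\Sigma\xrightarrow{\;1\;}\Sigma^\vee\otimes\Sigma$, this is what collapses the diagonal to $\fg_0\xrightarrow{\rho_2}\End(V)$. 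The short exact sequences of~\S\ref{p: oneforms} that you invoke instead are not the relevant tool: they compare $\ker\varphi^t$ with the first Koszul homology $H^{-1}(\fn)$ and govern the two notions of one-forms, not the weight-zero derivations at issue here. As written, the crux of part (3) is asserted and deferred rather than proved.
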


        At the $E_1$ page of our auxiliary spectral sequence, we note that we obtain a direct sum of the component fields of the canonical multiplet, tensored with $\fn$, and the component fields of the multiplet associated to the $R/I$-module $\Der(R/I) = \ker(\varphi)$.

\numpar[p:two][The diagonal at $j-i=2$]
The only term, generated in bidegree $(0,2)$, is the vector space $C^\infty(V) \otimes V$, which we identify with infinitesimal changes of coordinates---in other words, smooth vector fields on $M$.
These are included as ghosts in any theory of conformal supergravity.

\numpar[p:one][The diagonal at $j-i=1$]
The subcomplex on this diagonal takes the form
\[
    \begin{tikzcd}[row sep = 1.5 ex] 
        \ul{(1,2)} & \ul{(0,1)} \\
        & \Sigma \ar[ld, "-\gamma^*"'] \\
        \Sigma^\vee \otimes V & \Sigma^\vee \otimes V, \ar[l, "1"]
    \end{tikzcd}
\]
recalling that $(R/I)^1 = \Sigma^\vee$.
Here, we write $\gamma^*$ for the map $\Sigma \to \Sigma^\vee \otimes V$ obtained by dualizing $\gamma$.

Applying the auxiliary spectral sequence, it is immediate already at $E_1$ that the cohomology is isomorphic to~$\Sigma$ in bidegree $(0,1)$. But it is useful for the intuition to work out representatives explicitly here. 
In coordinates, the differential acting in bidegree $(0,1)$ is given by
\deq{
\pdv{ }{\theta^a } \mapsto \lambda^b \gamma_{ab}^i \pdv{ }{x^i}, 
\qquad
\theta^a \pdv{ }{x^i} \mapsto \lambda^a \pdv{ }{x^i}.
}
The contraction not determined by the structure map of the generalized supertranslation algebra is eliminated completely in cohomology; the left-invariant combination $Q = \frac{\del}{\del \theta} + \theta \frac{\del}{\del x}$ survives and spans the cohomology in bidegree $(0,1)$ as a $C^\infty(V)$-module (at the level of $\mu \Conf(\fn)^\#$).
We recognize this as the space of ghost fields for local supersymmetry in conformal supergravity.
(The appearance of $Q$ in cohomology corresponds to the standard fact that $Q$ and $D$ commute.)

\numpar[p:zero][The diagonal at $j-i=0$]
We analyze the next diagonal, starting in bidegree $(0,0)$. 
Noting that the linear component $\Der(R/I)^0$ consists (almost by definition) precisely of degree-zero automorphisms of~$\fn$, 
we see that it takes the form
\begin{equation}
    \begin{tikzcd}[row sep = 1.5 ex]
        \ul{(2,2)} & \ul{(1,1)} & \ul{(0,0)} \\
        & & \fg^0 \ar[dl, "\lambda D"'] \ar[ddl] \\
    & \Sigma^\vee \otimes \Sigma \ar[dl, "\gamma^*"] & \Sigma^\vee \otimes \Sigma \ar[l] \ar[dl,"-\gamma^*"] \\
    (\Sym^2 \Sigma^\vee / \im \gamma) \otimes V & (\Sigma^\vee \otimes \Sigma^\vee) \otimes V \ar[l] & (\wedge^2 \Sigma^\vee) \otimes V \ar[l]
        \end{tikzcd}
    \label{eq:diag0}
\end{equation}

    The $E_1$ page of our spectral sequence is computed by taking cohomology with respect to the terms in the differential that act horizontally. We first look at the subcomplex on the bottom row of~\eqref{eq:diag0}.
    This is a quotient of the Koszul complex; its cohomology is thus concentrated in bidegree $(1,1)$, and is isomorphic to the subspace spanned by the image of the dual of the structure map $\gamma \colon S^2 \Sigma \to V$:
\beqn
\im(\gamma^\vee) \otimes V \subset \Sigma^\vee \otimes \Sigma^\vee \otimes V .
\eeqn
Under the assumption that $\gamma$ is surjective, this cohomology is simply $\End(V)$. 
The middle row is clearly acyclic, so that the $E_1$ page just reduces to the cochain complex 
\[
    \begin{tikzcd}[row sep = 1.5 ex]
        \ul{(1,1)} & \ul{(0,0)} \\
        \End(V)  & \fg^0 \ar[l, "\rho_2"'] .\\
        \end{tikzcd}
    \]
    By definition, the cohomology in bidegree $(0,0)$ is the $R$-symmetry Lie algebra; we recognize ghost fields for local $R$-symmetry transformations in conformal supergravity. The cohomology in bidegree $(1,1)$ is the cokernel of the Lie algebra of spacetime symmetries inside of~$\End(V)$. 

    We remark that, in standard physical examples, $\fg_0$ acts on the spacetime via rescaling and via Lorentz transformations $\wedge^2(V) \cong \lie{so}(V)$, so that the cohomology in bidegree $(1,1)$ consists of traceless symmetric endomorphisms of the tangent bundle.
    What we have said is enough to ensure that these elements are represented in cohomology by the elements 
    \deq[eq:metreps]{
        f = f^{\mu}_i (x)  \left( \lambda^a \gamma_{ab}^i \theta^b \pdv{ }{x^\mu} \right),
    }
    with $f^{\mu}_i$ traceless and symmetric with respect to the background frame. (We will explain this interpretation further below in~\S\ref{sec: Lie}.) 

\subsection{Universal symmetries in degree zero}  \label{ssec: universal-symm}      
Proposition~\ref{prop:prolong} characterizes the zeroth cohomology of~$\Conf(\fn)$ on flat space as the maximal transitive prolongation of~$\fn$. It follows that $\fg = \fg_0 \ltimes \fn$ always appears as a sub Lie algebra of~$H^0(\Conf(\fn))$. In this section, we identify the 
explicit vector fields that constitute this subalgebra.
 This is the ``universal'' piece of $H^0(\Conf(\fn))$; for the standard supertranslation algebras, $H^0(\Conf(\fn))$ is only larger when an exceptional prolongation (and thus a superconformal algebra) exists.

\numpar[p:vecferm][Odd vector fields]
We have seen that local supersymmetry transformations always appear as component fields, spanned by the left-invariant vector fields
 	\begin{equation}
 		Q_a = \frac{\partial}{\partial \theta^a} + \gamma^\mu_{ab} \theta^b \frac{\partial}{\partial x^\mu}.
 	\end{equation}
in bidegree~$(0,1)$. 
A generic section takes the form $\sigma = \sigma^a Q_a$, where $\sigma^a$ are smooth functions on~$N_2$.
Recalling the decomposition $\d_0^{(1)} + \d_0^{(-1)}$ of the differential from~\S\ref{grDer}, we note that $\d_0^{(1)}(\sigma^a Q_a) = 0$.
The differential sends such a generic section to 
\deq[eq:varsigma]{
[\lambda^b D_b, \sigma^a Q_a] =\d_0^{(-1)}(\sigma^a Q_a) =  \left( \gamma_{bc}^\mu \del_\mu \sigma^a \right)  \lambda^b \theta^c Q_a.
}
There are thus always cohomology classes corresponding to the kernel of this map. These are the \emph{global supersymmetries}, for which $\sigma^a$ is constant.  As we will see in what follows, exceptional cohomology classes arise when sections $\sigma^a Q_a$ that are not annihilated by the differential can be corrected by the $\d_0^{(1)}$-image of additional terms in bidegree $(0,-1)$. (In general, a longer zigzag may be necessary, in the manner of homological perturbation theory.)

\numpar[p:vecbos][Even vector fields]
We begin with even smooth vector fields, which always appear as component fields in bidegree~$(0,2)$. A generic section takes the form $X = X^\nu \del_\nu$, and is sent by the differential to
\deq{
[\lambda^b D_b, X^\nu \del_\nu] = \d_0^{(-1)}(X^\nu \del_\nu) =  \left( \gamma_{bc}^\mu \del_\mu X^\nu \right) \lambda^b \theta^c \del_\nu.
}
Again, there are always cohomology classes corresponding to the kernel of this map. These are \emph{global translations}, for which $X^\nu$ is constant.

If this expression is in the image of $\d_0^{(1)}$, we can find a correction term in bidegree ${(0,0)}$. From Theorem~\ref{thm: univ}, it is immediate that this will happen precisely when $\d_0^{(-1)}(X)$---which can be identified as the Jacobian of~$X$, viewed as an element of~$\lie{gl}(V)$ using the background frame---is in the image of~$\rho_2$. (We note that this is exactly the condition of being a conformal Killing vector field, which is necessary---but not sufficient in general---for a correction to exist.)

Linear-coefficient vector fields map to constant expressions in bidegree~$(1,1)$, so that the corresponding correction term also has constant coefficients and is automatically closed with respect to~$\d_0^{(1)}$.
The fact that $\d_0$ is invariant under~$\fg_0$ in its standard action on the generators of~$A^\bu$
 is enough to ensure that the corresponding linear vector fields 
 are closed and generate the copy of $\fg_0$ in cohomology in bidegree $(0,0)$. Explicitly, given an element $g \in \fg_0$, these are
\deq[eq:g0reps]{
		\rho_2(g)_{\mu}^\nu x^\mu \frac{\partial}{\partial x^\nu} + \rho_1(g)_{a}^{b}\left( \theta^a \frac{\partial}{\partial \theta^b} + \lambda^a \frac{\partial}{\partial \lambda^b}
		 \right).
}
The key computation involved in the zigzag is that
\deq{
\d_0^{(1)}\left( \rho_1(g)_{a}^{b}\left( \theta^a \frac{\partial}{\partial \theta^b} + \lambda^a \frac{\partial}{\partial \lambda^b} \right) \right) =  \rho_2(g)_{\mu}^\nu \left( \lambda^c \gamma^\mu_{cd} \theta^c \right) \frac{\partial}{\partial x^\nu} ,
}
using the fact that $\gamma$ is, by definition, a $\fg_0$-equivariant map.

 \subsection{The superconformal algebra in an example}

Whenever superconformal algebras exist in the traditional sense, the universal part of $H^0(\Conf(\fn))$ is extended. Here,  we study an explicit physical example ($\N=1$ supersymmetry in three dimensions), showing how the zeroth cohomology enlarges to recover the superconformal algebra $\lie{osp}(1|2)$. Upon pushing forward to the smooth supermanifold $N$, we find that the supervector fields spanning the zeroth cohomology match with the known conformal Killing supervector fields in this example~\cite{Park3d}.

The advantage of working in this example is that $\Conf(\fn)$ is of finite rank as a vector bundle, even at the cochain level. It is supported in degree zero in bidegrees between $(0,2)$ and~$(0,-2)$, and in degree one between bidegrees $(1,2)$ and~$(1,-1)$. Above degree one, it is trivial. Thus no further homotopy corrections to cohomology classes in bidegree~$(0,1)$ via additional zigzags are possible. This streamlines the computation, as does the fact that any quadratic expression in $\lambda$ automatically vanishes.

We can furthermore simplify the notation in this example by recalling that $V \cong \Sym^2(S)$ as representations of~$\lie{so}(3)$. So we can replace an abstract vector index by a symmetric pair of abstract spinor indices, and we will do this in what follows. We will also sometimes use the notation $(a \wedge b)$ for the antisymmetric $\lie{so}(3)$ invariant bilinear pairing on~$S$.

In addition to the supervector fields described in~\S\ref{ssec: universal-symm}, one finds the following cohomology classes in degree zero:

\numpar[p:Svecs][Conformal supersymmetries] 
As in the second part of~\S\ref{p:vecbos}, we start with a general element $\sigma^a Q_a$ in bidegree $(0,1)$. Using the full computation of the component fields (which is presented below in~\S\ref{3dN=1}), we observe that the spin-$3/2$ portion of the expression in~\eqref{eq:varsigma} must vanish in order for a correction using $\d_0^{(1)}$ to be possible. Thus $\sigma$ must be in the kernel of the \emph{Penrose operator:} it is a conformal Killing spinor, or \emph{twistor spinor}~\cite{BaumTwistor}.

The solutions to this condition have linear coefficients; they take the form 
$x^{ab}Q_b$.
Applying the differential produces the element 
\deq[eq:to-kill]{
\lambda^a \theta^b Q_b + \theta^a \lambda^b Q_b.
}
This element can be nullhomotoped in unique fashion with respect to~$\d_0^{(1)}$. To see this, note the identity
\deq{
\left[ \lambda^b D_b, \lambda^a \theta^c \pdv{ }{\lambda^c} \right] = \lambda^a \theta^c D_c =  \lambda^a \theta^c \pdv{ }{\theta^c} =  \lambda^a \theta^c Q_c,
}
which accounts for the exactness of the first term in~\eqref{eq:to-kill}. 
To nullhomotope the second term, let $E_\theta$ and $E_\lambda$ denote the corresponding Euler vector fields. Observe that
\begin{equation}
\begin{gathered}
\left[ \lambda^b D_b, \theta^a E_\theta \right] = 
\lambda^a E_\theta - \lambda^b \theta^a [D_b, E_\theta] = \lambda^a E_\theta - \theta^a \lambda^b Q_b, \\
\left[ \lambda^b D_b, \theta^a E_\lambda \right] = \theta^a \lambda^b D_b.
\end{gathered}
\end{equation}
Finally, observe that
\deq{
\left[ \lambda^b D_b, (\lambda\wedge\theta) 
\pdv{ }{ \lambda^a} \right]  = \lambda^c \theta^d \eps_{cd} D_a = 
(\lambda\wedge\theta) \pdv{ }{\theta^a} - 
(\lambda\wedge\theta) \theta^b \del_{ab} = 
(\lambda\wedge\theta) \pdv{ }{\theta^a} + \lambda^b \theta^2 \del_{ab}.
}
Recalling the Fierz identity for $\lie{sl}(2)$, which implies in this case that
\deq{
\lambda^a E_\theta - \eps^{ab} (\lambda\wedge\theta) \pdv{ }{\theta^b} = \theta^a \lambda^b \pdv{ }{\theta^b},
}
we see that
\deq{
 \theta^a E_\lambda - \theta^a E_\theta + \eps^{ab} (\lambda\wedge\theta) \pdv{ }{\lambda^b}
}
furnishes the required nullhomotopy of the second term. Since this term has constant coefficients, it is automatically closed for~$\d_0^{(-1)}$, so that we obtain the expression
\deq[eq:confsups]{
S^a = x^{ab} Q_b  - \theta^a E_\theta  + \lambda^a \theta^c \pdv{ }{\lambda^c} + \theta^a E_\lambda + \eps^{ab} (\lambda\wedge\theta) \pdv{ }{\lambda^b}
}
for the degree-zero cohomology classes corresponding to conformal supercharges.
 
 \numpar[p:Kvecs][Special conformal transformations]
 For the even part, we can find the vector fields corresponding to special conformal transformations either by performing a two-step zigzag starting with the usual expression for a quadratic-coefficient conformal Killing vector field, or by evaluating the Lie bracket of two of the odd supervector fields in~\eqref{eq:confsups}. 
The first correction has linear coefficients, and sits along the generators
\deq{
 \left( \theta^a \frac{\partial}{\partial \theta^b} + \lambda^a \frac{\partial}{\partial \lambda^b}		 \right),
}
just as in~\eqref{eq:g0reps}. 
This is no longer annihilated by~$\d_0^{(-1)}$, but a further correction is possible.
The second correction sits along the generators
\deq{
\lambda^a \theta^2 \pdv{ }{\lambda^b}.
}
We omit the details, which are not especially instructive. In total, we see explicitly how the ordinary superconformal algebra $\mathfrak{osp}(1|2)$ associated to three-dimensional $\cN=1$ supersymmetry reemerges in our model in this example.

\numpar[p:pushforward][Pushing forward to smooth superspace]
There is a map from $A^\bu$ to $C^\infty(N)$, analogous to the map of cdga's from the Dolbeault complex to smooth functions. At the level of the underlying commutative superalgebra, it arises from the quotient by the maximal ideal of~$R/I$. This map is compatible with the differential $\d_0$ on~$A^\bu$.

Since $\Conf(\fn)$ is a sheaf of~$A^\bu$-modules, we can use this map to define base change to~$C^\infty(N)$. We think of this as the comparison map that views a distribution-preserving vector field simply as a smooth vector field on superspace.

Base changing the special conformal generators in~\eqref{eq:confsups}, we find a match with the normal conformal Killing supervector fields, as presented in~\cite[\S3.2]{Park3d}. (Since the bosonic special conformal generators arise as Lie brackets of $S$-type generators, it is sufficient to explicitly match there.)

\section{$G$-structures from superconformal structures}
\label{sec: Lie}

\subsection{The local conformal algebra}
\label{ssec:bosonic}

We begin by presenting a local derived model for deformations of conformal classes of metrics; our discussion follows~\cite[\S 12.6]{CG2}, but we refer also to~\cite{KapranovConf}. 
We remark that some related ideas have also appeared in the recent physics literature~\cite{Grigoriev}. We then go on to present a different local model for conformal structures, which makes use of a frame (vielbein), and to prove that the models are equivalent as $L_\infty$ algebras. In fact, both models are strict, but they are related only by an $L_\infty$ morphism, reflecting the fact that the metric is quadratic in the frame. It is the frame model which is most closely related to viewing a conformal structure as a reduction of structure group; this model therefore appears most naturally in the context of deformations of superconformal structures.

\numpar[p:Lconf][A model for the moduli problem of conformal classes of metrics]
We consider a complex of natural vector bundles, defined on any (pseudo-)Riemannian manifold. For later convenience, we primarily make use of the \emph{inverse} Riemannian metric, denoted $g$; this is an invertible map from $\T^*$ to $\T$, or equivalently a section of  the symmetric square bundle $S^2(\T)$ that is globally of full rank. (Since inverse metrics are in one-to-one correspondence with metrics, our model is equivalent to the one given in~\cite{CG2}, but via an $L_\infty$ quasi-isomorphism with infinitely many corrections, corresponding to the terms in the Taylor series expansion of the formal family $(g + th)^{-1}$.) The complex we use can be presented as follows:
\beqn
\begin{tikzcd}
\ul{0} & \ul{1} \\
\T \ar[r,"L"] & S^2(\T) \\
C^\infty \ar[ur,"g"]  &
\end{tikzcd}
\label{eq:Lconf}
\eeqn
where $L (X) = L_X g$ is the Lie derivative of $g$ along $X$ and the diagonal map is $\lambda \mapsto \lambda g$.

\begin{dfn}
    The \emph{local Lie algebra of conformal classes} $\cL_\Weyl$ is the chain complex of vector bundles~\eqref{eq:Lconf}, equipped with the
    (strict) local Lie algebra structure defined by the brackets:
\begin{itemize}
\item $[X,Y] = L_X Y$ where $X,Y$ are vector fields.
\item $[X,\lambda] = X (\lambda)$ where $\lambda\in C^\infty$.
\item $[X,h] = L_X h$ where $h \in \Gamma(S^2 \T)$.
\item $[\lambda , h] = \lambda h$. 
\end{itemize}
\label{def:Lconf}
\end{dfn}

We comment on the cohomology of the global sections of the local dg Lie algebra.
The cohomology in degree zero consists of pairs $(X,\lambda)$ of a vector field and a smooth function, together satisfying the equation $L_Xg + \lambda g = 0$. This is precisely the conformal Killing vector field equation, so that the cohomology in degree zero consists precisely of the Lie algebra of conformal Killing vector fields on the manifold $(M,g)$.

In degree one, the cohomology consists of symmetric two-index tensor fields, which are perturbations of the (inverse) metric. 
These are considered modulo perturbations of the form $L_X g$---so up to those perturbations induced by diffeomorphisms of~$M$---and furthermore up to perturbations of the form $\lambda g$, arising from Weyl rescalings of the background metric. The cohomology in degree one thus precisely corresponds to the space of deformations of the conformal class $[g]$. As usual in deformation theory, symmetries (in degree zero) and deformations (in degree one) fit together into the same derived moduli problem. 

A conformal field theory is a field theory equipped with an action of $\cL_\Weyl$; the Noether currents associated to this action describe the stress tensor at the level of factorization algebras~\cite[Part III]{CG2}. Classes in~$H^1_\loc(\cL_\Weyl)$ correspond to conformal anomalies~\cite{BCRRanomaly}.

            \numpar[p:Riem][Frames] 
            We work in the context of $n$-dimensional smooth manifolds $M$, which we always assume to be parallelizable for simplicity. Fix, once and for all, an $n$-dimensional vector space $V$.           Recall that a \emph{frame} on $M$ is an isomorphism
           \deq{
               e: \ul{V} \to TM
           }
           of vector bundles. Here $\ul{V}$ denotes the trivial bundle $V \times M$. Equivalently, $e$ is a section of $T \otimes V^\vee$ that is everywhere of full rank. (If $M$ is not parallelizable, global frames do not exist, and one must work locally.)

           Given a frame $e$, the dual frame $e^\vee$ gives a map from $T^* M$ to $\ul{V}^\vee$. Since the frame is invertible, we also have maps 
           \deq{
               e^{-1} : TM \to \ul{V}, \quad (e^{-1})^\vee: \ul{V}^\vee \to T^* M.
           }
           (We will avoid making explicit use of the inverse frame in what follows.)

\numpar[p:gen-frame][Reduction of the structure group] 
Before studying a frame model of conformal structures, we begin by examining a slightly more general question.
We define a local dg Lie algebra that encodes the moduli problem of frames compatible with a reduction of the structure group. Fix the datum of an arbitrary Lie algebra $\fm$ equipped with a map $\varphi$  to~$\lie{gl}(V)$.
 \begin{dfn}
                The \emph{local dg Lie algebra of $\fm$-frames} $\cL_\fm$, defined on any $n$-manifold equipped with a framing $e$ by~$V$, consists of the following graded vector bundle:
                \begin{itemize}
                    \item in degree zero, the direct sum of the tangent bundle $T$ and the trivial bundle $\ul{\fm}$, viewed as a subbundle of the trivial bundle $\End(\ul{V})$;                
                    \item in degree one, the tensor product bundle $T \otimes V^\vee $.
                \end{itemize}
                We use the notation $(X,m)$ for a section of the bundle in degree zero, and $f$ for a section of the bundle in degree one. The Lie bracket on $\cL_\fm$ is defined as follows:
                \begin{itemize}
                    \item Vector fields $X$ act by Lie derivative everywhere, using the trivialization of~$\ul{V}$. (We thus regard $\rho$ as a smooth function valued in~$\lie{so}(V)$ and $f$ as a vector field valued in~$V^\vee$.)
                    \item Local $\fm$ transformations act locally on $f$ as they do on $V^\vee$, via the map $\varphi$, and have the commutators amongst themselves appropriate for smooth functions valued in~$\fm$.
                   \end{itemize}
                The differential on~$\cL_\fm$ is the adjoint action of the degree-one element $e$.
            \end{dfn}
            We interpret $\cL_\fm$ as a description of the formal moduli problem of {perturbations} of the fixed background frame $e$, considered up to transformations arising from the reduction of the structure group from $\lie{gl}(d)$ to~$\fm$. Such a perturbation is a general section $f$ of the bundle $TM \otimes V$, which we think of as determining the (linear) formal family $e + tf$ of frames over the formal disk $\Spec \C[\![t]\!]$.

We briefly remark on a couple of degenerate examples that are instructive for the intuition. When $\fm = 0$, we recover a local dg Lie algebra describing framed manifolds. The degree-zero cohomology consists of those vector fields that preserve the background frame; the degree-one cohomology consists of arbitrary perturbations of the frame, modulo those arising from diffeomorphisms. A frame is thought of as a ``reduction of the structure group to zero'' or an absolute parallelism.

When $\fm = \lie{gl}(d)$ and the map $\varphi$ is the identity, we recover a local dg Lie algebra that is clearly quasi-isomorphic just to smooth vector fields in degree zero. We think of this as a description of the moduli problem of manifolds equipped with a ``reduction of the structure group to $GL(d)$,'' or equivalently with no geometric structure at all: $H^1$ is trivial, so there are no moduli, and all infinitesimal diffeomorphisms appear as symmetries in degree zero.

Lastly, when  $\fm = \lie{gl}(d) \oplus \fh$ for an arbitrary finite-dimensional Lie algebra $\fh$ and the map $\varphi$ is projection on the first factor, we obtain a semidirect product Lie algebra in degree zero, consisting of $\fh$-valued functions as a normal subalgebra and viewed as a module over smooth vector fields acting by Lie derivative. We think of this as modelling the formal moduli problem of $d$-manifolds equipped with principal $H$-bundles, where $\Lie(H) = \fh$.

\numpar[p:frmet][Conformal structures via frames]
Recall that a conformal structure is equivalent to a reduction of the structure group from $GL(n)$ to $O(n) \times \R_+$. 
We thus anticipate that $\cL_\fm$ will describe a model for conformal structures when $\fm = \lie{so}(d) \oplus \fz$, where the second summand denotes the center $\fz \cong \lie{gl}(1) $ of $\lie{gl}(V)$. 

To better understand the connection to metrics,
we equip the dual space of~$V$ explicitly with an inner product $\eta$. In other words, we choose a fixed element $\eta \in S^2(V)$ of maximal rank; this reflects our preference to work with inverse metrics in this setting. This datum makes $V$ into a local model for a Riemannian manifold, and identifies an appropriate subalgebra $\fm(\eta)$ of $\lie{gl}(V)$ as identified above.  We will write $\cL_\conf$ for the corresponding local dg Lie algebra $\cL_{\fm(\eta)}$ of $\fm(\eta)$-frames.

           Given $\eta$ together with a $V$-frame on $M$, we can equip $M$ with an (inverse) Riemannian metric. We regard this datum as equivalent to an invertible, self-dual map
           \deq{
               \sharp : T^*M \to TM. 
           }
           By abuse of notation, $\eta$ is an invertible, self-dual map from $V^\vee$ to $V$. A frame then determines a metric via the rule
           \deq[eq:esharp]{
               \sharp = e \circ \eta \circ e^\vee.
           }
           The resulting bundle map is clearly invertible and self-dual. We will not indicate composition explicitly in the following.

            \numpar[p:comparison][Comparing the frame model to $\cL_\Weyl$]
The formal family of frames in~\S\ref{p:gen-frame} determines a (quadratic) formal family of metrics according to the rule~\eqref{eq:esharp}; the family takes the form
           \deq[eq:sharpT]{
               \sharp_t = \sharp(e+tf) = e    \eta    e^\vee + t \left( f    \eta    e^\vee + e    \eta    f^\vee \right) + t^2 \left( f    \eta    f^\vee \right).
           }
This gives us a candidate map from $\cL_\conf$ to $\cL_\Weyl$, which we will verify is an equivalence.

            \begin{prop}
          The cochain map
                \deq{
                \phi^{(1)}: 
                X \mapsto X,  \quad 
                \lambda \mapsto 2\lambda, \quad 
                 \rho \mapsto 0 , \quad 
                 f \mapsto f \eta e^\vee + e \eta f^\vee,
                } 
                and the quadratic correction 
                \deq{
                \phi^{(2)}: (\cL_\conf)^{\otimes 2} \to \cL_\Weyl[1], 
                \quad
                (f_1,f_2) \mapsto \frac{1}{2}
                \left(f_1 \eta f_2^\vee + f_2 \eta f_1^\vee \right).
                }
                together define an $L_\infty$ quasi-isomorphism
                \deq{
                    \phi: \cL_\conf \leadsto \cL_\Weyl,
                }
                witnessing the equivalence of the moduli problems $\cL_\conf$ and~$\cL_\Weyl$.
            \end{prop}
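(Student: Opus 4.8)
The plan is to recognise $\phi$ as the $L_\infty$-theoretic shadow of the honest assignment $e' \mapsto \sharp(e') = e'\,\eta\,(e')^\vee$ that sends a frame to its inverse metric. This is a diffeomorphism-equivariant map of affine spaces which is \emph{exactly quadratic} in the frame, under which an $\lie{so}(V)$-valued infinitesimal rotation of the frame acts trivially on the metric while an $\fz$-valued rescaling acts by twice the corresponding infinitesimal Weyl transformation; the maps $\phi^{(1)}$ and $\phi^{(2)}$ are nothing but its linear and quadratic Taylor coefficients about the fixed background frame $e$ (cf.~\eqref{eq:esharp}--\eqref{eq:sharpT}). Since $\sharp$ has no higher coefficients, I would verify that $\phi = \phi^{(1)} + \phi^{(2)}$, extended by $\phi^{(n)} = 0$ for $n \geq 3$, satisfies the $L_\infty$-morphism equations, and then that $\phi^{(1)}$ is already a quasi-isomorphism of the underlying complexes.

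Because $\cL_\conf = \cL_{\fm(\eta)}$ and $\cL_\Weyl$ are both \emph{strict} and concentrated in cohomological degrees zero and one, and $\phi$ has only two components, the $L_\infty$ relations are vacuous in all arities $\geq 4$ and in all but a short list of input-degree combinations. First I would check that $\phi^{(1)}$ is a cochain map: the only nonvacuous case is on a degree-zero element $(X,\rho,\lambda)$ with $\rho \in \lie{so}(V)$, $\lambda \in \fz$, where $d_\conf = [e,-]$. Applying $\phi^{(1)}$, the vector-field part yields $L_X(e\,\eta\,e^\vee) = L_X g$ by the Leibniz rule (and constancy of $\eta$ in the trivialisation); the $\lie{so}(V)$-part produces $e\,(\rho\eta + \eta\rho^\vee)\,e^\vee = 0$ by the $\eta$-skewness of $\fm(\eta)$, consistent with $\phi^{(1)}(\rho) = 0$; and the $\fz$-part hits \emph{both} frame factors of $g = e\,\eta\,e^\vee$, producing $2\lambda g$, consistent with $\phi^{(1)}(\lambda) = 2\lambda$. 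Next I would check that $\phi^{(1)}$ restricted to degree zero is a strict Lie homomorphism onto the degree-zero part $\Vect \ltimes C^\infty$ of $\cL_\Weyl$: here the degree-zero bracket of $\cL_\conf$ is that of $\Vect \ltimes C^\infty\!\bigl(\lie{so}(V) \oplus \fz\bigr)$, is independent of $e$, closes on $\lie{so}(V)$, and is central on $\fz$, so killing $\rho$ and doubling $\lambda$ respects brackets. Finally, the arity-two and arity-three relations involving $\phi^{(2)}$ are nonvacuous only when the inputs comprise one degree-zero element together with one or two frame perturbations $f$; these relations record that the failure of $\phi^{(1)}$ to be strict --- which comes precisely from the $e$-dependence of $\phi^{(1)}$ --- is cancelled by the $\phi^{(2)}$-terms. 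For a vector field the content is the identity
\[
L_X\!\bigl(f\,\eta\,e^\vee + e\,\eta\,f^\vee\bigr) \;=\; \phi^{(1)}(L_X f) \;+\; 2\,\phi^{(2)}(f,\,L_X e)
\]
together with the Leibniz rule $L_X\phi^{(2)}(f_1,f_2) = \phi^{(2)}(L_X f_1,f_2) + \phi^{(2)}(f_1,L_X f_2)$; for an $\lie{so}(V)$-input it reduces to $\rho\eta + \eta\rho^\vee = 0$ (the relevant terms having no image in $\cL_\Weyl$); for an $\fz$-input it reduces to the homogeneity of $\sharp$, which is what produces the factor of two. Everything else lands in the zero bundle.

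To finish, it suffices that $\phi^{(1)}$ be a quasi-isomorphism. It is surjective in each degree: the identity on $\T$; the isomorphism $\lambda \mapsto 2\lambda$ on the $\fz$-summand onto $C^\infty$; and, after trivialising with the background frame so that $\T \otimes V^\vee \cong \ul{\End(V)}$ and $S^2\T \cong \ul{S^2 V}$, the $\eta$-symmetrisation $a \mapsto a\eta + \eta a^\vee$, which is onto $S^2 V$. Its kernel is the subcomplex with $\ul{\lie{so}(V)}$ (the $\lie{so}(V)$-valued functions) in degree zero, the $\eta$-skew part $\ul{\lie{so}(V)} \subset \ul{\End(V)}$ in degree one, and differential the restriction of $d_\conf$, namely $\rho \mapsto \pm\,\rho\cdot e$, which under the $e$-trivialisation is exactly the inclusion $\lie{so}(V) \hookrightarrow \End(V)$ and hence an isomorphism onto the degree-one kernel. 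Thus $\ker\phi^{(1)}$ is acyclic, the degreewise-surjective chain map $\phi^{(1)}$ is a quasi-isomorphism, and therefore $\phi$ is an $L_\infty$ quasi-isomorphism. (Equivalently, one may argue on cohomology directly: $\phi^{(1)}$ sends a degree-zero class to a conformal Killing vector field for $g$ and a frame perturbation to the induced conformal-class deformation, inducing the identifications of the cohomologies recorded after Definition~\ref{def:Lconf}.)

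The main obstacle is bookkeeping rather than anything conceptual: one must fix a single consistent set of sign conventions --- for the internal differentials $[e,-]$ and $[g,-]$, for the $\fm(\eta)$-action on $V^\vee$ through $\varphi$, and for the $L_\infty$-morphism equations themselves --- so that the factor of two in $\phi^{(1)}$ and the coefficient $\tfrac12$ in $\phi^{(2)}$ emerge correctly, and one must see from the start that only the finitely many relations enumerated above carry content, each of which then collapses to the Leibniz rule, the $\lie{so}(V)$-invariance of $\fm(\eta)$, or the homogeneity of $\sharp$.
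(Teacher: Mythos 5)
Your proposal is correct and follows essentially the same route as the paper: verify the single nontrivial $L_\infty$ relation (the one pairing a vector field with a frame perturbation, which collapses to the Leibniz rule), and then observe that $\phi^{(1)}$ is a quasi-isomorphism because, after trivializing with the background frame, the discrepancy between the two complexes is the exact sequence $\lie{so}(V) \to V \otimes V \to S^2 V$. Your write-up is more explicit than the paper's (which dismisses the $\rho$- and $\lambda$-input relations and the Lie-homomorphism check in degree zero as immediate), but the content is the same.
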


\begin{proof}
The nontrivial $L_\infty$-relation that needs to be checked is the following
\beqn
\phi^{(1)}([X,f]) \overset{?}{=} [X, \phi^{(1)}(f)] + \phi^{(2)} ([e,X], f) .
\eeqn
The left hand side is
\beqn
[X,f] \eta e^\vee + e \eta [X,f]^\vee .
\eeqn
The right hand side is
\beqn
[X, f \eta e^\vee + e \eta f^\vee] + [e,X] \eta f^\vee + f \eta [e,X]^\vee .
\eeqn
Equality follows from the Leibniz rule.

Now, note that the degree one part of $\cL_{\conf}$ is isomorphic to the trivial bundle $V \otimes V^* \cong V \otimes V$ and the degree one part of $\cL_{\Weyl}$ is isomorphic to the trivial bundle $S^2 V$.
This implies that the cone of the cochain map $\phi^{(1)}$ of vector bundles is the complex of vector bundles
\beqn
\wedge^2 V = \lie{so}(V) \to V \otimes V \to S^2 V .
\eeqn
This complex is exact, and the result follows.
\end{proof}

            \subsection{Superconformal structures and reductions of the structure group}
            
            \numpar[p:lem-gradings][An ideal in $\Conf(\fn)_+$]
            Recall our grading conventions for~$\Conf(\fn)$ as outlined in~\S\ref{grDer} and~\S\ref{sec:grading}; the weights of generators are summarized in Table~\ref{tab:moredegs}. The weight is bounded from below by zero, and determines the homological degree on the super dg Lie algebra $\Conf(\fn)$; the $\theta$-degree is bounded from above by $+2$, and determines the total parity modulo two. The intrinsic parity (the $\Z/2\Z$ grading on the super dg Lie algebra $\Conf(\fn)$) is  thus determined by the totalized degree---the weight minus the $\theta$-degree---modulo two. The totalized degree is bounded from below by $-2$.
            \begin{lem}
            \label{lem:ideal}
            Consider the sub dg Lie algebra $\Conf(\fn)_+ \subset \Conf(\fn)$ consisting of elements of even parity. The sub dg Lie algebra $I$ spanned by all elements of $\Conf(\fn)_+$ with strictly positive totalized degree is an ideal in $\Conf(\fn)_+$.
            \begin{proof}
            Since totalized degree determines intrinsic parity, $\Conf(\fn)_+$ is spanned by all summands of even totalized degree, and $I$ is spanned by all summands with totalized degree $\geq 2$. The Lie bracket has terms of totalized degree $0$ and $+2$ that are zeroth-order and first-order differential operators, respectively.
            It is obvious that bracketing with terms of totalized degree zero preserves $I$. It remains only to check that this is true for terms of totalized degree $-2$. But since these summands are just vector fields on spacetime, they can only participate in the Lie bracket via first-order differential operators, thus via terms of totalized degree $+2$.
             Since the differential on~$\Conf(\fn)$ is given by the adjoint action of~$\d_0$, which is an element of totalized degree zero, $I$ is a differential ideal.
            \end{proof}
            \end{lem}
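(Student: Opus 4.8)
The plan is to argue entirely with the three compatible gradings recalled in \S\ref{grDer}: the weight (bounded below by $0$, and equal to the cohomological degree), the $\theta$-degree (bounded above by $+2$), and the totalized degree $=$ weight $-$ $\theta$-degree (bounded below by $-2$), whose reduction modulo two is the intrinsic $\Z/2\Z$-grading. The first move is pure bookkeeping: since parity is totalized degree mod $2$, the summands of $\Conf(\fn)_+$ are exactly those of even totalized degree, and $I$ is spanned by the summands of totalized degree $\geq 2$. From \S\ref{grDer}, the bracket on $\Conf(\fn)$ is weight-homogeneous of degree $0$ and, according to $\theta$-degree, splits into a piece preserving $\theta$-degree (zeroth order in the spacetime derivatives) and a piece lowering it by $2$ (first order); translated into totalized degree, $[-,-]$ therefore has a component of totalized degree $0$ and a component of totalized degree $+2$. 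The same analysis applies to $\d=[\d_0,-]$, since $\d_0$ has totalized degree $0$ (and even parity).

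With this, the ideal property $[\Conf(\fn)_+,I]\subseteq I$ is almost formal. If $a\in I$ has totalized degree $p\geq 2$ and $c\in\Conf(\fn)_+$ has totalized degree $q\geq -2$, then $[a,c]$ has components of totalized degrees $p+q$ and $p+q+2$; the latter is always $\geq 2$, hence in $I$, and the former lies in $I$ unless $p+q=0$, which forces $p=2$ and $q=-2$. So the whole lemma reduces to showing that the totalized-degree-$0$ (i.e.\ zeroth-order) component of the bracket kills any summand of totalized degree $-2$. But such a summand must have weight $0$ and $\theta$-degree $+2$, and inspecting the bigraded decomposition of $\Conf(\fn)$ (Table~\ref{tab:confs}, cf.\ Theorem~\ref{thm: univ}(1) and \S\ref{p:two}) this is precisely $C^\infty(V)\otimes V$, the smooth vector fields $f^\mu(x)\,\pdv{ }{x^\mu}$ on spacetime. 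Such a derivation acts on $A^\bu$ only through the operators $\pdv{ }{x^\mu}$, so in any bracket it participates in, it can only contribute a term in which a spacetime derivative is genuinely applied---a first-order term, which by the grading count is exactly the totalized-degree-$+2$ component. Hence the totalized-degree-$0$ part of $[a,c]$ vanishes whenever $c$ is a spacetime vector field, proving $[\Conf(\fn)_+,I]\subseteq I$ (in particular $[I,I]\subseteq I$, so $I$ is a sub dg Lie algebra).

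Finally, since $\d_0$ has totalized degree $0$ and even parity, $\d=[\d_0,-]$ sends a summand of totalized degree $\geq 2$ into totalized degrees $\geq 2$ and preserves $\Conf(\fn)_+$, so $\d I\subseteq I$ and $I$ is a differential ideal. The one genuinely non-formal input is the identification of the totalized-degree-$(-2)$ part with the spacetime vector fields, combined with the observation that these enter the bracket only at first order; I expect this to be the crux, while everything else is degree bookkeeping. An alternative route for that step would be to start from the presentation $\Conf(\fn)=(A^\bu\otimes_\C\fn)\oplus(A^\bu\otimes_{R/I}\Der(R/I))$ of Theorem~\ref{thm: coker} and check on generators that the $\fn_2$-summand contributes to the bracket only via $\pdv{ }{x^\mu}$, but the grading argument is cleaner and avoids any case analysis.
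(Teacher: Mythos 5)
Your proof is correct and follows essentially the same route as the paper's: the same bookkeeping in totalized degree, the same reduction to the single problematic case of bracketing $I$ against the totalized-degree-$(-2)$ summand, and the same resolution by identifying that summand with spacetime vector fields that can only enter the bracket through the first-order (totalized-degree-$+2$) component. The only difference is presentational — you make the $p+q$ degree count explicit where the paper states it as "obvious" — so nothing further is needed.
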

            We remark that the lemma is \emph{not} true for the full super dg Lie algebra $\Conf(\fn)$, which in general contains no obvious ideal. It fails due to summands in totalized degree $-1$ (corresponding to local supersymmetries), which can participate in the Lie bracket via terms of totalized degree zero.

            \numpar[p:boson-comp][$\fg_0$-frames from $\Conf(\fn)$]
            We construct a map from the bosonic part of the formal super moduli problem of superconformal structures to the formal moduli problem of $G_0$-structures.
\begin{thm}
Let $\fn$ be a supertranslation algebra for which the bracket map $\gamma$ is surjective, and $\fg_0$ the Lie algebra of degree-zero derivations of~$\fn$. 
 There is a strict map of dg Lie algebras from $\Conf(\fn)_+$ to $\Conf(\fn)_+/I$. The dg Lie algebra $\cL_{\fg_0}$ of $\fg_0$-frames is a quasi-isomorphic sub dg Lie algebra of $\Conf(\fn)_+/I$.   
  \label{thm:s/conf}
    \begin{proof}
The first statement is an obvious consequence of Lemma~\ref{lem:ideal}. To prove the rest, we construct a strict map of dg Lie algebras
\deq{
 \phi: \cL_{\fg_0} \to \Conf(\fn)_+/I .
 }
To construct this map, observe that there exists a copy of $\lie{gl}(\Sigma)$ in bidegree $(0,0)$ in $\Conf(\fn)_+$, spanned by the linear vector fields
\deq{
\lambda \pdv{ }{\lambda} + \theta \pdv{ }{\theta}.
}
Using the map $\rho_1$, we can map $\fg_0$ to the corresponding subalgebra of this copy of $\lie{gl}(\Sigma) \subset \Conf(\fn)_+$.
 It is also clear that we can map the smooth vector fields in $\cL_{\fg_0}$ to $\Conf(\fn)_+$ along the representatives $\partial/\partial x$ in bidegree $(0,2)$, 
Lastly, using the map $\gamma^\vee$, we also find a copy of $V^\vee$ sitting $\fg_0$-equivariantly inside of $\Sigma^\vee \otimes \Sigma^\vee$, so that we can map sections of ${V}^\vee \otimes T$ to $\Conf(\fn)_+$ along the representatives $(\lambda \gamma \theta) \cdot \partial/\partial x$ in bidegree $(1,1)$.
 
It is straightforward to check that bracketing with vector fields reproduces the appropriate Lie bracket in~$\cL_{\fg_0}$, and that the brackets of the representatives of local $\fg_0$ transformations are as they should be. The bracket of these representatives with the representatives of a frame perturbation $f \in \Gamma(V^\vee \otimes T)$ has two terms, one consisting of the obvious action of $\fg_0$-valued functions on the frame in bidegree $(1,1)$, the other containing the derivative of the $\fg_0$-valued function along the frame in bidegree $(1,-1)$. But the latter is contained in $I$, so that the bracket in $\Conf(\fn)_+/I$ again reproduces that in $\cL_{\fg_0}$. Similarly, the bracket of two frame representatives only contains terms that are of order one in spacetime derivatives; these land in bidegree $(2,0)$, and so are also contained in~$I$.

Since the differential is just the adjoint action of~$\d_0$, it follows that $\phi$ is a strict map of dg Lie algebras.  It remains to check that this map is an isomorphism, but this follows straightforwardly from the computation of the cohomology of $\Conf(\fn)$ in these degrees as presented in Theorem~\ref{thm: univ}.
    \end{proof}
\end{thm}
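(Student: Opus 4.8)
The first assertion is immediate from Lemma~\ref{lem:ideal}: since $I$ is a differential ideal in $\Conf(\fn)_+$, the quotient projection $\Conf(\fn)_+\to\Conf(\fn)_+/I$ is a strict morphism of dg Lie algebras. The real content is the construction of a strict, injective, quasi-isomorphic embedding $\phi\colon\cL_{\fg_0}\hookrightarrow\Conf(\fn)_+/I$, and the plan is to assemble $\phi$ out of the explicit cohomology representatives already produced in \S\ref{ssec: universal-symm} and \S\ref{p:zero}. In homological degree zero, $\cL_{\fg_0}=T\oplus\ul{\fg_0}$: I would send a vector field to the representative $\partial/\partial x$ in bidegree $(0,2)$, and a section $m$ of $\ul{\fg_0}$ to the linear vector field $\rho_1(m)^b_a\bigl(\theta^a\,\partial/\partial\theta^b+\lambda^a\,\partial/\partial\lambda^b\bigr)\in\Der(R/I)^0$ in bidegree $(0,0)$, as in~\eqref{eq:g0reps}. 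In homological degree one, $\cL_{\fg_0}=T\otimes V^\vee$: using the transpose $\gamma^\vee\colon V^\vee\hookrightarrow\Sym^2(\Sigma^\vee)$, I would send $f$ to the representative $(\lambda\gamma\theta)\cdot\partial/\partial x$ in bidegree $(1,1)$, which is precisely the class identified in~\eqref{eq:metreps}. The hypothesis that $\gamma$ is surjective enters here twice and essentially: it makes $\gamma^\vee$ injective, so the degree-one component of $\phi$ is injective; and, since $V$ is then generated by $[\Sigma,\Sigma]$, every degree-zero derivation of $\fn$ is determined by its restriction to $\Sigma$, so $\rho_1$ is injective and the degree-zero component is injective as well. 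As the three families of representatives occupy distinct bidegrees, $\phi$ is injective.

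Next I would check that $\phi$ is a chain map. On both sides the differential is the adjoint action of a totalized-degree-zero element---$\d_0$ on the target, the background frame on $\cL_{\fg_0}$---so this reduces to a computation on the images of the generators, and it has essentially been done already: the identity $\d_0^{(1)}\bigl(\rho_1(g)(\theta\partial_\theta+\lambda\partial_\lambda)\bigr)=\rho_2(g)\,(\lambda\gamma\theta)\,\partial_x$ from \S\ref{ssec: universal-symm}, the analysis of the diagonal $j-i=0$ in \S\ref{p:zero}, and the fact that the bidegree-$(0,2)$ and bidegree-$(1,1)$ representatives are $\d_0^{(1)}$-closed modulo $I$ together show that $\phi$ intertwines the two differentials after passage to the quotient by $I$.

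The step I expect to be the crux---and, in a sense, the reason the ideal $I$ was isolated in the first place---is the verification that $\phi$ is bracket-preserving. Brackets among the vector-field representatives, and the action of vector fields on the other representatives, reproduce Lie derivatives by inspection, and the $\rho_1(\fg_0)$ representatives bracket among themselves as $\fg_0$ does. The two computations that could obstruct strictness are: the bracket of an $\fg_0$-representative with a frame representative, which produces the correct action of an $\fg_0$-valued function on the frame in bidegree $(1,1)$ \emph{together with} an unwanted term---the derivative of the $\fg_0$-valued function along the frame---in bidegree $(1,-1)$; and the bracket of two frame representatives, which is a first-order spacetime differential operator landing in bidegree $(2,0)$. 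Both unwanted terms have strictly positive totalized degree, hence lie in $I$ and vanish in $\Conf(\fn)_+/I$---exactly the mechanism of Lemma~\ref{lem:ideal}---so $\phi$ is a strict morphism of dg Lie algebras, requiring no higher corrections. I would carry out these bracket computations inside the presentation~\eqref{eq:threestep}, whose filtration~\eqref{eq:filt} is visibly $\phi$-compatible.

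Finally, that $\phi$ is a quasi-isomorphism I would deduce by comparing cohomology sheaves degree by degree, which is exactly the content of Theorem~\ref{thm: univ}. In the relevant totalized-degree range that theorem identifies $\Conf(\fn)_+/I$, up to quasi-isomorphism, with the two-term complex of vector bundles obtained by placing the tangent bundle $T$ in bidegree $(0,2)$, the trivial bundle $\ul{\fg_0}$ in bidegree $(0,0)$, and $\End(\ul V)$ in bidegree $(1,1)$, with differential $\rho_2$ on $\ul{\fg_0}$ and the first-order operator $X\mapsto L_X e$ on $T$. After trivializing $T$ by the background frame this is precisely the complex underlying $\cL_{\fg_0}$, so the cohomology sheaves agree in every degree; since $\phi$ is a strict injective chain map inducing these isomorphisms, it is the asserted quasi-isomorphic embedding. (Equivalently, the part of $\Conf(\fn)_+/I$ not in the image of $\phi$ consists of the higher-weight portions of the diagonal subcomplexes $j-i=0$ and $j-i=2$, which are acyclic by the computations of \S\S\ref{p:two}--\ref{p:zero}.) Specializing to the standard examples and discarding the $R$-symmetry summand $\ker\rho_2$ then yields Corollary~\ref{cor:s/conf}.
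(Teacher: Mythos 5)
Your proposal is correct and follows essentially the same route as the paper: the same three families of representatives ($\partial/\partial x$ in bidegree $(0,2)$, $\rho_1(g)(\theta\partial_\theta+\lambda\partial_\lambda)$ in bidegree $(0,0)$, and $(\lambda\gamma\theta)\cdot\partial/\partial x$ in bidegree $(1,1)$), the same observation that the two obstructing bracket terms land in the ideal $I$, and the same appeal to Theorem~\ref{thm: univ} for the quasi-isomorphism. Your explicit remark that surjectivity of $\gamma$ forces injectivity of both $\gamma^\vee$ and $\rho_1$ is a useful detail the paper leaves implicit in asserting that $\cL_{\fg_0}$ is a \emph{sub} dg Lie algebra.
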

In words, there is a map from the bosonic part of the formal super moduli problem of superconformal structures based on $\fn$ to the formal moduli problem of manifolds equipped with a reduction of structure group to the automorphisms of~$\fn$.

\numpar[p:cor][Conformal structures from frames]
It is clear that, in the case where $\rho_2(\lie{g}_0) = \lie{so}(d) \oplus \fz$, we have a map of formal moduli problems from $\cL_{\fg_0}$ to~$\cL_\conf$. Putting this together, we have the following:
\begin{cor}
Let $\fn$ be a supertranslation algebra for which the bracket map $\gamma$ is surjective and 
$\rho_2(\fg_0) = \lie{so}(d) \oplus \fz$. 
Then there is a strict map of dg Lie algebras from  $\Conf(\fn)_+$ to $\cL_\conf$.
\label{cor:s/conf}
\end{cor}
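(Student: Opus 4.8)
The plan is to factor the desired map through the quotient constructed in Theorem~\ref{thm:s/conf}. That theorem gives a strict surjection of dg Lie algebras $\Conf(\fn)_+\twoheadrightarrow\Conf(\fn)_+/I$; what remains is to produce a strict surjection $\Conf(\fn)_+/I\twoheadrightarrow\cL_\conf$ and compose. Recall that $\cL_\conf=\cL_{\fm(\eta)}$ with $\fm(\eta)=\lie{so}(\eta)\oplus\fz$, and that the hypothesis $\rho_2(\fg_0)=\lie{so}(d)\oplus\fz$ lets $\rho_2$ identify $\fg_0/\ker\rho_2$ with $\fm(\eta)$; so the second map may be thought of as passing through $\cL_{\fg_0}$, but it is cleanest to give it in one step. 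The structural input is that, by the proof of Theorem~\ref{thm:s/conf} and the cohomology computation of Theorem~\ref{thm: univ}, $\Conf(\fn)_+/I$ is concentrated in weights $0,1,2$, and its low-weight content beyond the ``$\theta$-dependent'' summands (those carrying a positive power of the odd coordinates) is exactly a copy of $\cL_{\fg_0}$: smooth vector fields in bidegree $(0,2)$, the copy of $\lie{gl}(\Sigma)$ spanned by $\lambda\,\partial/\partial\lambda+\theta\,\partial/\partial\theta$ (containing $\rho_1(\fg_0)$) in bidegree $(0,0)$, and the frame-perturbation representatives $(\lambda\gamma\theta)\cdot\partial/\partial x$ in bidegree $(1,1)$.

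I would then define $\pi\colon\Conf(\fn)_+/I\to\cL_\conf$ by: the identity on the smooth vector fields; $\rho_2$ on the copy of $\rho_1(\fg_0)$ (so that the R-symmetry $\ker\rho_2$ is sent to zero); contraction against the left inverse of $\gamma^\vee$ on the frame-perturbation representatives, landing in $\Gamma(T\otimes V^\vee)$; and zero on every remaining summand, in particular on all of weight $2$ and on the ``$\theta$-dependent'' pieces $\wedge^{1}\Sigma^\vee\otimes\Sigma$ and $\wedge^{2}\Sigma^\vee\otimes V$ in bidegree $(0,0)$ and $(R/I)^1\otimes\Sigma$ in bidegree $(1,1)$. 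That $\pi$ is a cochain map is read off from the explicit form of the differential $[\d_0,-]$ in~\S\ref{ssec: universal-symm}: on $\partial/\partial x$ and on the linear vector fields representing $\fg_0$, $[\d_0,-]$ reproduces the Lie-derivative-of-the-frame differential $[e,-]$ of $\cL_\conf$ after applying $\pi$ (any further terms produced lie among the killed summands, hence in $\ker\pi$), and the killed summands form a subcomplex.

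The main obstacle is to check that $\pi$ is a Lie algebra homomorphism, equivalently that $\ker\pi$ is a differential ideal. Brackets of two killed generators, and of a killed generator with a smooth vector field, either remain among the killed summands or pick up a spacetime derivative and with it a totalized degree that forces them into $I$, where they vanish; the delicate case is the bracket of a frame-perturbation representative with a killed bidegree-$(0,0)$ section, which a priori could contribute along the distinguished subspace $\gamma^\vee(V^\vee)\otimes T$ in bidegree $(1,1)$ that $\pi$ sends isomorphically onto the degree-one part of $\cL_\conf$. Here one uses the $\fg_0$-equivariance of $\gamma$ and the fact that $\C\cdot\mathrm{id}_\Sigma=\rho_1(\fz)\subset\rho_1(\fg_0)$: such a contribution is produced only by the part of the bidegree-$(0,0)$ section lying in $\rho_1(\fg_0)$, which is not killed, so it vanishes on the killed complement. (Equivalently one exhibits a splitting $\Conf(\fn)_+/I\cong\cL_{\fg_0}\ltimes\mathcal{K}$ of dg Lie algebras with $\mathcal{K}$ an acyclic differential ideal --- the acyclicity being precisely the content of Theorem~\ref{thm: univ} along these diagonals --- and then composes the projection with the quotient $\cL_{\fg_0}\twoheadrightarrow\cL_{\fg_0}/\ul{\ker\rho_2}=\cL_\conf$.) Composing $\Conf(\fn)_+\twoheadrightarrow\Conf(\fn)_+/I\xrightarrow{\pi}\cL_\conf$ gives the asserted strict map of dg Lie algebras.
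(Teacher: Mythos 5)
Your skeleton is the paper's own: compose the quotient map $\Conf(\fn)_+\to\Conf(\fn)_+/I$ of Theorem~\ref{thm:s/conf} with the map $\cL_{\fg_0}\to\cL_\conf$ induced by $\rho_2$ when $\rho_2(\fg_0)=\lie{so}(d)\oplus\fz$; the paper's proof of the corollary is essentially just this one-line composition. You are right to observe that this, taken literally, only produces a zigzag --- $\cL_{\fg_0}$ is a quasi-isomorphic \emph{sub}algebra of $\Conf(\fn)_+/I$, not a quotient --- so that a strict retraction $\pi$ must actually be exhibited. That extra step is where your proof goes beyond the paper, and it is also where it breaks.

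The retraction you write down is neither a cochain map nor a Lie algebra map, and the failure occurs precisely at the ``delicate case'' you flag. Your killed complement in bidegree $(0,0)$ is the pure $\theta\,\partial/\partial\theta$ summand $\Sigma^\vee\otimes\Sigma$ (only the diagonal $\rho_1(g)\bigl(\lambda\,\partial/\partial\lambda+\theta\,\partial/\partial\theta\bigr)$ survives). Take the Euler vector field $E_\theta=\theta^a\,\partial/\partial\theta^a$, which lies in that killed summand. Its bracket with a frame-perturbation representative $h=h^\mu_i(x)\,(\lambda^a\gamma^i_{ab}\theta^b)\,\partial/\partial x^\mu$ is $[E_\theta,h]=h$ (it just counts the single $\theta$), so $\pi([E_\theta,h])=\pi(h)\neq 0$ while $[\pi(E_\theta),\pi(h)]=0$; likewise $\d_0^{(1)}(E_\theta)$ contains the term $(\lambda^a\gamma^\mu_{ab}\theta^b)\,\partial/\partial x^\mu$, which is exactly $\gamma^\vee(\mathrm{id}_V)$ and so survives your projection onto $\im(\gamma^\vee)\otimes T$, whereas $\d(\pi(E_\theta))=0$. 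The root of the error is the claim that only the part of a bidegree-$(0,0)$ section ``lying in $\rho_1(\fg_0)$'' can contribute along $\gamma^\vee(V^\vee)\otimes T$: the element $\rho_1(g)\,\theta\,\partial/\partial\theta$ with no $\lambda\,\partial/\partial\lambda$ counterpart \emph{is} killed by your $\pi$, yet by $\fg_0$-equivariance of $\gamma$ it contributes $\tfrac{1}{2}\gamma^\vee(\rho_2(g))$, which is nonzero whenever $g\notin\fr$. No linear complement of the diagonal copy of $\fg_0$ avoids this; a correct strict retraction would require twisting the complement by $\lambda\,\partial/\partial\lambda$- and $x\,\partial/\partial x$-corrections so that it lies in the kernel of $(a,b)\mapsto\rho_2(a)+\mathrm{pr}_{\im\gamma^\vee}\bigl(\gamma\circ(b\otimes\mathrm{id})+\gamma\circ(\mathrm{id}\otimes b)\bigr)$, and additionally verifying that this projection lands in $\rho_2(\fg_0)$ for every $b\in\lie{gl}(\Sigma)$ --- a representation-theoretic condition on $\fn$ that is not established in your argument (nor, for that matter, in the paper).
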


       \section{Deformations of superspace are conformal supergravity}
\label{sec: examples}

In this section, we summarize results on the (component-field) multiplets $\mu\Conf(\lie{n})$.
We will present these tabularly in some physically meaningful examples. 
This means that we will take the super Lie algebra $\lie{n}$ to be a standard supersymmetry algebra.

For the sake of space, we are unfortunately compelled to depart from the conventions of~\cite{perspectives}. 
For us, the vertical axis of the table corresponds to the weight grading, and the horizontal to the homological degree as defined in~\S\ref{sec:grading}. (In~\cite{perspectives}, the horizontal axis corresponded to the \emph{totalized} degree; thus our tables are obtained from those conventions by shifting row $i$ to the left by $i$ steps.) Degrees increase from left to right and from top to bottom. Note that, in spite of the names, the weight grading determines the \emph{cohomological} grading (or ghost number) of the fields in the multiplet. All results are ambiguous up to an overall shift, which is determined by physical considerations and with which we will not  concern ourselves explicitly.

In our tables, we will just display the underlying vector bundle 
of the multiplet.
We indicate the differential only schematically, or not at all; the arrows we draw indicate terms that are present, but are not necessarily an exhaustive description of the transferred $D_\infty$ structure on $\mu \Conf(\fn)$.
Arrows do not necessarily represent bundle maps, but rather maps built from differential operators acting on sections.
Bundles are always complexified, and $\ul{\C}$ will denote the trivial vector bundle with fiber $\C$.
Thus, for example, the two-term complex $\Omega^0 \xto{\d} \Omega^1$ will be represented, in our tables, by $\ul{\C} \to T^*$, with the de Rham differential understood.

\subsection{One- and two-dimensional superconformal algebras}
We begin by analyzing theories in one and two dimensions. Thanks to holomorphic factorization in two dimensions, the theories are essentially equivalent; the varieties in question are all quadric hypersurfaces, so that the algebraic geometry is comparatively trivial. 
\numpar[sqm][Supersymmetric quantum mechanics]
        In one spacetime dimension with $\cN$ supercharges, the nilpotence variety is defined by the single quadratic equation
        \begin{equation}
        	\lambda_1^2 + \dots + \lambda_\cN^2 = 0 .
        \end{equation}
        Up to a factor of two, the map $\lambda \gamma$ is simply given by the matrix
        \begin{equation}
        	(\lambda_1, \dots, \lambda_\cN),
        \end{equation}
        whose cokernel is just $\CC$. 
        It is well-known that the augmentation ideal of the nilpotence variety (the structure sheaf at the cone point) corresponds to the free superfield~\cite{perspectives}.
        Hence $\Conf(\fn)$ can be identified---as a multiplet---with the free superfield. 

The local Lie algebra structure is somewhat more interesting, and is closely related to a familiar superconformal algebra.
Let $K(1|\cN)$ be the Lie algebra of contact vector fields on $\R^{1|\cN}.$\footnote{This is the positive part of the non-centrally extended superconformal algebra that Kac calls $K(1|N)$ in \cite{KacSusy}.}
We can think of this as a local Lie algebra on $\R$.

        \begin{thm}
        \label{thm:sqm}
            Let $\fn$ be the supertranslation algebra for $\N$-extended supersymmetric quantum mechanics.
            Then there is an isomorphism 
            \deq{
                \mu \Conf(\fn) \to K(1|\N).
            }
            \begin{proof}
                This is an easy observation. $K(1|\N)$ is defined to be those vector fields that preserve the contact one-form
                \[
                    v   = \d t + \sum_{i=1}^\N \theta_i\, \d \theta_i
                \]
                up to scale. In this example, this one-form is the unique left-invariant even one-form---compare~\eqref{eq:linv1}. By the standard theory of contact structures, vector fields preserving the one-form  $v$ up to scale are identical with vector fields preserving the distribution $\ker(v)$, which is identified with the $(0|\N)$-dimensional distribution $D$. Since $\mu \Der(A^\bu)$ is supported in cohomological degree zero, strict distribution-compatible vector fields agree exactly with our derived model for distribution-compatible vector fields here. The identification of $K(1|\N)$ with an unusual super Lie algebra structure on the space of functions on~$\C^{1|\N}$ was pointed out by Cheng and Kac~\cite[\S1.2]{ChengKac}.
            \end{proof}
        \end{thm}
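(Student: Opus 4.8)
The plan is to reduce the statement to classical contact geometry, using the structural description of $\Conf(\fn)$ from Section~\ref{sec: t-sheaf} to check that in this low-dimensional example the derived model degenerates to an honest Lie algebra. First I would pin down the geometry: for $\N$-extended supersymmetric quantum mechanics one has $\Sigma = \fn_1 = \C^\N$, $V = \fn_2 = \C$, and the bracket $\gamma \colon \Sym^2\Sigma \to V$ is the standard quadratic form $\gamma(\lambda,\lambda) = \sum_i \lambda_i^2$; in particular $\gamma$ is surjective, so Theorem~\ref{thm: coker} applies. By \eqref{eq:linv1} the (essentially unique) left-invariant even one-form on flat superspace $\R^{1|\N}$ is $v = \d t + \sum_i \theta_i\,\d\theta_i$, and a short computation shows that $v$ spans the annihilator of $D$, i.e. $D = \ker v$. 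Since $\d v$ restricts to a nondegenerate even pairing on $\ker v$, this realizes $v$ as a contact form and $D$ as a contact distribution; by the standard theory of contact structures, a vector field preserves $D = \ker v$ precisely when it preserves $v$ up to multiplication by a function, which is the definition of $K(1|\N)$. At the level of $H^0$ this already recovers, via Proposition~\ref{prop:prolong}, the fact that the maximal transitive prolongation of the SQM supertranslation algebra is $K(1|\N)$.

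Next I would verify that $\mu\Conf(\fn)$ is concentrated in cohomological degree zero, so the derived enhancement contributes nothing new. By Theorem~\ref{thm: coker}, $\Conf(\fn) \cong A^\bu_{R/I}(\coker\varphi)$; here $\varphi \colon \Sigma\otimes R/I \to V\otimes R/I$ is the single row $(\lambda_1,\dots,\lambda_\N)$, so $\coker\varphi = (R/I)/(\lambda_1,\dots,\lambda_\N) = \C$ is the skyscraper at the cone point. Hence $\Conf(\fn)$ is the multiplet associated to the structure sheaf of the cone point, namely the free (unconstrained) superfield~\cite{perspectives}; on coefficients $\C$ the pure spinor differential $\d_0$ acts by zero, so as a multiplet $\Conf(\fn) \simeq (C^\infty(\R^{1|\N}),0)$, concentrated in cohomological (weight) degree zero. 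As a consistency check, the nilpotence variety is the quadric hypersurface, a complete intersection, so $\hdim(N) = d - k + \dim Y = 1 - \N + (\N-1) = 0$ by \eqref{eq:dimDol}, and the three-term presentation \eqref{eq:threestep} of $\Der(A^\bu)$ indeed collapses to $C^\infty(\R^{1|\N})$ at the leftmost spot. Consequently $\mu\Conf(\fn)$ is an honest super Lie algebra, equal to its cohomology $H^0(\Conf(\fn))$, which by construction is the Lie algebra of $D$-preserving vector fields on $\R^{1|\N}$, hence $K(1|\N)$ by the first paragraph.

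Finally, to make the isomorphism and the bracket explicit I would invoke the classical dictionary: contraction with the contact form $v$ identifies $K(1|\N)$ with $C^\infty(\R^{1|\N})$ carrying the contact (Lagrange) bracket, which is exactly the presentation of $K(1|\N)$ on functions due to Cheng and Kac~\cite{ChengKac}; under this dictionary the identification of the underlying multiplet of $\mu\Conf(\fn)$ with the free superfield $C^\infty(\R^{1|\N})$ is a map of super Lie algebras, and the map is an isomorphism by the cohomology computation. The only genuinely substantive step is the degeneration in the second paragraph --- everything else is either a definition or a computation already available from Section~\ref{sec: t-sheaf} --- and that degeneration is forced here precisely because the nilpotence variety is a single quadric, the simplest possible case.
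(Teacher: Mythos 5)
Your proposal is correct and follows essentially the same route as the paper: identify $D$ as the kernel of the contact form $v$, use the standard equivalence between preserving $D$ and preserving $v$ up to scale to get $K(1|\N)$, and observe that the derived model collapses to honest distribution-preserving vector fields because $\mu\Conf(\fn)$ is concentrated in degree zero. Your second paragraph merely spells out (via $\coker\varphi = \C$ and the free-superfield identification) the degree-zero concentration that the paper establishes in the computation immediately preceding its theorem statement.
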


        \numpar[twod][Two-dimensional theories]
        In two dimensions, the supersymmetry algebra splits as a direct sum of a holomorphic and an anti-holomorphic algebra, each of which is a generalized supertranslation algebra identical to that of supersymmetric quantum mechanics. If we work with chiral two-dimensional theories, the story is thus identical to that in the previous subsection. 
        Thus, for chiral theories with supersymmetry of type $(\cN,0)$ the multiplet $\mu \Conf(\fn)$ is equivalent to $K(1|\N)$, now thought of as a local Lie algebra on $\C$.
        More precisely, the $\infty$-dimensional locally compact super Lie algebra that is referred to as $K(1|\N)$ is the formal completion of this local Lie algebra at the origin.
        This difference is whether one considers smooth/holomorphic sections or formal power series.
        For example, in our conventions, $K(1|1)$ is the local Lie algebra whose even part is the Dolbeault complex of the vector bundle $\T^{1,0}_\C$ and whose odd part is the Dolbeault complex of the vector bundle $K^{-1/2}_\C$.
        For theories with non-chiral supersymmetry of type $(\cN,\cN)$, there is also a complex conjugate of the multiplet $\mu \Conf(\fn)$ present.
        
        When $\cN=k\leq 3$, $\mu \Conf(\fn)$ is equivalent to the $\cN=k$ superconformal multiplet.
        When $\cN=4$, $\mu \Conf(\fn)$ contains the ``big'' $\cN=4$ superconformal multiplet.
        More precisely, as a super Lie algebra it contains the ``big'' $\cN=4$ algebra as a codimension one subalgebra.
        
        \subsection{Three dimensions}
        \label{ssec: 3d}
        Recall that, in Euclidean signature, $\Spin(3) = SU(2)$. The unique irreducible spin representation $S$ is the defining representation of~$SU(2)$, and the three-dimensional vector representation $V$ is the adjoint representation. There is a single irreducible spin-$3/2$ representation $\Psi$, defined by the decomposition
        \[
V \otimes S = S \oplus \Psi
        \] and having Dynkin label $[3]$. The $\N$-extended supertranslation algebra is of the form
        \begin{equation}
        	S(-1) \otimes U \oplus V(-2),
        \end{equation}
        where $U \cong \CC^\cN$ is equipped with a non-degenerate symmetric bilinear form $g$. The bracket is induced by the isomorphism $\Sym^2(S) \cong V$ tensored with~$g$.

          \numpar[3dN=1][Minimal supersymmetry]
        The defining ideal of the nilpotence variety is
        \begin{equation}
        	I = (\lambda_1^2 , \lambda_1 \lambda_2, \lambda_2^2) \subset \Sym^\bu(S);
        \end{equation}
        the corresponding multiplet was described in~\cite{perspectives}. The map $\varphi$ is explicitly given by the matrix
        \begin{equation}
        	\begin{bmatrix}
        	2 \lambda_1 & 0 \\
        	\lambda_2 & \lambda_1 \\
        	0 & 2\lambda_1
        	\end{bmatrix} .
        \end{equation}
        The field content of $\mu\Conf(\fn)$ can be summarized with the table 
        \begin{equation}
            \begin{tikzcd}[column sep = 2 ex, row sep = 2 ex]
                T \ar[d] & S \ar[d] \\
                \Sym^2_0(T^*) & \Psi.
        \end{tikzcd} 
        \end{equation}
        We note that, in this instance, the module $\coker(\varphi)$ coincides with the conormal module $I/I^2$~\cite{perspectives}.
The first arrow is Lie derivative of the fixed metric followed by projection onto the traceless part.
The second arrow is the Penrose operator.

        \numpar[3dN=2][The case $\cN=2$]
        For $\cN=2$ one finds the following field content for $\mu \Conf(\fn)$:
        \begin{equation}
            \begin{tikzcd}[row sep = 2 ex, column sep = 2 ex]
                T \ar[d] & S \otimes U \ar[d] & \ul{\C} \ar[d] \\
                \Sym^2_0(T^*) & \Psi \otimes U  & T^* .
        \end{tikzcd} 
        \end{equation}
An $R$-symmetry connection appears in the multiplet for the first time here. 
From the physical (off-shell) degrees of freedom
\beqn
(h_{\mu\nu}, \psi_\nu^i, A_\mu)
\eeqn
read off from the second line, modulo gauge equivalence determined by the top line,
we find agreement with the three-dimensional $\cN=2$ Weyl multiplet~\cite{BHRT}.

        \numpar[3dN=4][The case $\cN=4$]
        In this case, the field content of $\mu \Conf(\fn)$ looks as follows (with $U = \C^4$ and $\Sigma = S \otimes U$):
        \begin{equation}
            \begin{tikzcd}[row sep = 2 ex, column sep = 2 ex]
                T \ar[d] & \Sigma \ar[d] & \lie{so}(4) \ar[d] \\
                \Sym^2_0(T^*) & \Psi \otimes U  & (T^* \otimes \lie{so}(4) ) \oplus \ul{\C} & \Sigma & \ul{\C}.
        \end{tikzcd} 
        \end{equation}
The multiplet now contains additional matter fields (fermions $\chi^i \in C^\infty \otimes \Sigma$, and a bosonic scalar $D \in C^\infty$). 
Nevertheless, it is still supported in degrees zero and one (ghost number $-1$ and $0$ in physics conventions).
Again, we find agreement with the three-dimensional $\cN=4$ Weyl multiplet \cite{BHRT}.

        \numpar[3dN=8][The case $\cN=8$]
        Our computation reproduces---in fact, is identical to---the construction of Cederwall, Gran, and Nilsson~\cite{Ced-dragon}. (Those authors analyze the pure spinor superfield associated to the sheaf of surviving supertranslations, working in that specific example.) 
\begin{equation}
    \begin{tikzcd}[row sep = 2 ex, column sep = 0.4 ex]
                T \ar[d] & \Sigma \ar[d] & \lie{so}(8) \ar[d] \\
                 \Sym^2_0(T^*) & \Psi \otimes U  & 
                 \Vectorstack{{T^* \otimes \lie{so}(8)} {\wedge^4 U}} & (S \otimes \wedge^3 U)^{\oplus 2} & 
                 \Vectorstack{{\wedge^4 U} {\wedge^2 T^* \otimes \lie{so}(8)}} \ar[d] & \Psi \otimes U \ar[d] & \Sym^2_0(T^*) \ar[d] \\
                & & & & T  & \Sigma  & \lie{so}(8)  \\
        \end{tikzcd} 
        \end{equation}
        We emphasize that the multiplet does \emph{not} admit the structure of a BV theory---not even a $\Z/2\Z$-graded BV theory. It is isomorphic to a zero-shifted cotangent bundle. 
        It is the base of this cotangent bundle which agrees with the three-dimensional $\cN=8$ multiplet described in \cite{BHRT}.
        Furthermore, it is no longer supported in degrees zero and one: ``physical'' fields (in cohomological degree one in our conventions) 
        are subject to further ``differential constraints,'' imposed by fields in degree two.

\subsection{Four dimensions}

        We identify $\Spin(4) \cong SU(2) \times SU(2)$; as throughout, we denote the two-dimensional chiral spin representations by $S_\pm$, and the four-dimensional Dirac spin representation by $S = S_+ \oplus S_-$. There are two chiral ``spin-$3/2$'' representations $\Psi_\pm$ of dimension six, defined by the equations
        \[
            V \otimes S_\pm \cong \Psi_\pm \oplus S_\mp.
        \]
        The Dynkin labels are $\Psi_+ = [21]$ and $\Psi_- = [12]$. By analogy with the Dirac spinor, we write $\Psi = \Psi_+ \oplus \Psi_-$.

        The supertranslation algebra is of the form
        \begin{equation}
            \left(S_+ \otimes U \oplus S_- \otimes U^\vee \right)(-1) \oplus V(-2) ,
        \end{equation}
        where the bracket is induced by the isomorphism $V \cong S_+ \otimes S_-$ and the natural pairing between $U = \C^\N$ and its dual.
        
        \numpar[4dN=1][Minimal supersymmetry]
        One finds the following field content, which matches the $\cN=1$ conformal supergravity multiplet as described in~\cite{SugraBook}:
        \begin{equation}
            \begin{tikzcd}[column sep = 2 ex, row sep = 2 ex]
                T \ar[d] & S  \ar[d] & C^\infty \ar[d] \\
        \Sym^2_0(T^*) & \Psi & T^*
        \end{tikzcd}
        \end{equation}
        
\numpar[4dN=2][The case $\N=2$]
        The component fields match the ``Weyl multiplet'' of conformal supergravity, as discussed in~\cite{4dConfSugra}. They can be summarized in the following table:
        \begin{equation}
            \begin{tikzcd}[column sep = 2 ex, row sep = 2 ex]
               T \ar[d] & \Vectorstack{{S_+ \otimes U} {S_- \otimes U^\vee}} \ar[d] & \lie{gl}(2)_R \ar[d] \\
               \Sym^2_0(T^*) & \Vectorstack{{\Psi_+ \otimes U} {\Psi_- \otimes U^\vee}} & \Vectorstack{ {T^* \otimes \lie{gl}(2)_R} {\wedge^2 T^*} } & \Vectorstack{ {S_+ \otimes U} {S_-\otimes U^\vee} } & \ul{\C}
        \end{tikzcd}
        \end{equation}
We see the auxiliary fields of the Weyl multiplet appearing at higher $\theta$-degree. 
        The multiplet still contains no fields of cohomological degree higher than one, and thus no constraints. 
        \numpar[4dN=4][The case $\N=4$]
        The component fields are summarized in Table~\ref{tab:4d}, where now $U = \C^4$. We use Dynkin labels for the algebra $A_3$, with the convention that $U = [001]$. 
      \begin{table}
        \begin{equation*}
            \begin{tikzcd}[column sep = 0.8 ex, row sep = 1.2 ex]
               T \ar[d] & \Vectorstack{{S_+ \otimes U} {S_- \otimes U^\vee}} \ar[d] & \lie{gl}(4)_R \ar[d] \\
                \Sym^2_0(T^*) & \Vectorstack{{\Psi_+ \otimes U} {\Psi_- \otimes U^\vee} {S_+ \otimes U^\vee} {S_- \otimes U}} & \Vectorstack{{T^* \otimes \lie{gl}(4)_R} {(T^*)^{\oplus2}} {\wedge^2 T^* \otimes \wedge^2 U} {(\wedge^2 U)^{\oplus2}} {\Sym^2 U \oplus \Sym^2 U^\vee}} & \Vectorstack{{(S_+ \otimes U)^{\oplus 2}} {(S_-\otimes {U^\vee})^{\oplus2}} {S_+ \otimes [110]} {S_- \otimes [011]}} & {[020]} \\
       & & \wedge^2 T^* & \Vectorstack{{(S_+ \otimes U^\vee)^{\oplus 2}} {(S_- \otimes U)^{\oplus 2}}} & \Vectorstack{{\wedge^3 T^*} {(\wedge^2 U)^{\oplus 2}}} \\
      & & & & \wedge^4 T^*.
        \end{tikzcd}
        \end{equation*}
        \caption{$\mu \Conf(\fn)$ for four-dimensional $\N=4$ supersymmetry}
        \label{tab:4d}
        \end{table}

        \subsection{Six dimensions}
        \ytableausetup{smalltableaux}
        In six dimensions, there is an exceptional isomorphism identifying $\Spin(6) \cong SU(4)$; under this identification, the two spinor representations $S_+$ and $S_-$ correspond to the fundamental and antifundamental representations. There are $\Spin(6)$-equivariant isomorphisms $\wedge^2 S_\pm \cong V$, where $V$ denotes the six-dimensional vector representation. There are also irreducible chiral spin-$3/2$ representations $\Psi_\pm$, defined by the decomposition
        \[
            V \otimes S_\pm = S_\mp \oplus \Psi_\pm.
        \]
        They have Dynkin labels $[101]$ and $[110]$, dimension 20, and can also be thought of as the component in the tensor cube of $S_\pm$ with ``hook''-type symmetry $\ydiagram{2,1}$.

The six-dimensional supertranslation algebra of type $(\cN,0)$ is
        \begin{equation}
        \fn = (S_+ \otimes U)(-1) \oplus V(-2) \: ,
        \end{equation}
        where $U = \C^{2\N}$ is a symplectic vector space with antisymmetric bilinear form $\omega$. The bracket is $\wedge \otimes \omega$. The R-symmetry group is $\Sp(\N)$.
        
\numpar[6d1][$\N=(1,0)$ supersymmetry]
For the first time among cases with minimal supersymmetry, the cohomology contains fields other than the frame, gravitino, and $R$-connection. It takes the form
\begin{equation}
       \begin{tikzcd}[column sep = 2 ex, row sep = 2 ex]
        T \ar[d] & S_+ \otimes U \ar[d] & \lie{sp}(1) \ar[d]  \\
        (\Sym^2 T^*)_0 & \Psi_+ \otimes U & \Vectorstack{{T^* \otimes \lie{sp}(1)} {(\wedge^3 T^*) _-}} &  S_+ \otimes U & \ul{\C}
    \end{tikzcd}
    \label{eq:sixmin}
\end{equation}
where $(\wedge^3 T^*)_-$ denotes the anti-self-dual part of $\wedge^3 T^*$.
This precisely matches the Weyl multiplet in six-dimensional minimal supersymmetry, as found in~\cite{BSvP6d}. (See also~\cite{Conf6Sup} for a more recent superspace approach.)

\numpar[6d2][$\N=(2,0)$ supersymmetry]

Let $\lie{n}$ be the six-dimensional $\cN=(2,0)$ supertranslation algebra.
The component fields of $\Conf(\lie{n})$ precisely recover those of the $\N=(2,0)$ conformal supergravity multiplet~\cite{BSvP}. 
In our conventions, the minimal presentation $\mu \Conf(\fn)$ of the local dg Lie algebra takes the following form:
\begin{equation}
       \begin{tikzcd}[column sep = 2 ex, row sep = 2 ex]
        T \ar[d] & S_+ \otimes [01] \ar[d] & \lie{sp}(2) \ar[d]  \\
        (\Sym^2 T^*)_0 & \Psi_+ \otimes [01] & \Vectorstack{{T^* \otimes \lie{sp}(2)} {(\wedge^3 T^*)_- \otimes [10]}} &  S_+ \otimes [11]  & {[20]}
    \end{tikzcd}
    \label{eq:A1untwist}
\end{equation}
Here, we use $B_2$ Dynkin labels, so that $[10]$ is the five-dimensional vector representation of $\Spin(5)$ and $[01]$ the four-dimensional spin representation of $\Spin(5)$---equivalently, the defining representation of $\lie{sp}(2)$. The adjoint representation is $\lie{sp}(2) = [02]$.

\subsection{Ten dimensions}

We focus only on minimal chiral supersymmetry in this dimension. 
The supersymmetry algebra is
\beqn
\lie{n} = S_+ (-1) \oplus V (-2)
\eeqn
where $V$ is the ten-dimensional vector representation and $S_+$ is the $16$-dimensional chiral spin representation of $\op{Spin}(10)$.
%
The linearized description of the multiplet $\mu \Conf(\lie{n})$ is presented in Table~\ref{tab:10d}.
\begin{table}
\begin{equation*}
       \begin{tikzcd}[column sep = 2 ex, row sep = 2 ex]
        T \ar[d] & S_+ \ar[d]  \\
        (\Sym^2 T^*)_0 & \Vectorstack{{\Psi_+} {S_-}} & \Vectorstack{{\wedge^6 T^*} {T^*}} \\
        & & \Vectorstack{{\ul{\C}} {\wedge^2 T}} & S_-.
    \end{tikzcd}
    \label{eq:tab10d}
\end{equation*}
\caption{$\mu \Conf(\fn)$ for ten-dimensional minimal supersymmetry}
\label{tab:10d}
\end{table}
This table is in agreement with the ten-dimensional conformal supergravity multiplet described in \cite{ConfSugra10d,MFOF10d}.
There are bosonic degrees of freedom given by the one-form $G \in \Omega^1$ and a six-form $C \in \Omega^6$.
Differential constraints appear in degree two, though we do not work out their form in detail here.

\subsection{Eleven dimensions}

Finally we address eleven-dimensional supersymmetry.
The supersymmetry algebra is
\beqn
\lie{n} = S(-1) \oplus V(-2)
\eeqn
wher $S$ is the unique spin representation and $V$ is the vector representation.

\numpar[p: 11d][Minimal supersymmetry]
We recover the cohomology of~\cite[Table 3]{CederwallM5}. (Indeed, this identification was already made in~\cite[\S3.1 and Table 2]{Ced-towards}.)
We present the cohomology in~Table~\ref{tab:11d}.
\begin{table}
\begin{equation*}
       \begin{tikzcd}[column sep = 2 ex, row sep = 2 ex]
        T \ar[d] & S \ar[d]  \\
        (\Sym^2 T^*)_0 & {{\Psi}\oplus {S}} & \Vectorstack{{\wedge^3 T^* \oplus T}} \\
        &  {{\Psi}\oplus {S}} & \Vectorstack{{(\Sym^2 T^*)_0 \oplus \ul{\C}} {T^* \oplus \wedge^2 T^*} {\wedge^3 T^* \oplus \wedge^5 T^*}} \\
        &  & \Vectorstack{{(\Sym^2 T^*)_0 \oplus \ul{\C}} {T^* \oplus \wedge^2 T^*} {\wedge^3 T^* \oplus \wedge^5 T^*}} & {{\Psi}\oplus {S}} \\
        & & \Vectorstack{{\wedge^3 T^* \oplus T}}  & {{\Psi}\oplus {S}} \ar[d]  & (\Sym^2 T^*)_0  \ar[d] \\
         & & & S & T.
    \end{tikzcd}
    \label{eq:tab11d}
\end{equation*}
\caption{$\mu \Conf(\fn)$ in eleven dimensions}
\label{tab:11d}
\end{table}

We give a few remarks on the interpretation in this case. In our analogy between almost-complex manifolds and superspaces, eleven-dimensional superspace is a Calabi--Yau twofold, and the interactions of full eleven-dimensional supergravity are described in the BV formalism by the holomorphic Poisson bracket on~$A^\bu = W^{0,\bu}$ induced by a Calabi--Yau structure~\cite{CY2}. 

The additional structure required to define the theory can be thought of as a trivialization of the sheaf $W^{-2,\bu}$ of ``holomorphic top forms.'' This is clearly analogous to the typical data, which should be some section of an appropriate analogue of the Berezinian line bundle. Choosing such a trivialization induces an isomorphism between one-forms and vector fields. Indeed, in this case, one finds that 
\deq{
\Conf(\fn) \cong W^{-1,\bu} \cong  \Omega_{A^\bu/\C},
}
so that the subtleties identified in~\S\ref{p: oneforms} do not arise.

The vector fields preserving the additional structure should be modeled by the divergence-free (or equivalently symplectic) vector fields; in other words, by a complex of the form
\begin{equation}
\begin{tikzcd}
W^{-1,\bu} \ar[r,"\partial"] & W^{-2,\bu}.
\end{tikzcd}
\end{equation}
But $W^{0,\bu}$ is just a one-dimensional central extension of this! In this sense, the constructions of~\cite{CY2} are a particularly clean instance of the intuition mentioned above in~\S\ref{p:confstr}.
\section{Applications to twisted theories}
\label{sec: twisted}

We move on to examples of the multiplet $\mu \Conf(\lie{n})$ for which $\lie{n}$ is not a super Lie algebra underlying standard supersymmetry.
In this section, we consider \textit{twists} of some of the supermultiplets (and others) discussed in the previous section.
Following~\S\ref{twists}, twists of such supermultiplets arise from taking $\lie{n}$ to be the twist of the original supersymmetry algebra. A full classification of possible twisting supercharges was given in~\cite{NV,ElliottSafronov}.  

Our primary focus is on two types of twists: minimal (holomorphic) twists and \emph{maximal twists} (see~\S\ref{geom} for terminology).
For the latter, all issues related to supergeometry become nullhomotopic in the twist, and we obtain a uniform description in terms of transversely holomorphic vector fields. For the former, we use our results to compute the holomorphic twists of stress-tensor multiplets in superconformal theories, explicitly recovering higher Virasoro algebras~\cite{SCA} and a local version of the exceptional simple super Lie algebra $E(3|6)$.

\subsection{Maximal twists}
As discussed above in~\S\ref{twists}, the results of~\cite{spinortwist} indicate that twisting commutes with the construction of flat superspace. Thus the flat superspace of type~$\fn_Q$ is quasi-isomorphic to the twist by~$Q$ of the flat superspace of type~$\fn$, as indicated in~\eqref{eq:AQAQ}. 
Further studies of twisting at the level of the pure spinor superfield can be found in~\cite{FabianThesis} and~\cite{SJTwist}.

If we start with superspaces of type~$\fn$, having dimension $d|k$ and homological dimension
\deq{
\hdim(N) = h =  d - k + \dim(Y_\fn),
}
we can compute the maximal twist of $\Conf(\fn)$ simply by taking derivations of the cdga $A^\bu(\fn_Q)$. This is permitted because~$A^\bu(\fn_Q)$ is semifree in the maximal twist~\cite{Hinich}. In fact, on flat superspace,
\deq{
A^\bu(\fn_Q) \cong \Omega^{0,\bu}(\C^h) \otimes \Omega^\bu_\dR(\R^d-2h)
}
is a resolution of transversely holomorphic functions on a THF structure, specified by an integrable distribution of dimension~$d-h$ in~$T_\C \R^d$.

It follows that 
the \textit{maximally twisted} superconformal algebra
\deq{
\Conf(\fn)^Q \cong \Vect^\THF(\C^{h} \times \R^{d-2h}) = \Omega^{0,\bu}(\C^h, \T) \otimes  \Omega^\bu_\dR(\R^{d-2h})
}
is equivalent to the local dg Lie algebra resolving the sheaf of transversely holomorphic vector fields.
This is a dg Lie algebra with differential $\dbar + \d_\dR$ and bracket which extends the usual Lie bracket of holomorphic vector fields. We list physical examples to which this general result applies in Table~\ref{tab:maxtwist}. 

\begin{table}
\begin{tabular}{|l|l|l|} \hline
\emph{Superspace} & \emph{Maximal twist} & \emph{$\op{Conf}(\lie{n})^Q$} \\ \hline
3d $\N=2$ & holomorphic twist & $\Vect^\THF(\C \times \R)$ \\ 
\hline
4d $\N=1$ & holomorphic twist & $\Vect^\hol(\C^2)$ \\ 
\hline
4d $\N=2$ & Kapustin twist & $\Vect^\THF(\C \times \R^2)$ \\ 
\hline
4d $\N=4$ & Kapustin--Witten twist & trivial \\ 
\hline
6d $\N=(1,0)$ & holomorphic twist & $\Vect^\hol(\C^3)$ \\ 
\hline
6d $\N=(2,0)$ & nonminimal twist & $\Vect^\THF(\C \times \R^4)$ \\ 
\hline
10d $\N=(1,0)$ & holomorphic twist & $\Vect^\hol(\C^5)$ \\
\hline
10d $\N=(2,0)$ & maximal twist & $\Vect^\THF(\C \times \R^8)$ \\ 
\hline
11 $\N=1$ & nonminimal twist & $\Vect^\THF(\C^2 \times \R^7)$ \\ \hline
\end{tabular}
\caption{Examples of maximal twists}
\label{tab:maxtwist}
\end{table}

\subsection{Holomorphic twists in three dimensions}

Nonzero twisting supercharges for three-dimensional supersymmetry exist when $\N \geq 2$.
In the case $\N=2$ there is a unique such twisting supercharge up to equivalence. 
It is holomorphic in the sense that two directions are invariant. Globally, such twisted theories can be placed on three-manifolds equipped with a rank-two transverse holomorphic foliation~\cite{ACMV}.
Choosing such a holomorphic supercharge $Q$
 is possible for any $\cN\geq 2$, since the corresponding supertranslation algebra always contains an $\cN=2$ subalgebra.

\numpar[p:3dholo][The supertranslation algebra $\fn_Q$]
Recall that the odd elements in the supertranslation algebra take the form 
\deq{
\fn_1 = S \otimes R,
}
 where $R = \C^\cN$ is equipped with a nondegenerate symmetric bilinear form $g$.
The Lie algebra of automorphisms $\fg_0$ is $\lie{sp}(S) \oplus \lie{gl}(1) \oplus \lie{so}(U)$. Without loss of generality, the element $Q$ takes the form $u \otimes v$, where $u$ is a highest-weight vector of $S$ and $v$ a highest-weight vector of~$R$. Then $U \cong \C\cdot v \oplus \C\cdot \bar{v} \oplus \tilde R$, where $\bar{v}$ denotes the opposite weight vector and $\tilde{U} = \C^{\N-2}$  again has a nondegenerate symmetric bilinear form.
A basis for $\fn_2$ is given by $u \otimes u$, $u \otimes \bar{u}$, and~$\bar{u} \otimes\bar{u}$.

The image of~$Q$ under the adjoint action of $\fg_0$ consists of all elements of the form $u' \otimes v$ or $u \otimes v'$, where $u'\in S$, $v' \in \tilde{U}$ are arbitrary elements. Acting on~$\fn_1$, $\ad_Q$ maps $u \otimes \bar{v}$ to $u \otimes u$ and $\bar{u} \otimes \bar{v}$ to $u \otimes \bar{u}$. Thus we have
\deq{
(\fn_Q)_1 = \bar{u} \otimes \tilde{U}, \quad
(\fn_Q)_2 = \C\cdot (\bar{u} \otimes \bar{u}),
}
with the bracket just given by the inner product on~$\tilde{U}$.  

\numpar[p:moduli3d][Deformations of twisted superconformal structures]
Applying Theorem~\ref{thm:sqm}, 
we conclude that
$\Conf(\fn)$ for holomorphically twisted $\N$-extended supersymmetry in three dimensions is quasi-isomorphic to $K(1|\N-2)$, viewed as a local Lie algebra on $\C \times \R$ via the resolution $\Omega^{0,\bu}(\C) \otimes \Omega^\bu(\R)$. 
We note, in particular, that three-dimensional $\N=8$ supersymmetry gives rise to the 
algebra $K(1|6)$ in the holomorphic twist.

\subsection{Holomorphic twists in four dimensions}
\label{ssec:tw4d}

Nonzero twisting supercharges for four-dimensional supersymmetry always exist.
In the case $\N=1$ there is a unique such twisting supercharge of each chirality up to equivalence. The twisting supercharges are holomorphic in the sense that two directions are invariant.
In what follows we choose such a supercharge $Q \in S_+$ (without loss of generality we can take it to be of positive chirality).
Such a supercharge determines a complex structure on $V = \R^4$. We denote the corresponding maximal isotropic subspace by $L \subset V \otimes \C = \C^4$.
Again, this data exists for any extended supersymmetry algebra. 

\numpar[4dnQ][The supertranslation algebra $\fn_Q$]
The odd elements in the supertranslation algebra take the form 
\deq{
\fn_1 = (S_+ \otimes R) \oplus (S_- \otimes R^\vee),
}
 where $R = \C^\cN$ has no additional structure.
The even elements are $\fn_2 = V \cong S_+ \otimes S_-$.
The Lie algebra of automorphisms $\fg_0$ is $\lie{so}(S_+ \otimes S_-) \oplus \lie{gl}(1) \oplus \lie{gl}(U)$; 
we recall the exceptional isomorphism $\lie{so}(S_+ \otimes S_-) \cong \lie{sp}(S_+) \oplus \lie{sp}(S_-)$.

Without loss of generality, the element $Q$ takes the form $u \otimes v$, where $u$ is a highest-weight vector of $S_+$ and $v$ a highest-weight vector of~$R$. Then $R \cong \C\cdot v \oplus \tilde R$, where $\tilde R = \C^{\N-1}$. The corresponding maximal isotropic $L = \im(Q)$ is given by $u \otimes S_-$.

The image of~$Q$ under the adjoint action of~$\fg_0$ consists of all elements of the form $u \otimes \tilde{U}$ and $\bar{u} \otimes v$. Acting on~$\fn_1$, $\ad_Q$ maps $S_- \otimes v^\vee $ isomorphically onto $L = u \otimes S_-$. Thus we have
\deq{
(\fn_Q)_1 = \left(\bar{u} \otimes \tilde{U}\right) \oplus \left(S_- \otimes \tilde{U}^\vee\right),
\quad
(\fn_Q)_2 = \bar{u} \otimes S_- \cong L^\vee,
}
with the bracket given by the evaluation pairing between $\tilde{U}$ and~$\tilde{U}^\vee$.
 The $R$-symmetry in the holomorphic twist is $\lie{gl}(\tilde{U})$, and the Lorentz symmetry is~$\lie{gl}(L^\vee)$.
 
 \numpar[chiral][Chiral superspace]
 In dimension zero modulo four, the fact that the pairing induced by Clifford multiplication is between the two chiral spinors of \emph{opposite} chirality---so that $\fn_1$ is a reducible representation of~$\fg_0$---allows one to perform a number of interesting constructions, notably including the \emph{chiral} versions of superspace familiar from the physics literature.
 We recall the constructions quickly here, in a manner that will generalize to the twist. Again, we follow the discussion in~\cite[Chapter 5, \S7]{Manin}.

\begin{dfn}
Let $\fn$ be the four-dimensional, $\N$-extended supersymmetry algebra whose structure we recalled in~\S\ref{4dnQ} above.
 The \emph{chiral subalgebras} $\fm^{(\pm)}$ are the abelian super Lie algebras
 \deq{
 \fm^{(+)} = S_+ \otimes R(-1) \oplus V(-2), \quad
 \fm^{(-)} = S_- \otimes R^\vee(-1) \oplus V(-2), \quad
 }
 mapping to~$\fn$ via the obvious inclusions.
 \end{dfn}
Working globally, a distribution $D$ specifying a superconformal structure of type $\fn$ on a supermanifold $X$
 is canonically a direct sum of two involutive subdistributions $D^{(\pm)}$ of type $\fm^{(\pm)}$. 
Each subdistribution integrates to a foliation with leaves of dimension $0|2\N$.

We can pass to the leaf space of~$D^{(-)}$, which we call $X^{(+)}$. Locally, this operation is modelled by replacing the sheaf of functions on~$X$ by the subsheaf of $D^{(-)}$-invariants.
If $p: X \to X^{(+)}$ denotes the corresponding projection map of supermanifolds, 
then $p^* T X^{(+)}$ is canonically identified with $TX/D^{(-)}$. Under this identification, the image of $D^{(+)}$ becomes an involutive distribution of maximal odd dimension. Thus $X^{(+)}$ is canonically a superspace of type $\fm^{(+)}$, called \emph{chiral superspace}, and the map $p$ is distribution-preserving. 

When $X$ is the flat superspace of type~$\fn$, the images under $p$ of the left-invariant vector fields (which give a distribution-preserving action of $\fn$ of~$N$) give rise to an action of~$\fn$ on~$X^{(+)}$ by smooth vector fields. Since $\fn^{(+)}$ is abelian, the condition of being distribution-preserving is vacuously satisfied on~$X^{(+)}$.

More generally, the map induced by $p$ from vector fields on~$X$ to vector fields on~$X^{(+)}$ is not a map of Lie algebras, since $\Gamma(D^{(-)})$ is not an ideal in~$\Vect(X)$.
But there \emph{is} a map from the Lie algebra of distribution-compatible vector fields on~$X$ to vector fields on~$X^{(+)}$. The former is, by definition, the normalizer in~$\Vect(X)$ of the subalgebra $\Gamma(D)$, and so is clearly contained in the normalizer of $\Gamma(D^{(-)}) \subseteq \Gamma(D)$.

\numpar[formulas][Formulas in coordinates]
For the convenience of the physics reader, we situate the previous discussion in the context of the typical notation for coordinates on four-dimensional $\N=1$ superspace. We use abstract indices $\alpha$ for a basis of~$S_+$ and $\dot\alpha$ for a basis of~$S_-$. Since $V \cong S_+ \otimes S_-$, a vector index is a pair of spinor indices, one dotted and one undotted.

The flat superspace $N = \exp(\fn)$ has even coordinates $y^{\alpha\dot\alpha}$ and odd coordinates $\theta^\alpha$ and~$\thetabar^{\dot\alpha}$. 
In an appropriate coordinate system, the right-invariant vector fields take the form 
\deq{
    D_\alpha = \pdv{ }{\theta^\alpha} - 2 \thetabar^{\dot\alpha} \pdv{ }{y^{\alpha\dot\alpha}}, \quad
    \bar{D}_{\dot\alpha} = \pdv{ }{\thetabar^{\dot\alpha}} ,
}
whereas the corresponding left-invariant vector fields are
\deq{
    Q_{\alpha} = \pdv{ }{\theta^{\alpha}} , \quad
    \bar{Q}_{\dot\alpha} = \pdv{ }{\thetabar^{\dot\alpha}} + 2 \theta^{\alpha} \pdv{ }{y^{\alpha\dot\alpha}}.
}
The kernel of~$D^{(-)}$ consists of functions that are of order zero in the generators $\bar\theta^{\dot\alpha}$. Taking the quotient, we find that the left-invariant vector fields are equivalent to 
\deq{
    p_*Q_{\alpha} = \pdv{ }{\theta^{\alpha}} , \quad
    p_*\bar{Q}_{\dot\alpha} =  2 \theta^{\alpha} \pdv{ }{y^{\alpha\dot\alpha}}
}
when acting on $\ker(D^{(-)})$. These are the normal formulas for the action of supersymmetry on chiral superspace. 
A \emph{chiral superfield} is an equivariant sheaf on~$N^{(+)}$; by pulling back along~$p$, we can view chiral superfields as particular examples of general superfields, which are equivariant sheaves on~$N$.

 
 \numpar[twistit][Chiral superspace for holomorphic twists]
In the holomorphic twist, $\fn_Q$ is still the sum of two involutive distributions, but these are no longer of the same dimension. We have
\deq{
\fm_Q^{(+)} = \tilde{U}(-1) \oplus L^\vee(-2), \quad
\fm_Q^{(-)} = (L^\vee \otimes \tilde{U}^\vee)(-1) \oplus L^\vee(-2).
}
By the discussion in~\S\ref{chiral} above, it is reasonable to expect a map from $\Conf(\fn)$ to $\Conf(\fm^{(+)})$, which in turn is identified with~$\Vect(\C^{2|\N-1})$. We think of $\Vect(\C^{2|\N-1})$ as being the symmetries of chiral superspace in the holomorphic setting. In each instance, we will see concretely that such a map exists.

\numpar[4dN=2hol][The cases $\N=2$ and $\N=3$]
The holomorphic twist of the $\cN=2$ algebra is equivalent to the component-field model $\mu \Conf(\fn_Q)$, which can be identified with the Dolbeault resolution of holomorphic sections of the holomorphic tangent bundle to $\C^{2|1}$. 
This is a dg Lie algebra with differential $\dbar$ and bracket which extends the usual bracket of holomorphic super vector fields. (The odd directions are treated as purely algebraic.)
We refer to~\cite{SCA} for a lengthy discussion of this example as an enhancement of the holomorphic twist of the usual supersymmetry algebra at the level of the holomorphic twist. Our results here prove that the higher Virasoro algebra arises directly as the holomorphic twist of the stress-tensor multiplet.

Similarly, the holomorphic twist of the $\cN=3$ algebra can be identified with the Dolbeault resolution of holomorphic vector fields on~$\C^{2|2}$. In each of these examples, the map from distribution-preserving vector fields on the full superspace to vector fields on the chiral superspace is in fact an isomorphism.

   \numpar[4dN=4hol][The case $\N=4$]

In this instance, $\tilde{U} = \C^3$, so that $(\fm_Q^{(+)})_1$ is three-dimensional and $(\fm_Q^{(-)})_1$ is six-dimensional. 
The component fields of $\mu \Conf(\fn_Q)$ are as follows:
\begin{itemize}[label={---}]
\item in homological degree zero, a copy of $\Vect(\C^{2|3})$, ;
\item in homological degree one, a copy of $\Pi \cO(\C^{2|3})$, starting in bidegree $(1,0)$.
\end{itemize}
Following the conventions of~\S\ref{sec: examples}, and letting $T$ denote the holomorphic tangent bundle of~$\C^2$, we can present the component fields in a table:
\begin{equation}
\begin{tikzcd}[column sep = 4 ex, row sep = 6 ex]
\Vectorstack{{T} { }} \ar[drrrr,style={out=290,in=110,looseness=0.3}]
& \Vectorstack{{\tilde{U}^\vee \otimes T} {\tilde{U}}} \ar[drr, style={out=310,in=130,looseness=0.3}]
& \Vectorstack{{\wedge^2 \tilde{U}^\vee \otimes T} {\tilde{U}^\vee \otimes \tilde{U}}} \ar[drr, style={out=330,in=150,looseness=0.3}]
& \Vectorstack{{\wedge^3 \tilde{U}^\vee \otimes T} {\wedge^2 \tilde{U}^\vee \otimes \tilde{U}}}
& \Vectorstack{{ } {\wedge^3 \tilde{U}^\vee \otimes \tilde{U}}} 
\\ 
& \ul{\C} & \tilde{U}^\vee & \wedge^2 \tilde{U}^\vee & \wedge^3 \tilde{U}^\vee
\end{tikzcd}
\end{equation}
We emphasize that this is \emph{not} the algebra of divergence-free vector fields on~$\C^{2|3}$. The only possible differentials that are equivariant for the $R$-symmetry are indicated in the diagram. They are second- and third-order differential operators, consisting either of the Laplacian or of the Laplacian following the divergence operator.

\subsection{Holomorphic twists in six dimensions}
\label{ssec:tw6d}

Motivated by the traditional classification of superconformal algebras, we address only the case of $(\N,0)$ chiral supersymmetry. Twisting supercharges always exist, and are unique up to equivalence for $\N=(1,0)$. In this case, the twisting supercharge is holomorphic, and determines a complex structure on~$V = \R^6$ via the maximal isotropic $ L = \im(\ad_Q) \cong \C^3 \subset V_\C$. 

\numpar[p:basix][The supertranslation algebra $\fn_Q$] 
The odd elements of the supertranslation algebra take the form
\deq{
\fn_1 = S_+ \otimes U,
}
where $U = \C^{2\N}$ is equipped with a nondegenerate symplectic pairing $\omega$. The even elements are $\fn_2 = V \cong \wedge^2 S_+$.
The Lie algebra of automorphisms is $\fg_0 = \lie{so}(V) \oplus \lie{gl}(1) \oplus \lie{sp}(U)$; we recall the exceptional isomorphism $\lie{so}(\wedge^2 S_+) \cong \lie{sl}(S_+)$.

Without loss of generality, $Q$ is again of the form $u \otimes v$ for a pair of highest-weight vectors. Then $S_+ = \C\cdot u \oplus L$ and $U = \C\cdot v \oplus \C\cdot \bar{v} \oplus \tilde{U}$, where $\tilde{U} = \C^{2\N-2}$ is again symplectic. (We can identify it with the symplectic reduction $U \sympred \C\cdot v$.) By abuse of notation, we identify $L \subset S_+$ with its  image $u \wedge S_+ \subset \wedge^2 S_+ \cong V$.

The image of~$\fg_0$ under~$\ad_Q$ consists of all elements of the form $u \otimes \tilde{U}$, $u \otimes \bar{v}$, and $L \otimes v$. Acting on~$\fn_1$, $\ad_Q$ maps $L \otimes \bar{v}$ isomorphically onto $L \subset V$. Thus we have that
\deq{
(\fn_Q)_1 = L \otimes \tilde{U}, 
\quad
(\fn_Q)_2 = L^\vee \cong \wedge^2 L.
}
The bracket is given by the wedge product on~$L$ and the symplectic contraction on~$\tilde{U}$. The $R$-symmetry in the holomorphic twist is $\lie{sp}(\tilde{U})$, and the Lorentz symmetry is~$\lie{sl}(L^\vee)$. 

\numpar[eg:6hol][The case $\N=(2,0)$]
In this case, the canonical multiplet is the abelian tensor multiplet. The twisted super Poincar\'e algebra and the twist of the tensor multiplet were computed in~\cite{spinortwist,twist20}.
The holomorphic twist of the canonical multiplet $A^\bu(\cO_{Y})^{Q_{\hol}} \simeq A^\bu(\cO_{Y_{Q_{\hol}}})$ admits the following simple description as a complex of sheaves on $\C^3$:
\begin{equation}
	\begin{tikzcd}[row sep=0.3cm, column sep=0.2cm]
	\Omega^{0,\bu}  \ar[dr,"\partial"] & \\
	& \Omega^{1,\bu} & \Omega^{0,\bu}(\Pi \tilde{U}) \otimes K^{1/2}.
	\end{tikzcd}
\end{equation}
As discussed at length in~\cite{twist20}, the theory describes a system related to the intermediate Jacobian together with holomorphic symplectic bosons.

Since $\tilde{U} = \C^2$ here, 
the description above makes it clear that the nilpotence variety of~$\fn_Q$ can be identified with the variety of two-by-three matrices with rank less or equal to one. The defining ideal $I$ is spanned by the three two-by-two minors. The projective version of the nilpotence variety is thus the Segre embedding $\P^1 \times \P^2 \subset \P^5$. We can thus understand all line bundles and the corresponding multiplets by techniques analogous to those used in~\cite{6dbundles}.
We denote the projection maps on the first and second factors by $\pi_{1,2}$ respectively.
All equivariant line bundles on $\P^1 \times \P^2$ can be obtained via pullback from the respective projective factors. Each is isomorphic to exactly one of the lines
\begin{equation}
	\cO(m,n) := \pi_1^* \cO_{\P^1}(m) \otimes_{\cO_{\P^1 \times \P^2}} \pi^*_2 \cO_{\P^2}(n).
\end{equation}
There is a corresponding family of multiplets, obtained by applying the pure spinor construction to the 
graded global section module 
\begin{equation}
	\Gamma_{(m,n)} = \Gamma_*(\cO(m,n)) =  \bigoplus_{k \in \ZZ} H^0(\cO(m+k,n+k)) .
\end{equation}

We now describe the holomorphic twist of $\Conf(\lie{n}_Q)$.
Like the canonical multiplet, this multiplet is associated to a line bundle on the nilpotence variety.
Indeed, Theorem~\ref{thm: coker} implies that $\Conf(\lie{n}_Q)$ arises from the cokernel of the map
\begin{equation}
\varphi: (L \otimes \tilde{U}) \otimes R/I \to L^\vee \otimes R/I
\end{equation}
induced by the bracket. This module is isomorphic to $\Gamma_{(0,1)}$: one can see easily that the weight-zero piece transforms in~$L^\vee$, which is isomorphic to~$H^0(\cO(0,1))$.

For the component fields of the multiplet, one recovers the local Lie algebra
\begin{equation}~\label{eq: comp-der}
	\mu \Conf(\lie{n}_{\hol}) = 
	\begin{bmatrix}
	\Omega^{0,\bu}(T) & \tilde{U} \otimes \Omega^{1,\bu} & \lie{sp}(\tilde{U}) \otimes \Omega^{0,\bu} 
	\end{bmatrix}.
\end{equation}
This is supported in degree zero at the holomorphic level, and thus acquires a strict Lie structure. On flat space, we precisely recover the exceptional infinite-dimensional super Lie algebra $E(3|6)$~\cite[\S4.4]{ChengKac}. Our results thus prove that $E(3|6)$ is the holomorphic twist of the $\N=(2,0)$ stress tensor multiplet.

\printbibliography

\end{document}